\DeclareMathOperator*{\bigtimes}{\text{\large $\times$}}
\newcommand{\tabincell}[2]{\begin{tabular}{@{}#1@{}}#2\end{tabular}}
\newtheorem{theorem}{Theorem}
\newtheorem{lemma}{Lemma}
\newtheorem{fact}{Fact}
\newtheorem{remark}{Remark}
\newtheorem{assumption}{Assumption}
\newtheorem{definition}{Definition}
\newtheorem{proposition}{Proposition}
\newtheorem{corollary}{Corollary}
\begin{document}

\title{Distributed Neural Policy Gradient Algorithm for Global Convergence of Networked Multi-Agent Reinforcement Learning}
\author{Pengcheng Dai,~Yuanqiu Mo,~Wenwu Yu,~\IEEEmembership{Senior Member,~IEEE},~and~Wei Ren,~\IEEEmembership{Fellow,~IEEE}
\thanks{This work was supported in part by the National
Science and Technology Major Project of China under Grant 2022ZD0120001; in part by the National Natural Science Foundation of China under Grant 62233004 and Grant 62303112;  in part by the Jiangsu Provincial Scientific Research Center of Applied Mathematics under Grant BK20233002; and in part by the Natural Science Foundation of Jiangsu Province under Grant BK20230826. \emph{(Corresponding author: Wenwu Yu.)}}
\thanks{P. Dai and Y. Mo are with School of Mathematics, Southeast University, Nanjing 210096, China
(e-mail: Jldaipc@163.com; yuanqiumo@seu.edu.cn).
}
\thanks{W. Yu is with the School of Mathematics, Frontiers Science Center for Mobile Information Communication and Security, Southeast University, Nanjing 210096, China and also with the Purple Mountain Laboratories, Nanjing 211111, China (e-mail: wwyu@seu.edu.cn).}
\thanks{W. Ren is with the Department of Electrical and Computer
Engineering, University of California, Riverside, CA 92521, USA. (ren@ece.ucr.edu).}
}

\markboth{Manuscript For Review}
{Manuscript For Review}

\maketitle

\begin{abstract}
This paper studies the networked multi-agent reinforcement learning (NMARL) problem, where the objective of agents is
to collaboratively maximize the discounted average cumulative rewards.
Different from the existing methods that suffer from poor expression due to linear function approximation, we propose a distributed neural policy gradient algorithm that features two innovatively designed neural networks, specifically for the approximate $Q$-functions and policy functions of agents.
This distributed neural policy gradient algorithm consists of two key components: the distributed critic step and the decentralized actor step.
In the distributed critic step, agents receive the approximate $Q$-function parameters from their neighboring agents via a time-varying communication networks to collaboratively evaluate the joint policy.
In contrast, in the decentralized actor step, each agent updates its local policy parameter solely based on its own approximate $Q$-function.
In the convergence analysis, we first establish the global convergence of agents for the joint policy evaluation in the distributed critic step.
Subsequently, we rigorously demonstrate the global convergence of the overall distributed neural policy gradient algorithm with respect to the objective function.
Finally, the effectiveness of the proposed algorithm is demonstrated by comparing it with a centralized algorithm through simulation in the robot path planning environment.
\end{abstract}
\begin{IEEEkeywords}
Networked multi-agent reinforcement learning, neural networks, distributed neural policy gradient algorithm, global convergence.
\end{IEEEkeywords}

\IEEEpeerreviewmaketitle

\section{Introduction}
Recently, reinforcement learning (RL)~\cite{Sutton1998} has garnered growing attention in both academic and industrial domains.
As an extension of single-agent RL (SARL), multi-agent RL (MARL) has demonstrated remarkable performance in various complex scenarios, such as smart grid~\cite{Dai2020TII}-\cite{Dai2021TCYB}, intelligent transportation~\cite{Chu2020TITS}-\cite{Wang2021TCYB},  cyber-physical systems~\cite{Ding2017Automatica}-\cite{Dai2020TNSE}, and wireless communication~\cite{Nasir2019AreasCommun}-\cite{Zhao2019TWC}, etc.
\par
To address the MARL problem, a commonly used method is to conceptualize it as SARL problem.
However, the centralized algorithm in \cite{Wei2018ICKDD} necessitates the use of global state-action pair, which limits its scalability in large-scale scenarios.
To overcome the scalability challenge associated with centralized method, a feasible solution is to model each agent as an autonomous individual for independent learning, as proposed in~\cite{Tan1993}.
Although the independent learning approach~\cite{Tan1993} successfully addresses scalability issue, it introduces a new challenge known as the ``non-stationary environment''.
In this context, each agent conducting independent learning views other agents as components of a dynamic environment.
As the policies of agents are updated, the environment perceived by each agent continuously evolves.
Moreover, it primarily focuses on the local optimization of agents and fails to ensure the cooperative optimization objective in the MARL problems.
\par
In order to tackle the challenges of scalability and non-stationarity, several value-based algorithms within the centralized training with decentralized execution (CTDE) framework have been proposed for solving the MARL problem with discrete and finite state-action spaces~\cite{QMIX}-\cite{QPLEX}.
In CTDE framework, agents establish local $Q$-functions for executing their actions while updating their $Q$-function parameters via a centralized controller.
Given the stability and convergence challenges faced by the CTDE framework when addressing MARL problem with large or continuous state-action spaces, several actor-critic (AC)-based algorithms utilizing the centralized critic with decentralized actor (CCDA) framework have been further developed in~\cite{COMA}-\cite{DOP}.
\par
In sharp contrast to the remarkable empirical successes of MARL algorithms~\cite{QMIX}-\cite{DOP} in virtual environments, the theoretical comprehension of these algorithms, especially concerning neural networks, remains relatively constrained.
Specifically, in AC-based methods~\cite{COMA}-\cite{DOP}, the alternating contributions of critic step and actor step are designed to optimize the objective function.
However, critic step may converge to an undesirable stationary point, which can introduce bias and potentially cause divergence in estimated policy gradient during actor step.
This uncertainty hinders the global convergence of AC-based methods.
\par
Although the theoretical analysis of AC-based algorithms presents significant challenges, substantial efforts have been devoted to establishing the convergence of AC-based algorithms with linear approximation in networked MARL (NMARL) domain, where agents are capable of exchanging information with their neighboring agents via a communication network.
For instance, a fully decentralized AC algorithm based on linear approximation was introduced to address the NMARL problem over undirected communication networks, providing a rigorous convergence guarantee in~\cite{Zhang2018ICML}.
For more general settings involving fixed and time-varying directed communication networks, two distinct distributed linear approximation AC algorithms with provable convergence guarantees were separately proposed in~\cite{Dai2022TNNLS}.
\par
Owing to the overly ideal assumption of linear approximation and its limitations in expressive capability in~\cite{Zhang2018ICML}-\cite{Dai2022TNNLS}, an AC-based algorithm utilizing neural network approximation was proposed for the SARL problem in~\cite{Wang2019}.
Unfortunately, the neural network proposed in~\cite{Wang2019} is specifically designed for the SARL problem and cannot be directly adapted to the NMARL problem.
Moreover, in the NMARL problem, the decisions made by agents are intricately interdependent, and communication among them may be subject to uncertainty.
Consequently, designing a neural network-based approximation algorithm with provable convergence guarantees for the NMARL problem remains a significant challenge.
\begin{table}[h!]
\centering
\caption{Overview of existing AC-based algorithms}
\begin{tabular}{|@{\hspace{0.3em}}c@{\hspace{0.3em}}|@{\hspace{0.3em}}c@{\hspace{0.3em}}%
                |@{\hspace{0.3em}}c@{\hspace{0.3em}}|c@{\hspace{0.3em}}|c@{\hspace{0.3em}}|}
  \hline
  References & \tabincell{c}{Multi-agent\\ RL} & Distributed & \tabincell{c}{Function\\ approximation} & \tabincell{c}{Convergence\\ analysis}\\
  \hline\hline
  [14]-[15] & $\surd$ & $\bigtimes$  & Neural network & $\bigtimes$\\
  $\text{[}$16]-[17]& $\surd$ & $\surd$ & Linear & $\surd$\\
  $\text{[}$18] & $\bigtimes$ & $\bigtimes$ & Neural network & $\surd$\\
  This work & $\surd$ & $\surd$ &  Neural network & $\surd$\\
  \hline
\end{tabular}
\smallskip\\
\footnotesize{$\surd$ means this feature is involved}\;\;
\footnotesize{\hspace*{1.2em}$\bigtimes$ means this feature is uninvolved}\label{Tab1}
\vspace{-2ex}
\end{table}
\par
The primary motivation of this paper is to tackle the aforementioned challenges by developing a distributed neural policy gradient algorithm with global convergence for the NMARL problem. The key contributions of our work are summarized in Table~\ref{Tab1} and elaborated as follows.
\par
(1) To address the poor expression of the existing methods with linear function approximation, two innovative neural networks are designed for the approximate $Q$-functions and policy functions of agents, respectively.
On this foundation, we propose a distributed neural policy gradient algorithm that consists two key components: the distributed critic step and the decentralized actor step.
Specifically, in the distributed critic step, agents receive the approximate $Q$-function parameters of their neighboring agents via a time-varying communication networks to collaboratively evaluate the joint policy.
In contrast, in the decentralized actor step, agents independently update their local policy parameters based on their respective approximate $Q$-functions.
\par
(2) In the convergence analysis, we first establish the global convergence of agents for the joint policy evaluation in the distributed critic step.
Subsequently, we rigorously demonstrate the global convergence
of the overall distributed neural policy gradient algorithm with
respect to the objective function.
\par
(3)
The simulation results in the robot path planning environment demonstrate that the performance of our algorithm, as measured by the objective function, closely approximates that of the centralized algorithm that serves as a benchmark for globally optimal performance.
This effectively validates the global convergence of our proposed algorithm.
\par
The rest of this paper is organized as follows.
Section~\ref{SectionIIpreliminaries} presents the notations and foundational knowledge related to network.
The model and relevant knowledge concerning the NMARL problem are described in Section~\ref{SectionProblemformulation}.
A distributed neural policy gradient algorithm is proposed in Section~\ref{DistributedNeuralpolicygradientalgorithm}.
The global convergence of agents for the joint policy evaluation in the distributed critic step
is established in Section~\ref{SectionIV}.
Section~\ref{theconvergenceofoverallalgorithm} proves the global convergence of the overall algorithm.
Some simulation results of our algorithm are provided in Section~\ref{SectionSimulations}.
Finally, Section~\ref{SectionVConclusions} presents the conclusion and discusses further work.
\section{Preliminaries}\label{SectionIIpreliminaries}
\subsection{Notations}
In this paper, $\mathbb{R}$ and $\mathbb{R}^{n}$ denote the set of real numbers and the set of $n$-dimensional real-valued vectors, respectively.
$\lfloor m\rfloor$ represents the maximum integer not greater than $m$.
For a vector $\phi\in\mathbb{R}^{n}$, $\|\phi\|_{2}$ is denoted as the $\mathcal{L}_{2}$-norm of vector $\phi$.
$\mathds{1}\{\cdot\}$ denotes the indicator function, while $\mathbf{1}_{N}$ signifies an $N$-dimensional vector where all elements are equal to 1.
$P_{S_{B}}(\cdot)$ is a projection operator on space $S_{B}$.
$\mathbb{E}_{\mathrm{init}}[\cdot]$ represents the expectation with respect to random initialization, while $\mathbb{E}[\cdot]$ denotes the expectation with respect to all sources of randomness.
\subsection{Network}
$\mathcal{G}_{t}(\mathcal{N},\mathcal{E}_{t})$ denotes a directed network, where $\mathcal{N}=\{1,\cdots,N\}$ represents the set of agents and $\mathcal{E}_{t}\subseteq\mathcal{N}\times\mathcal{N}$ signifies the edge set at time $t$.
An edge $e_{ij}(t) \in \mathcal{E}_t$ indicates that agent $i$ is capable of receiving information from its neighboring agent $j$ at time $t$.
In $\mathcal{G}_{t}$, define $\mathcal{N}_{i}(t)=\{j|e_{ij}(t)\in\mathcal{E}_{t},\forall j\in\mathcal{N}\}$ as the neighborhood of agent $i$.
We call that agent $i_{1}$ has a directed path to agent $i_{k}$ in $\mathcal{G}(t)$, if there exists an ordered edges $i_{1}\rightarrow i_{2}\rightarrow \cdots\rightarrow i_{k}$ with $e_{i_{j+1},i_{j}}(t)\in\mathcal{E}(t)$ for all $j=1,\cdots,k-1$.
$\mathcal{G}_{t}$ is strongly connected if there exists at least one directed path between any two distinct agents.
Let $A(t) = [a_{ij}(t)]_{N \times N}$ denote the weight matrix of $\mathcal{G}_t$, where $a_{ij}(t) > 0$ if $j \in \mathcal{N}_i(t)$, and $a_{ij}(t) = 0$ otherwise.
The time-varying networks $\{\mathcal{G}_{t}(\mathcal{N},\mathcal{E}_{t})\}_{t\geq0}$ is said to be uniformly strongly
connected if there exists an integer $D$ such that $\big(\mathcal{N},\bigcup_{t=kD}^{(k+1)D-1}\mathcal{E}_{t}\big)$ is
strongly connected for any $k\geq0$.
\section{NMARL problem}\label{SectionProblemformulation}
\subsection{Description of NMARL problem}\label{describednetworkMARL}
NMARL problem can be described as $\big(\mathcal{G}_{t}(\mathcal{N},\mathcal{E}_{t}),$ $\{\mathcal{S}_{i}\}_{i\in\mathcal{N}},\{\mathcal{A}_{i}\}_{i\in\mathcal{N}},\mathcal{P},\{r_{i}\}_{i\in\mathcal{N}},\zeta,\gamma\big)$, where the detailed description of each element is summarized as follows.
\par
\textbf{State and action}:
$\mathcal{S}_{i}$ and $\mathcal{A}_{i}$ represent the local state space and the local action space of agent $i\in\mathcal{N}$, respectively.
$s_{i}\in\mathcal{S}_{i}$ and $a_{i}\in\mathcal{A}_{i}$ denote the local state and local action of agent $i$, respectively.
Define the global state as $\bm{s} = (s_1, \cdots, s_N) \in \mathcal{S} = \mathcal{S}_1 \times \cdots \times \mathcal{S}_N$, and analogously define the global action as $\bm{a} = (a_1, \cdots, a_N) \in \mathcal{A} = \mathcal{A}_1 \times \cdots \times \mathcal{A}_N$.
$\bm{s}_t = (s_{1,t}, \cdots, s_{N,t})$ and $\bm{a}_t = (a_{1,t}, \cdots, a_{N,t})$ denote the global state and global action at time $t$, respectively.
\par
\textbf{State transition probability function}:
$\mathcal{P}(\bm{s}'|\bm{s},\bm{a}):\mathcal{S}\times\mathcal{A}\times\mathcal{S}\rightarrow[0,1]$ is the state transition probability function.
\par
\textbf{Reward function}:
$r_{i}(\bm{s},\bm{a}):\mathcal{S}\times\mathcal{A}\rightarrow\mathbb{R}$ denotes the local reward function for agent $i$.
Define $\bar{r}(\bm{s},\bm{a})=(1/N)\sum_{i=1}^{N}r_{i}(\bm{s},\bm{a})$ as the average reward function.
Specifically, define $r_{i,t} = r_{i}(\bm{s}_{t}, \bm{a}_{t})$ as the instantaneous reward for agent $i$ at time $t$, and $\bar{r}_{t} = (1/N) \sum_{i=1}^{N} r_{i,t}$ as the average reward across all agents at time $t$.
Similar to the NMARL problems discussed in~\cite{Zhang2018ICML}-\cite{Dai2022TNNLS} and~\cite{Doan2019ICML}-\cite{Suttle2020IFAC}, we assume that each agent can access the global state-action, while the reward function remains private and is solely utilized by the respective agent.
\par
\textbf{Initial state distribution}: $\zeta$ is the distribution of the initial state $\bm{s}_{0}$
\par
\textbf{Discount factor}: $0<\gamma<1$ is the discount factor.
\par
Define the local policy of each agent $i$ as  $\pi_{\theta_{i}}=\pi_{i}(\cdot|\bm{s};\theta_{i})$, where $\theta_{i}$ represents the local policy parameter of agent $i$.
The joint policy of all agents is defined as $\bm{\pi}_{\bm{\theta}}(\cdot|\bm{s})=\prod_{i=1}^{N}\pi_{i}(\cdot|\bm{s};\theta_{i})$, where $\bm{\theta}=(\theta^{\top}_{1},\cdots,\theta^{\top}_{N})^{\top}$ is the joint policy parameter.
In the NMARL problem, define $J(\bm{\pi_{\theta}})$ as the discounted average cumulative rewards under the joint policy $\bm{\pi_{\theta}}$, which is formulated as
\begin{align}\label{thedefinelongtermreturninMARL}
J(\bm{\pi_{\theta}})=(1-\gamma)\mathbb{E}_{\bm{s}\sim\zeta}\Big[\sum_{t=0}^{\infty}\gamma^{t}\bar{r}_{t}\Big|\bm{s}_{0}=\bm{s},\bm{a}_{t}\!\sim\!\bm{\pi}_{\bm{\theta}}(\cdot|\bm{s}_{t})\Big].
\end{align}
The objective of agents is to find the optimal policy to maximize $J(\bm{\pi_{\theta}})$, i.e.,
$\max_{\bm{\pi_{\theta}}}J(\bm{\pi_{\theta}})$.
\subsection{Policy gradient in NMARL problem}
In the NMARL problem, for any joint policy $\bm{\pi_{\theta}}$, define the global $Q$-function as
\begin{align}\label{theglobalQfunction}
Q^{\bm{\pi}_{\bm{\theta}}}(\bm{s},\bm{a})=(1-\gamma)\mathbb{E}_{\bm{\pi_{\theta}}}\Big[\sum_{t=0}^{\infty}\gamma^{t}\bar{r}_{t}\Big|\bm{s}_{0}=\bm{s},\bm{a}_{0}=\bm{a}\Big] \end{align}
and the Bellman evaluation operator $\mathcal{T}^{\bm{\pi}_{\bm{\theta}}}(\cdot)$ as
\begin{align}\label{theBellmanevaluationoperatorofpiTheta}
\big[\mathcal{T}^{\bm{\pi}_{\bm{\theta}}}(Q)\big](\bm{s},\bm{a})=&\mathbb{E}_{\bm{\pi_{\theta}}}\big[(1-\gamma)\bar{r}(\bm{s},\bm{a})+\gamma Q(\bm{s}',\bm{a}')\big|\notag\\
&\bm{s}'\sim\mathcal{P}(\cdot|\bm{s},\bm{a}),\bm{a}'\sim\bm{\pi_{\theta}}(\cdot|\bm{s}')\big],
\end{align}
where $Q\in\mathbb{R}^{|\mathcal{S}||\mathcal{A}|}$.
 According to the definition of $Q^{\bm{\pi}_{\bm{\theta}}}(\bm{s},\bm{a})$ in (\ref{theglobalQfunction}), we have that $Q^{\bm{\pi}_{\bm{\theta}}}=[Q^{\bm{\pi}_{\bm{\theta}}}(\bm{s},\bm{a})]_{(\bm{s},\bm{a})\in\mathcal{S}\times\mathcal{A}}$ is the fixed point of $\mathcal{T}^{\bm{\pi}_{\bm{\theta}}}(\cdot)$, i.e.,
\begin{align}\label{thefixedpoint}
Q^{\bm{\pi}_{\bm{\theta}}}(\bm{s},\bm{a})=\big[\mathcal{T}^{\bm{\pi}_{\bm{\theta}}}(Q^{\bm{\pi}_{\bm{\theta}}})\big](\bm{s},\bm{a}),\forall(\bm{s},\bm{a})\in\mathcal{S}\times\mathcal{A}.
\end{align}
Denote $\nu_{\bm{\pi_{\theta}}}(\bm{s})$ as the state visitation measure of $\bm{s}$ under $\bm{\pi_{\theta}}$ and initial
state distribution $\zeta$, which is described as
\begin{align}
\nu_{\bm{\pi_{\theta}}}(\bm{s})=(1-\gamma)\mathbb{E}_{\bm{s}_{0}\sim\zeta,\bm{\pi_{\theta}}}\Big[\sum_{t=0}^{\infty}\gamma^{t}\mathbb{P}(\bm{s}_{t}=\bm{s})\Big|\bm{s}_{0}\Big],\label{thestatevisitationmeasureinSARL}
\end{align}
where $\mathbb{P}(\bm{s}_{t}=\bm{s})$ represents the probability of $\bm{s}_{t}=\bm{s}$ occurring at time $t$.
Moreover, denote $\sigma_{\bm{\pi_{\theta}}}(\bm{s},\bm{a})$ as the state-action visitation measure of $(\bm{s},\bm{a})$ and satisfy $\sigma_{\bm{\pi_{\theta}}}(\bm{s},\bm{a})=\nu_{\bm{\pi_{\theta}}}(\bm{s})\bm{\pi_{\theta}}(\bm{a}|\bm{s})$.
\par
Based on the above definitions, the policy gradient theorem~\cite{Sutton2000} of the NMARL problem is described as follows.
\begin{lemma}\label{thelemmaofpolicygradienttheorem}
For any joint policy $\bm{\pi_{\theta}}$, the policy gradient of $J(\bm{\pi}_{\bm{\theta}})$ with respect to $\theta_{i}$ is represented as
\begin{align}\label{thepolicygradienttheorem}
\nabla_{\theta_{i}}J(\bm{\pi}_{\bm{\theta}})=\mathbb{E}_{\sigma_{\bm{\pi}_{\bm{\theta}}}}\big[Q^{\bm{\pi}_{\bm{\theta}}}(\bm{s},\bm{a})\nabla_{\theta_{i}}\log\pi_{i}(a_{i}|\bm{s};\theta_{i})\big].
\end{align}
\end{lemma}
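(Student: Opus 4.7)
The plan is to follow the standard derivation of the policy gradient theorem of Sutton et al., adapted to the joint-policy factorization that makes this the multi-agent statement. First I would introduce the state value function $V^{\bm{\pi}_{\bm{\theta}}}(\bm{s})=\mathbb{E}_{\bm{a}\sim\bm{\pi}_{\bm{\theta}}(\cdot|\bm{s})}[Q^{\bm{\pi}_{\bm{\theta}}}(\bm{s},\bm{a})]$, so that by (\ref{thedefinelongtermreturninMARL}) and the fixed-point identity (\ref{thefixedpoint}) one has $J(\bm{\pi}_{\bm{\theta}})=\mathbb{E}_{\bm{s}\sim\zeta}[V^{\bm{\pi}_{\bm{\theta}}}(\bm{s})]$ and the Bellman recursion $V^{\bm{\pi}_{\bm{\theta}}}(\bm{s})=\sum_{\bm{a}}\bm{\pi}_{\bm{\theta}}(\bm{a}|\bm{s})\bigl[(1-\gamma)\bar r(\bm{s},\bm{a})+\gamma\sum_{\bm{s}'}\mathcal{P}(\bm{s}'|\bm{s},\bm{a})V^{\bm{\pi}_{\bm{\theta}}}(\bm{s}')\bigr]$. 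Differentiating in $\theta_i$ and using the log-derivative trick on the outer sum yields a term proportional to $\nabla_{\theta_i}\log\bm{\pi}_{\bm{\theta}}(\bm{a}|\bm{s})$ plus a residual in $\nabla_{\theta_i}V^{\bm{\pi}_{\bm{\theta}}}(\bm{s}')$ that will drive the recursion.

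The key multi-agent observation is the factorization $\bm{\pi}_{\bm{\theta}}(\bm{a}|\bm{s})=\prod_{j=1}^{N}\pi_{j}(a_{j}|\bm{s};\theta_{j})$. Since only the $i$-th factor depends on $\theta_i$, this gives
\begin{align*}
\nabla_{\theta_{i}}\log\bm{\pi}_{\bm{\theta}}(\bm{a}|\bm{s})=\nabla_{\theta_{i}}\log\pi_{i}(a_{i}|\bm{s};\theta_{i}),
\end{align*}
which is the only place where the networked-MARL structure enters the derivation. Substituting this into the differentiated Bellman identity and iterating through the transition kernel $\mathcal{P}$ unrolls $\nabla_{\theta_i}V^{\bm{\pi}_{\bm{\theta}}}(\bm{s})$ into a geometric sum
\begin{align*}
\nabla_{\theta_{i}}V^{\bm{\pi}_{\bm{\theta}}}(\bm{s})=\sum_{t=0}^{\infty}\gamma^{t}\,\mathbb{E}\bigl[Q^{\bm{\pi}_{\bm{\theta}}}(\bm{s}_{t},\bm{a}_{t})\nabla_{\theta_{i}}\log\pi_{i}(a_{i,t}|\bm{s}_{t};\theta_{i})\,\big|\,\bm{s}_{0}=\bm{s}\bigr],
\end{align*}
where the inner expectation is taken under $\bm{a}_{t}\sim\bm{\pi}_{\bm{\theta}}(\cdot|\bm{s}_{t})$ and $\bm{s}_{t+1}\sim\mathcal{P}(\cdot|\bm{s}_{t},\bm{a}_{t})$.

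Finally I would average over $\bm{s}\sim\zeta$, move the $(1-\gamma)$ normalization inside the time sum, and recognize the definition (\ref{thestatevisitationmeasureinSARL}) of $\nu_{\bm{\pi}_{\bm{\theta}}}$ together with the identity $\sigma_{\bm{\pi}_{\bm{\theta}}}(\bm{s},\bm{a})=\nu_{\bm{\pi}_{\bm{\theta}}}(\bm{s})\bm{\pi}_{\bm{\theta}}(\bm{a}|\bm{s})$. This collapses the double sum over time and states into a single expectation under $\sigma_{\bm{\pi}_{\bm{\theta}}}$, yielding exactly (\ref{thepolicygradienttheorem}).

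The main obstacle is purely bookkeeping: tracking the $(1-\gamma)$ factors that appear in both $J(\bm{\pi}_{\bm{\theta}})$ and $Q^{\bm{\pi}_{\bm{\theta}}}$ so that the normalization in front of $\nu_{\bm{\pi}_{\bm{\theta}}}$ matches up correctly, and justifying the interchange of $\nabla_{\theta_{i}}$ with the infinite sum and expectations (which is standard under bounded rewards and smooth policy parameterization). Once the product factorization of $\bm{\pi}_{\bm{\theta}}$ is exploited to replace $\log\bm{\pi}_{\bm{\theta}}$ by $\log\pi_{i}$, nothing else distinguishes the argument from the classical single-agent proof, so no genuinely new technique is required beyond this one substitution.
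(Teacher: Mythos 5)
The paper offers no proof of Lemma~\ref{thelemmaofpolicygradienttheorem} at all: it is imported wholesale from the policy gradient theorem of~\cite{Sutton2000}. Your derivation---differentiate the Bellman recursion for $V^{\bm{\pi}_{\bm{\theta}}}$, unroll into a geometric sum over the transition kernel, and use the product factorization $\bm{\pi}_{\bm{\theta}}(\bm{a}|\bm{s})=\prod_{j}\pi_{j}(a_{j}|\bm{s};\theta_{j})$ to replace $\nabla_{\theta_{i}}\log\bm{\pi}_{\bm{\theta}}$ by $\nabla_{\theta_{i}}\log\pi_{i}$---is precisely the argument that citation stands in for, and you have correctly isolated the only genuinely multi-agent ingredient.

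The one step you explicitly defer to ``bookkeeping,'' however, is the one that does not close under the paper's literal definitions. Carrying your unrolling through gives $\nabla_{\theta_{i}}J(\bm{\pi}_{\bm{\theta}})=\sum_{t\geq0}\gamma^{t}\,\mathbb{E}\big[Q^{\bm{\pi}_{\bm{\theta}}}(\bm{s}_{t},\bm{a}_{t})\nabla_{\theta_{i}}\log\pi_{i}(a_{i,t}|\bm{s}_{t};\theta_{i})\big]$, and since (\ref{thestatevisitationmeasureinSARL}) defines $\nu_{\bm{\pi_{\theta}}}(\bm{s})=(1-\gamma)\sum_{t\geq0}\gamma^{t}\mathbb{P}(\bm{s}_{t}=\bm{s})$, converting the time sum into an expectation under $\sigma_{\bm{\pi_{\theta}}}$ costs a factor $\frac{1}{1-\gamma}$. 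One therefore lands on $\nabla_{\theta_{i}}J(\bm{\pi}_{\bm{\theta}})=\frac{1}{1-\gamma}\mathbb{E}_{\sigma_{\bm{\pi}_{\bm{\theta}}}}\big[Q^{\bm{\pi}_{\bm{\theta}}}(\bm{s},\bm{a})\nabla_{\theta_{i}}\log\pi_{i}(a_{i}|\bm{s};\theta_{i})\big]$, not (\ref{thepolicygradienttheorem}): because $Q^{\bm{\pi}_{\bm{\theta}}}$ in (\ref{theglobalQfunction}) carries the same $(1-\gamma)$ normalization as $J$, there is nothing left to cancel the $\frac{1}{1-\gamma}$ coming from $\nu_{\bm{\pi_{\theta}}}$. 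The displayed identity is exact only under Sutton's convention in which the $Q$-function is the \emph{unnormalized} return $\mathbb{E}[\sum_{t}\gamma^{t}\bar{r}_{t}\,|\,\bm{s}_{0},\bm{a}_{0}]$, so your proof should either carry the constant $\frac{1}{1-\gamma}$ explicitly or state that $Q$ must be renormalized, rather than asserting that the factors match up.
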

\par
Lemma~\ref{thelemmaofpolicygradienttheorem} provides the policy gradient for agent $i$.
Nevertheless, the calculation of  $Q^{\bm{\pi}_{\bm{\theta}}}(\bm{s},\bm{a})$ poses substantial challenges, especially in large-scale scenarios.
To tackle this issue, the prevalent approach is to approximate $Q^{\bm{\pi}_{\bm{\theta}}}(\bm{s},\bm{a})$.
\subsection{Approximate $Q$-function in NMARL problem}
For the sake of concise expression in the subsequent discussion, we denote the global state-action pair $(\bm{s},\bm{a})$ by $\bm{z}$. Unlike existing linear function approximation methods, a novel neural network architecture (serving as the approximate function) is designed, as illustrated in Fig.~\ref{neuralnetwork}, to approximate the global $Q$-function $Q^{\bm{\pi_{\theta}}}(\bm{z})$ in (\ref{theglobalQfunction}).
\begin{figure}[!htb]
\centering
\includegraphics[width=0.7\hsize]{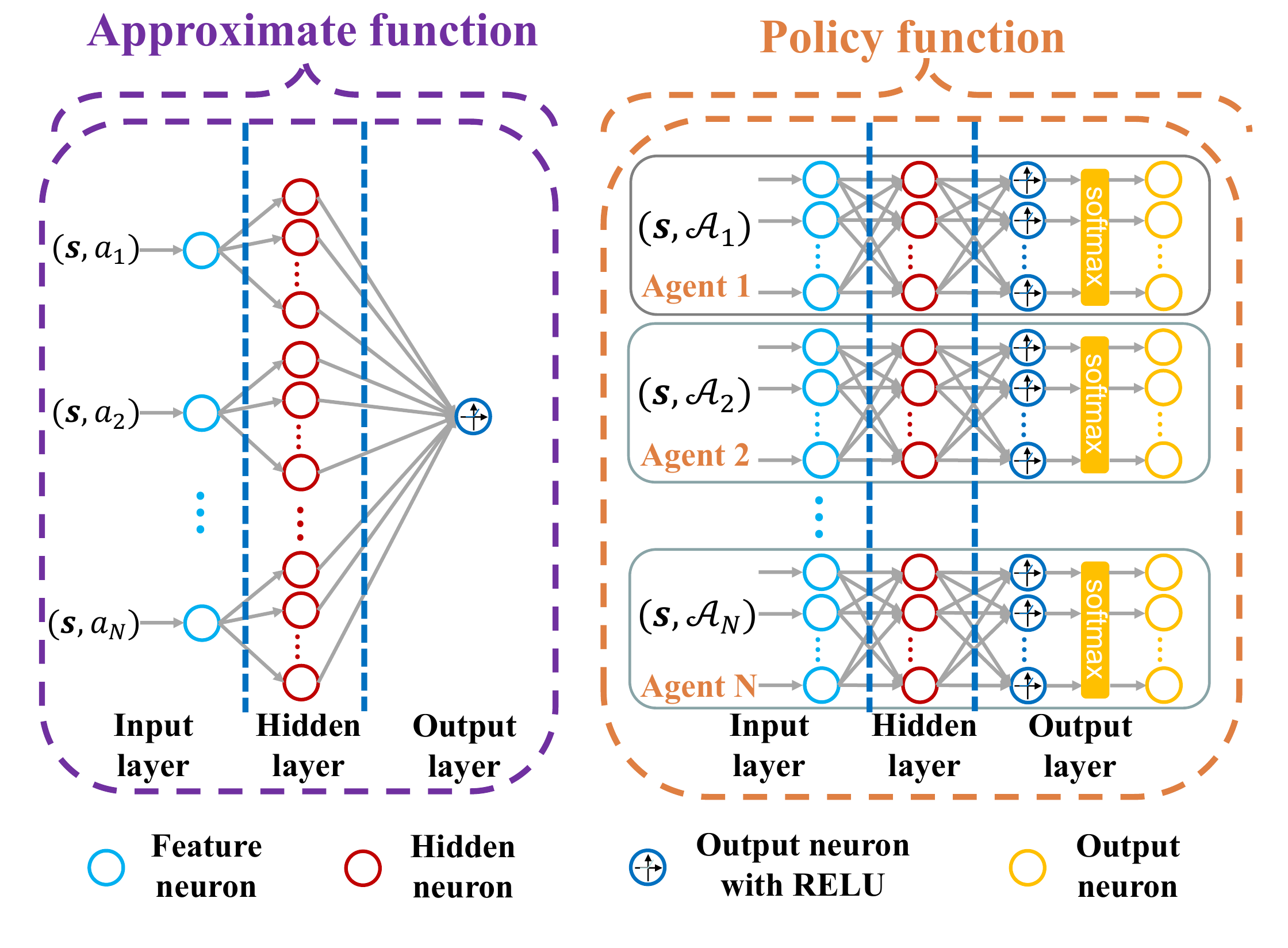}
\caption{Two novel neural
networks for the approximate function and policy function, respectively.}\label{neuralnetwork}
\end{figure}
As depicted in Fig.~\ref{neuralnetwork}, the approximate function is represented as
\begin{align}\label{origitwo-layerneuralnetworkofQfunction}
\widehat{Q}(\bm{z};W')\!=\!\frac{1}{(mN)^{p}}\sum_{r=1}^{mN}b_{r}\mathrm{ReLU}\big((W'_{r})^{\top}x_{\lfloor(r-1)/m\rfloor+1}\big),
\end{align}
where $x_{i}=\phi_{i}(\bm{s},a_{i})\in\mathbb{R}^{d}$ is the feature vector of state-action pair $(\bm{s},a_{i})$ and satisfies $\|x_{i}\|_{2}=1$,
$W'\in\mathbb{R}^{dmN}$ is the approximate parameter, $m$ is the number of neurons, and $\mathrm{ReLU}(\cdot)$ is the rectified linear unit activation function.
For any joint policy $\bm{\pi_{\theta}}$,
denote $\varsigma_{\bm{\pi_{\theta}}}$ as the stationary distribution of $\bm{z}$.
The approximate parameter $W'$ can be optimized by minimizing the following mean-squared Bellman error (MSBE) problem:
\begin{align}\label{theMSBEofmathbfWfirst}
\min_{W'\in S^{W}_{B}}\mathbb{E}_{\varsigma_{\bm{\pi_{\theta}}}}\Big[\Big(\widehat{Q}(\bm{z};W')-\big[\mathcal{T}^{\bm{\pi}_{\bm{\theta}}}\big(\widehat{Q}(W')\big)\big](\bm{z})\Big)^{2}\Big],
\end{align}
where $\widehat{Q}(W')=\big[\widehat{Q}(\bm{z};W')\big]_{\bm{z}\in\mathcal{S}\times\mathcal{A}}$, $S^{W}_{B}=\big\{W'\in\mathbb{R}^{dmN}:\|W'_{(j-1)m+1:jm}-W'_{(j-1)m+1:jm}(0)\|_{2}\leq B/\sqrt{N},\forall\;j=1,\cdots,N\big\}$ denotes the parameter space with $W'(0)\in\mathbb{R}^{dmN}$ serving as the initial value.
Note that $B/\sqrt{N}$ is utilized to characterize the convergence error of the neural network approximation.
\par
For MSBE problem (\ref{theMSBEofmathbfWfirst}), the approximate parameter can be updated by the following iteration:
\begin{align}\label{thecentralizediteration}
W'(t+1)=&P_{S^{W}_{B}}\Big(W'(t)-\eta_{c}\mathbb{E}_{\varsigma^{+}_{\bm{\pi_{\theta}}}}\big[\delta\big(\bm{z},\bm{z}';W'(t)\big)\notag\\
&\nabla_{W'}\widehat{Q}\big(\bm{z};W'(t)\big)\big]\Big),
\end{align}
where $P_{S^{W}_{B}}(\cdot)$ is the projection operator, $\eta_{c}$ is the learning rate, $\varsigma^{+}_{\bm{\pi_{\theta}}}$ is the stationary distribution of $(\bm{z},\bm{z}')$ generated by $\bm{\pi_{\theta}}$, and  $\delta\big(\bm{z},\bm{z}';W'(t)\big)=\widehat{Q}\big(\bm{z};W'(t)\big)-(1-\gamma)\bar{r}(\bm{z})-\gamma\widehat{Q}\big(\bm{z}';W'(t)\big)$ is the temporal difference (TD) error.
\par
By leveraging (\ref{thepolicygradienttheorem}) and (\ref{thecentralizediteration}), the NMARL problem can be addressed in a centralized way.
However, several substantial challenges remain.
For example, the update of the approximate parameter $W'(t+1)$ in (\ref{thecentralizediteration}) depends on the average reward $\bar{r}(\bm{z})$, which is not directly accessible due to the privacy constraints associated with each agent's individual reward.
Consequently, this necessitates the development of a distributed algorithm that avoids utilizing the average reward of agents directly.
\subsection{Approximate stationary}\label{thesectionofApproximatestationary}
To establish a foundation for the subsequent convergence analysis, we introduce the concept of approximate stationary in this section.
\par
Define $W^{\dag}\in\mathbb{R}^{dmN}$ as the stationary point of (\ref{thecentralizediteration}) and satisfy
\begin{align}
\mathbb{E}_{\varsigma^{+}_{\bm{\pi_{\theta}}}}\big[\delta(\bm{z},\bm{z}';W^{\dag})\nabla_{W'}\widehat{Q}(\bm{z};W^{\dag})\big]^{\top}(W'-W^{\dag})\geq0,\label{theinequalityofstationarypointbefore}
\end{align}
for all $W'\in S^{W}_{B}$.
Based on $W^{\dag}$ in (\ref{theinequalityofstationarypointbefore}), we define a function class $\mathcal{F}^{\dag}_{B,m}$ as
\begin{align}\label{thedefinitionofmathcalFdag}
\mathcal{F}^{\dag}_{B,m}=&\Big\{\frac{1}{(mN)^{p}}\sum_{r=1}^{mN}\Big(b_{r}\mathds{1}\{(W^{\dag}_{r})^{\top}x_{\lfloor(r-1)/m\rfloor+1}>0\}\cdot\notag\\
&W'^{\top}_{r}x_{\lfloor(r-1)/m\rfloor+1}\Big),W'\in S^{W}_{B}\Big\}.
\end{align}
Similar to the analysis in~\cite{Cai2019}, we have that $\widehat{Q}(W^{\dag})=\big[\widehat{Q}(\bm{z};W^{\dag})\big]_{\bm{z}\in\mathcal{S}\times\mathcal{A}}$ is the global optimum of the following mean-squared projected Bellman error (MSPBE) problem:
\begin{align}\label{theMSBEofmathbfW}
\min_{Q\in\mathbb{R}^{|\mathcal{S}||\mathcal{A}|}}\mathbb{E}_{\varsigma_{\bm{\pi_{\theta}}}}\Big[\Big(Q(\bm{z})-\big[P_{\mathcal{F}^{\dag}_{B,m}}\mathcal{T}^{\bm{\pi}_{\bm{\theta}}}\big(Q\big)\big](\bm{z})\Big)^{2}\Big].
\end{align}
Considering the unknown nature of $W^{\dag}$, the MSPBE problem (\ref{theMSBEofmathbfW}) is not directly solvable.
In light of the specific characteristics of the approximate function $\widehat{Q}(\bm{z};W')$ in (\ref{origitwo-layerneuralnetworkofQfunction}), we adopt the concept of an approximate stationary point as a substitute for the stationary point in the following discussion.
Define $\widehat{Q}_{0}(\bm{z};W')$ as the local linearization of $\widehat{Q}(\bm{z};W')$ in (\ref{origitwo-layerneuralnetworkofQfunction}) at $W'(0)$, which is characterized by
\begin{align}\label{theQ0xW}
\widehat{Q}_{0}(\bm{z};W')=\frac{1}{(mN)^{p}}\sum_{r=1}^{mN}b_{r}\mathrm{ReLU}\big(W'^{\top}_{r}(0)x_{\lfloor(r-1)/m\rfloor+1}\big).
\end{align}
For joint policy $\bm{\pi}_{\bm{\theta}}$, the approximate stationary point $W^{*}\in\mathbb{R}^{dmN}$ of (\ref{thecentralizediteration}) satisfies
\begin{align}
\mathbb{E}_{\varsigma^{+}_{\bm{\pi_{\theta}}}}\big[\delta_{0}(\bm{z},\bm{z}';W^{*})\nabla_{W}\widehat{Q}_{0}(\bm{z};W^{*})\big]^{\top}(W'-W^{*})\geq0\label{theinequalityofstationarypoint}
\end{align}
for all $W'\in S^{W}_{B}$, where $\delta_{0}(\bm{z},\bm{z}';W^{*})=\widehat{Q}_{0}(\bm{z};W^{*})-(1-\gamma)\bar{r}(\bm{z})-\gamma\widehat{Q}_{0}(\bm{z}';W^{*})$.
By utilizing $W'(0)$, we define another function class $\mathcal{F}_{B,m}$ as
\begin{align}\label{thedefinitionofmathcalF}
\mathcal{F}_{B,m}=&\Big\{\frac{1}{(mN)^{p}}\sum_{r=1}^{mN}\Big(b_{r}\mathds{1}\{W'^{\top}_{r}(0)x_{\lfloor(r-1)/m\rfloor+1}>0\}\cdot\notag\\
&W'^{\top}_{r}x_{\lfloor(r-1)/m\rfloor+1}\Big),W'\in S^{W}_{B}\Big\}.
\end{align}
According to the definitions of (\ref{theBellmanevaluationoperatorofpiTheta}) and (\ref{theinequalityofstationarypoint}),
it can be clearly observed that  $\widehat{Q}_{0}(W^{*})=\big[\widehat{Q}_{0}(W^{*})\big]_{\bm{z}\in\mathcal{S}\times\mathcal{A}}$ is the global optimum of the following MSPBE problem:
\begin{align}\label{theMSBEofmathbfWtwo}
\min_{Q\in\mathbb{R}^{|\mathcal{S}||\mathcal{A}|}}\mathbb{E}_{\varsigma_{\bm{\pi_{\theta}}}}\Big[\Big(Q(\bm{z})-\big[P_{\mathcal{F}_{B,m}}\mathcal{T}^{\bm{\pi}_{\bm{\theta}}}\big(Q\big)\big](\bm{z})\Big)^{2}\Big].
\end{align}
\begin{remark}
The specification of $B$ within the parameter space $S^{W}_{B}=\big\{W'\in\mathbb{R}^{dmN}:\|W'_{(j-1)m+1:jm}-W'_{(j-1)m+1:jm}(0)\|_{2}\leq B/\sqrt{N},\forall\;j=1,\cdots,N\big\}$ ensures that the designed neural network operates in a well-defined and rational manner.
As a result, when characterizing the convergence of the approximate function with a sufficiently large value for $m$, it is sufficient to concentrate on $\mathcal{F}_{B,m}$ as defined in (\ref{thedefinitionofmathcalF}), rather than considering $\mathcal{F}^{\dag}_{B,m}$ in (\ref{thedefinitionofmathcalFdag}).
Given that the initial value $W'(0)$ in the definition of $\mathcal{F}_{B,m}$ is specified and remains constant, this property will simplify the subsequent convergence analysis.
\end{remark}
\section{Distributed neural policy gradient algorithm}\label{DistributedNeuralpolicygradientalgorithm}
\subsection{Expressions for approximate $Q$-function and policy function}\label{Overparameterizedneuralpolicy}
By using the neural network in Fig.~\ref{neuralnetwork},
the approximate $Q$-function $\widehat{Q}_{i}(\bm{z};W_{i})$ of agent $i$ is represented as
\begin{align}\label{two-layerneuralnetworkofQfunction}
\widehat{Q}_{i}(\bm{z};W_{i})=\frac{1}{(mN)^{p}}\sum_{r=1}^{mN}b_{r}\mathrm{ReLU}(W_{i,r}^{\top}x_{\lfloor(r-1)/m\rfloor+1}),
\end{align}
where $\frac{1}{2}<p<1$ and $W_{i}=(W_{i,1}^{\top},\cdots,W_{i,mN}^{\top})^{\top}\in\mathbb{R}^{dmN}$ denotes the approximate parameter and satisfies the condition $W_{i}(0)=W'(0)$.
Due to the scalar homogeneity property of $\mathrm{ReLU}(\cdot)$ that $\mathrm{ReLU}(cu)=c\mathrm{ReLU}(u)$ for all $c>0$ and $u\in\mathbb{R}$, it is feasible to maintain the fixed value of $b_{r}$ without modification while only updating $W_{i}$ during the training process.
\begin{remark}
In the approximate function (\ref{two-layerneuralnetworkofQfunction}), the condition $\frac{1}{2} < p < 1$ is specifically designed to ensure that each agent can accurately evaluate the joint policy while mitigating the policy gradient error induced by differing reward structures.
More precisely, in the proof of part (ii) of Lemma~\ref{Lemma4-6}, the range $\frac{1}{2} < p < 1$ is utilized in (\ref{expectationofgradientdifference}) and (\ref{firstterminexpectationofgradientdifference}) to regulate the term $12R^{2}_{0}$, thereby preventing the presence of a constant term in the upper bound of the error.
\end{remark}
\par
As illustrated in the neural network on the right side of Fig.~\ref{neuralnetwork}, we define $f_{i}(\bm{s},a_{i};\theta_{i})$ as the feature value corresponding to the state-action pair $(\bm{s},a_{i})$, which satisfies
\begin{align}\label{thetwolayerneuralnetworkinAC}
f_{i}(\bm{s},a_{i};\theta_{i})=\frac{1}{(mN)^{p}}\sum_{r=1}^{m}b_{(i-1)m+r}\mathrm{ReLU}(\theta^{\top}_{i,r}x_{i}),
\end{align}
where $\theta_{i}=(\theta_{i,1}^{\top},\cdots,\theta_{i,m}^{\top})^{\top}\in\mathbb{R}^{md}$.
By using (\ref{thetwolayerneuralnetworkinAC}), the local policy $\pi_{i}(a_{i}|\bm{s};\theta_{i})$ of agent $i$ is denoted as
\begin{align}\label{theparameterizedpolicy}
\pi_{i}(a_{i}|\bm{s};\theta_{i})=\frac{\exp[f_{i}(\bm{s},a_{i};\theta_{i})]}{\sum_{a'_{i}\in\mathcal{A}_{i}}\exp[f_{i}(\bm{s},a'_{i};\theta_{i})]}.
\end{align}
Denote  $\psi_{i}(\bm{s},a_{i};\theta_{i})=\big(\psi^{\top}_{i,1}(\bm{s},a_{i};\theta_{i}),\cdots,\psi^{\top}_{i,m}(\bm{s},a_{i};\theta_{i})\big)^{\top}$ as the feature mapping of $f_{i}(\bm{s},a_{i};\theta_{i})$ and satisfy
\begin{align}\label{thefeaturemapofpsi}
\psi_{i,r}(\bm{s},a_{i};\theta_{i})=\frac{b_{(i-1)m+r}}{(mN)^{p}}\mathds{1}\{\theta^{\top}_{i,r}x_{i}>0\}x_{i},
\end{align}
for all $r=1,\cdots,m$.
Moreover, define $\bm{\psi}(\bm{z};\bm{\theta})$ as the joint feature mapping and expressed as
\begin{align}\label{thejointfeaturemapping}
\bm{\psi}(\bm{z};\bm{\theta})=\big(\psi^{\top}_{1}(s,a_{1};\theta_{1}),\cdots,\psi^{\top}_{N}(s,a_{N};\theta_{N})\big)^{\top}.
\end{align}
Since $b_{r}\sim\mathrm{Unif}(\{-1,1\})$ and $\|x_{i}\|_{2}=1$, we have
\begin{align}
\|\psi_{i,r}(\bm{s},a_{i};\theta_{i})\|_{2}\leq\frac{1}{(mN)^{p}}\label{thenormofpsi}
\end{align}
and
\begin{align}\label{thevarofneuralpolicy}
\left\{
\begin{array}{ll}
f_{i}(\bm{s},a_{i};\theta_{i})=\psi^{\top}_{i}(\bm{s},a_{i};\theta_{i})\theta_{i}\\
\nabla_{\theta_{i}}f_{i}(\bm{s},a_{i};\theta_{i})=\psi_{i}(\bm{s},a_{i};\theta_{i})
\end{array}
\right.
\end{align}
almost everywhere.
\par
For $\psi_{i,r}(\bm{s},a_{i};\theta_{i})$ in (\ref{thefeaturemapofpsi}),
we define
\begin{align}
\left\{
\begin{array}{ll}\notag
\overline{\psi}_{i,r}(\bm{s},a_{i};\theta_{i})\!=\!\psi_{i,r}(\bm{s},a_{i};\theta_{i})-\mathbb{E}_{a'_{i}\sim\pi_{\theta_{i}}(\cdot|\bm{s})}\big[\psi_{i}(\bm{s},a'_{i};\theta_{i})\big]\\
\overline{\psi}_{i}(\bm{s},a_{i};\theta_{i})=\big(\overline{\psi}^{\top}_{i,1}(\bm{s},a_{i};\theta_{i}),\cdots,\overline{\psi}^{\top}_{i,m}(\bm{s},a_{i};\theta_{i})\big)^{\top}
\end{array}
\right.
\end{align}
and have the
following proposition.
\begin{proposition}\label{ThepolicygradienttheoremforMARL}
For the local policy $\pi_{i}(a_{i}|\bm{s};\theta_{i})$ of each agent $i\in\mathcal{N}$ as defined in (\ref{theparameterizedpolicy}),
the policy gradient of $J(\bm{\pi}_{\bm{\theta}})$ with respect to $\theta_{i}$ is represented as
\begin{align}\label{theresultofpolicygradientforagenti}
\nabla_{\theta_{i}}J(\bm{\pi}_{\bm{\theta}})=&\mathbb{E}_{\sigma_{\bm{\pi}_{\bm{\theta}}}}\Big[Q^{\bm{\pi}_{\bm{\theta}}}(\bm{s},\bm{a})\overline{\psi}_{i}(\bm{s},a_{i};\theta_{i})\Big].
\end{align}
\end{proposition}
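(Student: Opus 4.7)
The plan is to derive the claimed identity by specializing the generic policy gradient formula in Lemma~\ref{thelemmaofpolicygradienttheorem} to the softmax-with-ReLU policy parameterization given in (\ref{theparameterizedpolicy}) and (\ref{thetwolayerneuralnetworkinAC}). Starting from Lemma~\ref{thelemmaofpolicygradienttheorem}, it is enough to prove the pointwise identity
\begin{align*}
\nabla_{\theta_{i}}\log\pi_{i}(a_{i}|\bm{s};\theta_{i})=\overline{\psi}_{i}(\bm{s},a_{i};\theta_{i}),
\end{align*}
and then substitute it inside the expectation $\mathbb{E}_{\sigma_{\bm{\pi}_{\bm{\theta}}}}[Q^{\bm{\pi}_{\bm{\theta}}}(\bm{s},\bm{a})\,\cdot\,]$.

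To obtain this identity, I would first take the logarithm of the softmax in (\ref{theparameterizedpolicy}) to write
\begin{align*}
\log\pi_{i}(a_{i}|\bm{s};\theta_{i})=f_{i}(\bm{s},a_{i};\theta_{i})-\log\!\sum_{a'_{i}\in\mathcal{A}_{i}}\exp\!\big[f_{i}(\bm{s},a'_{i};\theta_{i})\big],
\end{align*}
and then differentiate with respect to $\theta_{i}$. The first term yields $\nabla_{\theta_{i}}f_{i}(\bm{s},a_{i};\theta_{i})$, while by the chain rule the second term yields $\sum_{a'_{i}}\pi_{i}(a'_{i}|\bm{s};\theta_{i})\nabla_{\theta_{i}}f_{i}(\bm{s},a'_{i};\theta_{i})=\mathbb{E}_{a'_{i}\sim\pi_{\theta_{i}}(\cdot|\bm{s})}[\nabla_{\theta_{i}}f_{i}(\bm{s},a'_{i};\theta_{i})]$. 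Invoking the second line of (\ref{thevarofneuralpolicy}), which identifies $\nabla_{\theta_{i}}f_{i}(\bm{s},a_{i};\theta_{i})$ with the feature mapping $\psi_{i}(\bm{s},a_{i};\theta_{i})$ almost everywhere, converts the difference of these two terms exactly into the centered feature $\overline{\psi}_{i}(\bm{s},a_{i};\theta_{i})$ defined just before the proposition.

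Plugging this score-function formula back into Lemma~\ref{thelemmaofpolicygradienttheorem} immediately yields (\ref{theresultofpolicygradientforagenti}).
The only technical delicacy is that $\mathrm{ReLU}(\cdot)$ is not differentiable at $0$, so $\nabla_{\theta_{i}}f_{i}(\bm{s},a_{i};\theta_{i})=\psi_{i}(\bm{s},a_{i};\theta_{i})$ holds only almost everywhere, as explicitly noted after (\ref{thevarofneuralpolicy}). Since the set $\{\theta^{\top}_{i,r}x_{i}=0\}$ is Lebesgue-null under the absolutely continuous random initialization of $\theta_{i}$ and the distribution $\sigma_{\bm{\pi}_{\bm{\theta}}}$ places no atomic mass on that exceptional set for the feature $x_{i}$, the expectation is unaffected and the formal manipulation is rigorous. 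This null-set issue is the only substantive obstacle in the argument; the rest is a direct computation with the softmax parameterization.
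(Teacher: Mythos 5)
Your proposal is correct and follows essentially the same route as the paper's proof: differentiate the log-softmax in (\ref{theparameterizedpolicy}), identify $\nabla_{\theta_{i}}f_{i}$ with $\psi_{i}$ via (\ref{thevarofneuralpolicy}), obtain $\nabla_{\theta_{i}}\log\pi_{i}=\overline{\psi}_{i}$, and substitute into Lemma~\ref{thelemmaofpolicygradienttheorem}. Your added remark on the measure-zero non-differentiability set of $\mathrm{ReLU}$ is a small extra precision that the paper handles implicitly with its ``almost everywhere'' qualifier.
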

\begin{proof}
By the definition of $\pi_{i}(a_{i}|\bm{s};\theta_{i})$ in (\ref{theparameterizedpolicy}), we have
\begin{align}\label{theresultofnablalogpii}
&\nabla_{\theta_{i}}\log{\pi_{i}(a_{i}|\bm{s};\theta_{i})}\notag\\
=&\nabla_{\theta_{i}}f_{i}(\bm{s},a_{i};\theta_{i})-\frac{\sum_{a'_{i}\in\mathcal{A}_{i}}\nabla_{\theta_{i}}f_{i}(\bm{s},a'_{i};\theta_{i})\exp[f_{i}(\bm{s},a'_{i};\theta_{i})]}{\sum_{a'_{i}\in\mathcal{A}_{i}}\exp[f_{i}(\bm{s},a'_{i};\theta_{i})]}\notag\\
=&\nabla_{\theta_{i}}f_{i}(\bm{s},a_{i};\theta_{i})-\mathbb{E}_{a'_{i}\sim\pi_{i}(\cdot|\bm{s};\theta_{i})}[\nabla_{\theta_{i}}f_{i}(\bm{s},a'_{i};\theta_{i})]\notag\\
=&\psi_{i}(\bm{s},a_{i};\theta_{i})-\mathbb{E}_{\pi_{\theta_{i}}}[\psi_{i}(\bm{s},a'_{i};\theta_{i})]\notag\\
=&\overline{\psi}_{i}(\bm{s},a_{i};\theta_{i}),
\end{align}
where the third equality comes from (\ref{thevarofneuralpolicy}).
Substituting (\ref{theresultofnablalogpii}) into (\ref{thepolicygradienttheorem}), we can complete the proof.
\end{proof}
\par
Recall the initial value $W'(0)$ of parameter $W_{i}$ in (\ref{two-layerneuralnetworkofQfunction}), define the joint policy parameter space of agents as $\prod_{i=1}^{N}S^{\theta}_{i,B}$, where $S^{\theta}_{i,B}=\{\theta_{i}:\|\theta_{i}-\theta_{i}(0)\|_{2}\leq B/\sqrt{N}\}$ and $\theta_{i,r}(0)=W'_{(i-1)m+r}(0)$ for all $i=1,\cdots,N$ and $r=1,\cdots,m$.
For joint policy $\bm{\pi}_{\bm{\theta}}$,
we define the advantage function of agent $i$ for $\widehat{Q}_{i}(\bm{z};W_{i})$ as
\begin{align}\label{advantageinproposition}
\widehat{A}_{i}(\bm{z};W_{i})=\widehat{Q}_{i}(\bm{z};W_{i})-\mathbb{E}_{\bm{a}'\sim\bm{\pi_{\theta}}(\cdot|\bm{s})}\big[\widehat{Q}_{i}(\bm{s},\bm{a}';W_{i})\big],
\end{align}
which satisfies the following property.
\begin{lemma}\label{thelemmaofcompatibility}
For the local policy $\pi_{i}(a_{i}|\bm{s};\theta_{i})$ of each agent $i$ as defined in (\ref{theparameterizedpolicy}), we have
\begin{align}\label{theadvantagelemma}
\nabla_{W_{i}}\widehat{A}_{i}(\bm{z};W_{i})=\nabla_{\bm{\theta}}\log{\bm{\pi}_{\bm{\theta}}(\bm{a}|\bm{s})}
\end{align}
with $m$ is sufficiently large.
\end{lemma}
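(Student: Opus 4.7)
The plan is to compute $\nabla_{W_i}\widehat{A}_i(\bm{z};W_i)$ block by block and show that it matches, entry-for-entry, the stack of local score functions $\overline{\psi}_j(\bm{s},a_j;\theta_j)$ across $j=1,\ldots,N$. I would partition the $dmN$-dimensional vector $W_i$ into $N$ blocks of length $dm$, where the $j$-th block collects precisely those neurons that act on the feature $x_j=\phi_j(\bm{s},a_j)$. Differentiating the definition (\ref{two-layerneuralnetworkofQfunction}) of $\widehat{Q}_i$, the sub-block corresponding to neuron index $(j-1)m+k$ is $\tfrac{b_{(j-1)m+k}}{(mN)^p}\mathds{1}\{W_{i,(j-1)m+k}^{\top}x_j>0\}x_j$, which has exactly the same structural form as $\psi_{j,k}(\bm{s},a_j;\theta_j)$ in (\ref{thefeaturemapofpsi}); the scalar coefficient and the feature direction already agree, and only the activation indicators differ.

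The second step is to reconcile those indicators. Here I would invoke the initialization convention $\theta_{j,k}(0)=W'_{(j-1)m+k}(0)=W_{i,(j-1)m+k}(0)$ together with the radius-$B/\sqrt{N}$ constraints on both parameter spaces $S^W_B$ and $S^\theta_{j,B}$. A standard overparameterization (NTK-type) argument then yields, with probability tending to one over the random initialization of $W'(0)$ as $m\to\infty$, that $\mathds{1}\{W_{i,(j-1)m+k}^{\top}x_j>0\}=\mathds{1}\{\theta_{j,k}^{\top}x_j>0\}$ outside a set of neurons of vanishing fraction, because small $\ell_2$ perturbations at the scale $B/\sqrt{N}$ can only flip the sign of $w^{\top}x$ for initializations that lie within a thin shell of width $O(B/\sqrt{N})$ around the hyperplane $\{w:w^{\top}x=0\}$. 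For $m$ sufficiently large, these exceptional neurons can be ignored (as is already done implicitly in (\ref{thevarofneuralpolicy})), and the equality $\nabla_{W_{i,(j-1)m+k}}\widehat{Q}_i(\bm{z};W_i)=\psi_{j,k}(\bm{s},a_j;\theta_j)$ holds almost surely, almost everywhere.

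The third step is to differentiate the second term of $\widehat{A}_i$ in (\ref{advantageinproposition}). Exchanging $\nabla_{W_i}$ with the expectation (justified by the boundedness of $\|\psi_{i,r}\|_2\le(mN)^{-p}$ from (\ref{thenormofpsi})), the $(j,k)$-sub-block of $\nabla_{W_i}\mathbb{E}_{\bm{a}'\sim\bm{\pi}_{\bm{\theta}}(\cdot|\bm{s})}[\widehat{Q}_i(\bm{s},\bm{a}';W_i)]$ reduces to $\mathbb{E}_{a_j'\sim\pi_j(\cdot|\bm{s};\theta_j)}[\psi_{j,k}(\bm{s},a_j';\theta_j)]$, using that $\bm{\pi}_{\bm{\theta}}$ factorizes across agents and that the $j$-th block depends on $\bm{a}'$ only through $a_j'$. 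Subtracting the two pieces yields $\overline{\psi}_{j,k}(\bm{s},a_j;\theta_j)$ in each sub-block, so that the full gradient stacks up to $\bm{\psi}(\bm{z};\bm{\theta})$ re-centered, i.e. $(\overline{\psi}_1(\bm{s},a_1;\theta_1)^{\top},\ldots,\overline{\psi}_N(\bm{s},a_N;\theta_N)^{\top})^{\top}$. By (\ref{theresultofnablalogpii}) derived in the proof of Proposition~\ref{ThepolicygradienttheoremforMARL}, this is exactly $\nabla_{\bm{\theta}}\log\bm{\pi}_{\bm{\theta}}(\bm{a}|\bm{s})$.

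The main obstacle I anticipate is the activation-pattern step: the statement is written as an exact equality, yet in full generality one only expects it to hold up to a vanishing-fraction error in the neurons whose pre-activations flip sign. Making the claim rigorous therefore requires either interpreting equation~(\ref{theadvantagelemma}) modulo the measure-zero exception already embedded in (\ref{thevarofneuralpolicy}), or quantitatively bounding the contribution of the flipped neurons and absorbing it into the "$m$ sufficiently large" qualifier. This will hinge on anti-concentration of $W_{i,(j-1)m+k}(0)^{\top}x_j$ under the initialization distribution, which controls how many neurons lie in the critical $O(B/\sqrt{N})$-shell. The remaining steps are essentially bookkeeping once the block structure is aligned and the indicator functions are matched.
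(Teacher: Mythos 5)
Your proposal is correct and follows essentially the same route as the paper's proof: differentiate $\widehat{A}_i$ block by block to obtain the stacked $\overline{\psi}_j$ vectors, use the factorization of $\bm{\pi}_{\bm{\theta}}$ to reduce the expectation term to the per-agent centering, and match the result to $\nabla_{\bm{\theta}}\log\bm{\pi}_{\bm{\theta}}(\bm{a}|\bm{s})$ via the shared initialization $\theta_{j,k}(0)=W_{i,(j-1)m+k}(0)$ for $m$ sufficiently large. Your anti-concentration discussion of the flipped activation indicators is in fact more explicit than the paper's one-line appeal to ``$m$ sufficiently large,'' and correctly identifies the only delicate point (which the paper's Lemma~\ref{Lemma1to3}(i) implicitly controls).
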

\begin{proof}
By the definition of $\widehat{A}_{i}(\bm{z};W_{i})$ in  (\ref{advantageinproposition}), it holds almost everywhere that $\nabla_{W_{i}}\widehat{A}_{i}(\bm{z};W_{i})=\big(\overline{\psi}^{\top}_{1}(\bm{s},a_{1};W_{i,1:m}),\cdots,\overline{\psi}^{\top}_{N}(\bm{s},a_{N};$ $W_{i,m(N-1)+1:mN})\big)^{\top}$ and $\nabla_{\bm{\theta}}\log{\bm{\pi}_{\bm{\theta}}(\bm{a}|\bm{s})}=\big(\overline{\psi}^{\top}_{1}(\bm{s},a_{1};\theta_{1}),$ $\cdots,\overline{\psi}^{\top}_{N}(\bm{s},a_{N};\theta_{N})\big)^{\top}$.
Since the initial value $\theta_{i,r}(0)=W'_{(i-1)m+r}(0)=W_{i,(i-1)m+r}(0)$ for all $i=1,\cdots,N$ and $r=1,\cdots,m$, we can obtain (\ref{theadvantagelemma})
with $m$ is sufficiently large.
\end{proof}
\begin{remark}
Lemma~\ref{thelemmaofcompatibility} indicates that the designed approximate function $\widehat{Q}_{i}(\bm{z};W_{i})$ in (\ref{two-layerneuralnetworkofQfunction}) and the policy function $\pi_{i}(a_{i}|\bm{s};\theta_{i})$ in (\ref{theparameterizedpolicy}) satisfy the condition of compatible function approximation, a key requirement commonly utilized in natural policy gradient methods~\cite{Peters2008Neurocomputing}-\cite{Xu2020NIPS}.
As such, the proposed neural network architectures in Fig.~\ref{neuralnetwork} exhibit the potential for extension to natural policy gradient frameworks.
\end{remark}
\subsection{Distributed neural policy gradient algorithm}\label{Criticupdate}
To design a distributed algorithm, several key challenges need to be systematically addressed.
For instance,
(i) how can agents estimate the global $Q$-function $Q^{\bm{\pi_{\theta}}}(\bm{s},\bm{a})$ without access to the average reward information?
(ii) How can agents update their policy parameters $\bm{\theta}$ to cooperatively maximize the objective function $J(\bm{\pi_{\theta}})$ as defined in (\ref{thedefinelongtermreturninMARL})?
\begin{figure}[!htb]
\centering
\includegraphics[width=0.8\hsize]{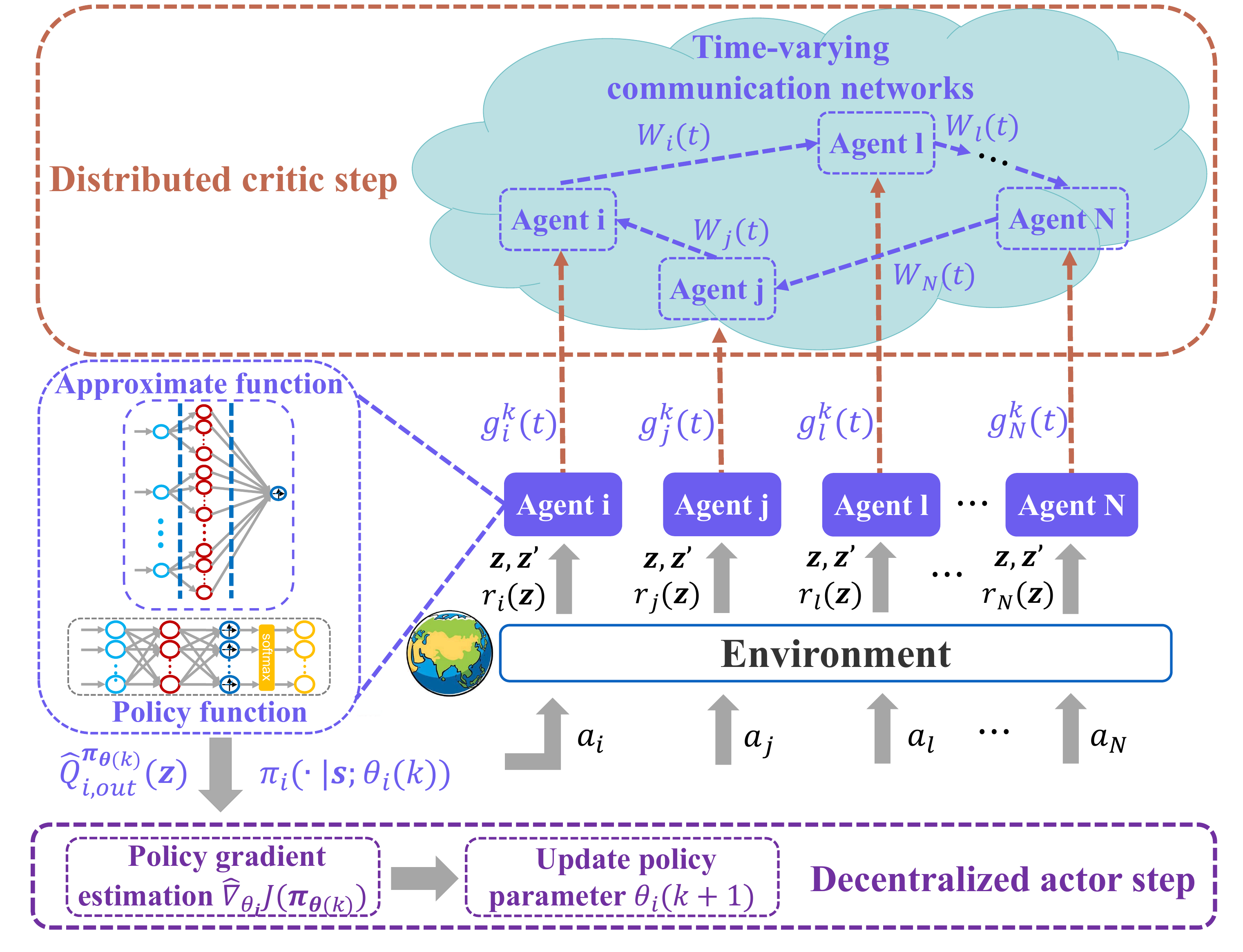}
\caption{The flow diagram of distributed policy gradient algorithm.}\label{DCDAframeworkfigure}
\end{figure}
\par
The key approaches for addressing the aforementioned challenges are illustrated in the flow diagram in Fig.~\ref{DCDAframeworkfigure}. (i) In the distributed critic step, agents estimate the global $Q$-function $Q^{\bm{\pi_{\theta}}}(\bm{z})$ by utilizing only their local rewards and the approximate parameters of their neighboring agents;
(ii) Since the approximate function serves as an estimation of the global $Q$-function $Q^{\bm{\pi_{\theta}}}(\bm{z})$, each agent independently relies on its own approximate function to update its local policy parameter and collaboratively optimize the objective in (\ref{thedefinelongtermreturninMARL}), i.e., decentralized actor step.
\par
Before developing distributed algorithm, we present the assumption of communication network among agents.
\begin{assumption}\label{theassumptionofcommunication}
The time-varying communication networks $\{\mathcal{G}_{t}(\mathcal{N},\mathcal{E}_{t})\}_{t\geq0}$ among the agents is uniformly strongly connected
and its associated weight matrix $A(t)$ at time $t$ is doubly stochastic.
Additionally, there exists a positive constant $\alpha > 0$ such that $a_{ij}(t) \geq \alpha$ for all $e_{ij}(t) \in \mathcal{E}(t)$.
\end{assumption}
\par
Assumption~\ref{theassumptionofcommunication} is a common assumption for communication network and also appears in~\cite{Doan2019ICML}.
Next, we introduce the key steps of distributed algorithm.
\par
\textbf{Distributed critic step}:
Let $\bm{\pi}_{\bm{\theta}(k)}$ be joint policy in the $k$-th iteration of policy, where $\bm{\theta}(k)=\big(\theta^{\top}_{1}(k),\dots,\theta^{\top}_{N}(k)\big)^{\top}$ denotes the joint policy parameter.
Define
$\varsigma^{+}_{k}$ as the stationary distribution of $(\bm{z},\bm{z}')$ generated by $\bm{\pi}_{\bm{\theta}(k)}$.
For each agent $i$, the update of approximate parameter $W_{i}(t+1)$ with initial value $W_{i}(0)$ is designed as
\begin{subequations}\label{thekeystepindistributedneuralTDalgorithm}
\begin{numcases}{}
g^{k}_{i}(t)=\delta^{k}_{i}(t)\nabla_{W_{i}}\widehat{Q}_{i}\big(\bm{z};W_{i}(t)\big)\label{theTDlearningindistributedneuralTDalgorithm1}\\
\widetilde{W}_{i}(t+1)=\sum_{j\in\mathcal{N}_{i}(t)}a_{ij}(t)W_{j}(t)-\eta_{c,t}g^{k}_{i}(t)\label{theTDlearningindistributedneuralTDalgorithm}\\
W_{i}(t+1)=P_{S^{W}_{B}}\big(\widetilde{W}_{i}(t+1)\big),\label{theprojectindistributedneuralTDalgorithm}
\end{numcases}
\end{subequations}
where $\delta^{k}_{i}(t)=\widehat{Q}_{i}\big(\bm{z};W_{i}(t)\big)-(1-\gamma)r_{i}(\bm{z})-\gamma\widehat{Q}_{i}\big(\bm{z}';W_{i}(t)\big)$ is the TD error with $(\bm{z},\bm{z}')$ sampling drawn from the stationary distribution $\varsigma^{+}_{k}$,
$g^{k}_{i}(t)$ is the stochastic semigradient of agent $i$, and $\eta_{c,t}$ is the learning rate in the distributed critic step.
Note that in (\ref{theTDlearningindistributedneuralTDalgorithm}), the update of the approximate parameters for agents is performed in a distributed manner, where each agent $i$ relies exclusively on the approximate parameters of its neighboring agents.
After $T_{c}-1$ iterations of (\ref{thekeystepindistributedneuralTDalgorithm}), each agent $i$ is able to obtain an estimate of the global $Q$-function $Q^{\bm{\pi}_{\bm{\theta}(k)}}(\bm{z})$, denoted as
\begin{align}\label{theQiout}
\widehat{Q}^{\bm{\pi}_{\bm{\theta}(k)}}_{i,\mathrm{out}}(\bm{z})=\widehat{Q}_{i}\big(\bm{z};W_{i,\mathrm{out}}\big),
\end{align}
where $W_{i,\mathrm{out}}=\frac{1}{\sum_{t=0}^{T_{c}-1}\eta_{c,t}}\sum_{t=0}^{T_{c}-1}\eta_{c,t}W_{i}(t)$.
\par
\textbf{Decentralized actor step}:
Let $\sigma_{k}$ be the state-action visitation measure corresponding to $\bm{\pi}_{\bm{\theta}(k)}$ and $\mathcal{B}=\{\bm{z}_{b}\}_{1\leq b\leq |\mathcal{B}|}$ denote a batch sampled from $\sigma_{k}$.
By utilizing $\widehat{Q}^{\bm{\pi}_{\bm{\theta}(k)}}_{i,\mathrm{out}}(\bm{z})$ as defined in (\ref{theQiout}), each agent $i$ can estimate the true policy gradient $\nabla_{\theta_{i}}J(\bm{\pi}_{\bm{\theta}(k)})$ in (\ref{thepolicygradienttheorem}) by
\begin{align}\label{thesetimationofJpiTheta}
\widehat{\nabla}_{\theta_{i}}J(\bm{\pi}_{\bm{\theta}(k)})=\frac{1}{|\mathcal{B}|}\sum_{b=1}^{|\mathcal{B}|}\widehat{Q}^{\bm{\pi}_{\bm{\theta}(k)}}_{i,\mathrm{out}}(\bm{z}_{b})\nabla_{\theta_{i}}\log{\pi_{i}\big(a_{i,b}|\bm{s}_{b};\theta_{i}(k)\big)}.
\end{align}
Then, the update of policy parameter $\theta_{i}(k+1)$ is designed as
\begin{align}\label{theupdateofpolicyparameter}
\theta_{i}(k+1)=P_{S^{\theta}_{i,B}}\big(\theta_{i}(k)+\eta_{a}\widehat{\nabla}_{\theta_{i}}J(\bm{\pi}_{\bm{\theta}(k)})\big),
\end{align}
where $\eta_{a}$ is the learning rate in the decentralized actor step.
\par
By integrating the distributed critic step with the decentralized actor step, a distributed neural policy gradient algorithm is proposed in Algorithm~\ref{distributedneuralpolicygradientAlgorithm}.
\begin{algorithm}
\SetAlgoLined
\textbf{Input:} The non-negative integers $T_{c}$, $K$, and $|\mathcal{B}|$, the learning rates $\eta_{c,t}$ and $\eta_{a}$\;
\textbf{Initialization}: Initialize the approximate parameter and policy parameter $W_{i,(j-1)m+r}(0)=W'_{(j-1)m+r}(0)=$ $\theta_{j,r}(0)$
for all $i,j=1,\cdots,N$ and $r=1\cdots,m$\;
\For{$k=0$ $\mathrm{to}$ $K-1$}{
\tcc*[h]{Distributed critic step}\\
\For{$t=0$ $\mathrm{to}$ $T_{c}-1$}{
Sample a tuple $(\bm{z},\{r_{i}(\bm{z})\}_{i\in\mathcal{N}},\bm{z}')$ from the
stationary distribution $\varsigma^{+}_{k}$\;
Each agent $i\in\mathcal{N}$ updates $W_{i}(t+1)$ by (\ref{thekeystepindistributedneuralTDalgorithm})\;}
Each agent $i\in\mathcal{N}$ obtains the approximate $Q$-function
$\widehat{Q}^{\bm{\pi}_{\bm{\theta}(k)}}_{i,\mathrm{out}}(\bm{z})$ in (\ref{theQiout})\;
\tcc*[h]{Decentralized actor step}\\
Sample a batch $\{\bm{z}_{b}\}_{1\leq b\leq |\mathcal{B}|}$ from the visitation measure $\sigma_{k}$\;
Each agent $i\in\mathcal{N}$ updates $\theta_{i}(k+1)$ by (\ref{theupdateofpolicyparameter})\;
}
\textbf{Output:} {The joint policy $\{\bm{\pi}_{\bm{\theta}(k)}\}_{k=0}^{K}$}
\caption{Distributed neural policy gradient algorithm}\label{distributedneuralpolicygradientAlgorithm}
\end{algorithm}
\section{Global convergence of distributed critic step}\label{SectionIV}
In this section, we establish the global
convergence of agents for the joint policy evaluation in the distributed critic step.
\subsection{Preliminary results}
Before establishing the main results, some useful assumptions and preliminary results are introduced below.
\begin{assumption}\label{theassumptionofreward}
In the NMARL problem, the reward for each agent $i\in\mathcal{N}$ is uniformly bounded, i.e., there exist a constant $R_{0}$ such that $|r_{i}(\bm{z})|\leq R_{0}$ for all $\bm{z}\in\mathcal{S}\times\mathcal{A}$.
\end{assumption}
\begin{assumption}\label{Regularity Condition}
For any joint policies $\bm{\pi}$ and $\bm{\pi}'$, there exists a constant $c_{0}>0$ such that for all $i\in\mathcal{N}$, $u>0$, and $y\in\mathbb{R}^{d}$,
\begin{align}
\mathbb{E}_{\bm{\pi}'\cdot\nu_{\bm{\pi}}}\big[\mathds{1}\{|y^{\top}x_{i}|\leq u\}\big]\leq c_{0}\cdot u/\|y\|_{2}
\end{align}
where $\nu_{\bm{\pi}}$ is the state visitation measure of joint policy $\bm{\pi}$.
\end{assumption}
\par
Assumption~\ref{theassumptionofreward} suggests that the rewards for all agents are bounded, which is widely adopted in NMARL problems involving a discount factor, guarantees that the global $Q$-function remains bounded.
Assumption~\ref{Regularity Condition} primarily imposes a regularity condition on the state transition function.
\par
For any joint policy $\bm{\pi}$, define $\varsigma_{\bm{\pi}}$ as its stationary state-action distribution.
Let $w\sim N(0,I_{d}/d)$,
$c_{1}=c_{0}\mathbb{E}_{w}[1/\|w\|^{2}_{2}]^{\frac{1}{2}}$,
$c_{2}=c_{0}\big(\mathbb{E}_{w}[\|w\|^{4}_{2}]+\mathbb{E}_{w}[\|w\|^{2}_{2}]^{2}\big)^{\frac{1}{2}}\mathbb{E}_{w}[1/\|w\|^{2}_{2}]^{\frac{1}{2}}$, and
$d_{1}=\mathbb{E}_{w}[\|w\|^{2}_{2}]$.
For the sake of convenient representation, we formally define
\begin{align}\label{convenienceofdescription}
\widehat{Q}_{0}(\bm{z})=\widehat{Q}_{0}\big(\bm{z};W'(0)\big).
\end{align}
Based on the above definitions, we can derive the following fact and lemma.
\begin{fact}\label{thefactinpaper}
Suppose Assumption~\ref{Regularity Condition} holds.
For any joint policy $\bm{\pi}$, the following statements (almost everywhere) hold:
\par
(i) $\widehat{Q}_{0}(\bm{z})^{2}\leq(mN)^{-2p}\sum_{r=1}^{mN}\|W'_{r}(0)\|^{2}_{2}$;
\par
(ii) $\mathbb{E}_{\mathrm{init}}\big[\widehat{Q}_{0}(\bm{z})^{2}\big]=(mN)^{1-2p}d_{1}$;
\par
(iii) $\|\nabla_{W'}\widehat{Q}_{0}(\bm{z};W')\|_{2}\leq (mN)^{\frac{1}{2}-p}$;
\par
(iv) $\|\nabla_{W'}\widehat{Q}(\bm{z};W')\|_{2}\leq (mN)^{\frac{1}{2}-p}$;
\par
(v) $\widehat{Q}(\bm{z};W')^{2}\leq2\widehat{Q}_{0}(\bm{z})^{2}+2B^{2}(mN)^{1-2p}$;
\par
(vi) $\widehat{Q}_{0}(\bm{z};W')^{2}\leq2\widehat{Q}_{0}(\bm{z})^{2}+2B^{2}(mN)^{-2p}$.
\end{fact}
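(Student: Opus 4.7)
The plan is to establish the six bounds in Fact~\ref{thefactinpaper} by direct computation from the definitions in (\ref{origitwo-layerneuralnetworkofQfunction}), (\ref{theQ0xW}), and (\ref{convenienceofdescription}), relying only on elementary inequalities: the ReLU bounds $|\mathrm{ReLU}(u)| \leq |u|$ and $|\mathrm{ReLU}(u_1) - \mathrm{ReLU}(u_2)| \leq |u_1 - u_2|$, the structural facts $|b_r| = 1$ and $\|x_i\|_2 = 1$, and Cauchy--Schwarz. I would organize the arguments in the order (iii)--(iv), then (i), then (ii), then (v)--(vi), since the gradient bounds are the most mechanical and feed naturally into the perturbation bounds.

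For (iii) and (iv), I would compute each block of the gradient explicitly: $\nabla_{W'_r}\widehat{Q}(\bm{z};W') = (mN)^{-p} b_r \mathds{1}\{(W'_r)^\top x_{\lfloor(r-1)/m\rfloor+1} > 0\} x_{\lfloor(r-1)/m\rfloor+1}$ almost everywhere (with the indicator pinned at $W'(0)$ for $\widehat{Q}_0$). Each block thus has Euclidean norm at most $(mN)^{-p}$, and aggregating squared block norms across the $mN$ summands yields the declared $(mN)^{1/2-p}$ bound. Part (i) follows by an analogous pointwise estimate: $|\widehat{Q}_0(\bm{z})| \leq (mN)^{-p} \sum_r |\mathrm{ReLU}(W'_r(0)^\top x_{\lfloor(r-1)/m\rfloor+1})| \leq (mN)^{-p} \sum_r \|W'_r(0)\|_2$, which after squaring delivers the claimed bound.

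For (ii), I would invoke the independence of $\{b_r\}$ and $\{W'_r(0)\}$ together with $b_r \sim \mathrm{Unif}(\{-1,+1\})$, so that cross terms in the expansion of $\widehat{Q}_0(\bm{z})^2$ vanish under $\mathbb{E}_{\mathrm{init}}$. The surviving diagonal contribution reduces to $(mN)^{-2p}\sum_r\mathbb{E}_{\mathrm{init}}\big[\mathrm{ReLU}(W'_r(0)^\top x_{\lfloor(r-1)/m\rfloor+1})^2\big]$, and exploiting the symmetry of $W'_r(0)\sim N(0,I_d/d)$ together with $\|x_i\|_2=1$ turns each summand into a function of $d_1 = \mathbb{E}_w[\|w\|_2^2]$, giving the claimed identity. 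For (v) and (vi), I would write $\widehat{Q}(\bm{z};W') = \widehat{Q}_0(\bm{z}) + [\widehat{Q}(\bm{z};W') - \widehat{Q}_0(\bm{z})]$ (and the analogue for $\widehat{Q}_0(\bm{z};W')$) and apply $(a+b)^2 \leq 2a^2 + 2b^2$. The deviation term is controlled by the Lipschitz property of ReLU together with the constraint $W'\in S^W_B$, which forces $\|W'_{(j-1)m+1:jm} - W'_{(j-1)m+1:jm}(0)\|_2 \leq B/\sqrt{N}$ and thus $\sum_{r}\|W'_r - W'_r(0)\|_2^2 \leq B^2$ upon summing over $j = 1,\ldots,N$; applying Cauchy--Schwarz then delivers the quoted bounds.

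The main obstacle here is not conceptual but bookkeeping: one must carefully track how the $1/(mN)^p$ normalization interacts with the $mN$ summands and with the block-structured constraint $B/\sqrt{N}$ per block of size $m$, so that the factors of $m$, $N$, and $p$ combine consistently across the six parts. A secondary subtlety is that the gradient formulas in (iii)--(iv) only hold off a measure-zero set where ReLU fails to be differentiable, which is why all six statements carry the "almost everywhere" qualifier; this requires no further treatment beyond using the standard subgradient $\mathds{1}\{u>0\}$ at such points.
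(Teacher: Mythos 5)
Your overall strategy is the natural one (the paper omits its proof as ``straightforward'', and parts (iii), (iv), (v), and the Rademacher cross-term argument for (ii) do go through essentially as you describe), but the execution of part (i) contains a genuine error. You bound $|\widehat{Q}_{0}(\bm{z})|\leq (mN)^{-p}\sum_{r=1}^{mN}\|W'_{r}(0)\|_{2}$ and assert that ``squaring delivers the claimed bound.'' Squaring gives $(mN)^{-2p}\big(\sum_{r}\|W'_{r}(0)\|_{2}\big)^{2}$, and for nonnegative terms $\big(\sum_{r}a_{r}\big)^{2}\geq\sum_{r}a_{r}^{2}$, so what you obtain \emph{dominates} the stated right-hand side $(mN)^{-2p}\sum_{r}\|W'_{r}(0)\|_{2}^{2}$ rather than being bounded by it. To land on a bound involving $\sum_{r}\|W'_{r}(0)\|_{2}^{2}$ you must apply Cauchy--Schwarz across the $mN$ summands (equivalently, write $\widehat{Q}_{0}(\bm{z})=\nabla_{W'}\widehat{Q}_{0}(\bm{z};W'(0))^{\top}W'(0)$ and use part (iii)), which costs a factor of $mN$ and yields $(mN)^{1-2p}\sum_{r}\|W'_{r}(0)\|_{2}^{2}$ -- consistent in scale with part (ii), whose expectation is $(mN)^{1-2p}d_{1}$, but $mN$ times larger than the displayed claim. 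Your derivation as written neither proves the displayed inequality nor flags the discrepancy.

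A related overclaim occurs in part (vi). Your perturbation argument -- ReLU (or indicator) Lipschitzness, $\sum_{r}\|W'_{r}-W'_{r}(0)\|_{2}^{2}\leq B^{2}$ from the blockwise constraint, then Cauchy--Schwarz over the $mN$ summands -- gives $\big|\widehat{Q}_{0}(\bm{z};W')-\widehat{Q}_{0}(\bm{z})\big|^{2}\leq B^{2}(mN)^{1-2p}$, exactly as in part (v); pinning the indicators at $W'(0)$ does not remove the factor $mN$ from the Cauchy--Schwarz step. So the same computation cannot ``deliver the quoted bound'' $2B^{2}(mN)^{-2p}$ for (vi) while delivering $2B^{2}(mN)^{1-2p}$ for (v). (The paper itself invokes (vi) with the exponent $1-2p$ in the display preceding (\ref{thirdterminexpectationofgradientdifference}) but with $-2p$ in (\ref{4thirdinequalitythelemmaofbargave}), so the statement is likely a typo; still, a correct proof must either reach the stated exponent by a sharper argument or note that only the weaker $1-2p$ version is obtainable and suffices downstream.) The remaining parts are fine: the blockwise gradient computation for (iii)--(iv) and the decomposition plus $(a+b)^{2}\leq 2a^{2}+2b^{2}$ for (v) are correct, and for (ii) the vanishing of cross terms is right, though the diagonal term evaluates to $\mathbb{E}[\mathrm{ReLU}(W'^{\top}_{r}(0)x)^{2}]=\tfrac{1}{2}\mathbb{E}[(W'^{\top}_{r}(0)x)^{2}]$, which matches $d_{1}$ only up to a constant, so at best one proves an upper bound of order $(mN)^{1-2p}$ rather than the stated equality.
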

\par
The proof of Fact~\ref{thefactinpaper} is straightforward and thus omitted for brevity.
\begin{lemma}\label{Lemma1to3}
Suppose Assumption~\ref{Regularity Condition} holds.
For any joint policy $\bm{\pi}$, the following statements hold:
\par
(i) $\mathbb{E}_{\mathrm{init},\varsigma_{\bm{\pi}}}\big[\sum_{r=1}^{mN}\mathds{1}\{|W'^{\top}_{r}(0)x_{\lfloor(r-1)/m\rfloor+1}|\leq\|W'_{r}-W'_{r}(0)\|_{2}\}\big]\leq c_{1}B(mN)^{\frac{1}{2}}$;
\par
(ii) $\mathbb{E}_{\mathrm{init}}\Big[\mathbb{E}_{\varsigma_{\bm{\pi}}}[\widehat{Q}_{0}(\bm{z})^{2}]\mathbb{E}_{\varsigma_{\bm{\pi}}}\big[\sum_{r=1}^{mN}\mathds{1}\{|W'^{\top}_{r}(0)x_{\lfloor(r-1)/m\rfloor+1}|$ $\leq\|W'_{r}-W'_{r}(0)\|_{2}\}\big]\Big]\leq c_{2}B(mN)^{\frac{3}{2}-2p}$;
\par
(iii)
$\mathbb{E}_{\mathrm{init},\varsigma_{\bm{\pi}}}\big[\big|\bm{\psi}^{\top}(\bm{z};W')W''-\bm{\psi}^{\top}\big(\bm{z};W'(0)\big)W''\big|^{2}\big]
\leq4c_{1}B^{3}(mN)^{\frac{1}{2}-2p}$ for all $W',W''\in S^{W}_{B}$.
\end{lemma}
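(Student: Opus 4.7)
The three claims share a common engine: combining Assumption~\ref{Regularity Condition} applied conditionally on the initialization (with $y = W'_r(0)$) together with the block-level bound $\sum_{r=1}^{mN}\|W'_r - W'_r(0)\|_2^2 \leq B^2$ that follows directly from $W'\in S^W_B$. Writing $u_r := \|W'_r - W'_r(0)\|_2$ and (for part (iii)) $u'_r := \|W''_r - W''_r(0)\|_2$, I will repeatedly apply Cauchy--Schwarz in the index $r$ and Jensen's inequality over $\mathbb{E}_{\mathrm{init}}$ to convert everything into moments of $w\sim N(0, I_d/d)$, which are exactly what defines $c_1$, $c_2$, and $d_1$.

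For part (i), Assumption~\ref{Regularity Condition} gives $\mathbb{E}_{\varsigma_{\bm{\pi}}}[\mathds{1}\{|W'^{\top}_r(0)x_{\lfloor(r-1)/m\rfloor+1}|\leq u_r\}\mid W'_r(0)] \leq c_0 u_r/\|W'_r(0)\|_2$ for each $r$. Summing over $r$, swapping the sum with $\mathbb{E}_{\mathrm{init}}$, and applying Cauchy--Schwarz in $r$ with $\sum_r u_r^2 \leq B^2$ yields $c_0 B\,\mathbb{E}_{\mathrm{init}}\big[(\sum_r 1/\|W'_r(0)\|_2^2)^{1/2}\big]$. Jensen's inequality plus independence of $\{W'_r(0)\}_r$ then give $c_0 B\sqrt{mN\,\mathbb{E}_w[1/\|w\|_2^2]} = c_1 B (mN)^{1/2}$. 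Part (ii) follows by composing Fact~\ref{thefactinpaper}(i) (which controls $\widehat{Q}_0(\bm{z})^2$ pointwise by $(mN)^{-2p}\sum_r\|W'_r(0)\|_2^2$) with the partial bound from part (i), and then applying Cauchy--Schwarz one more time across $\mathbb{E}_{\mathrm{init}}$. This reduces the calculation to $\mathbb{E}[(\sum_r\|W'_r(0)\|_2^2)^2] \leq (mN)^2(\mathbb{E}_w[\|w\|_2^4]+\mathbb{E}_w[\|w\|_2^2]^2)$ (via independence across $r$) and $\mathbb{E}[\sum_r 1/\|W'_r(0)\|_2^2] = mN\mathbb{E}_w[1/\|w\|_2^2]$; multiplying and tracking the exponents of $mN$ yields exactly $c_2 B(mN)^{3/2-2p}$.

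Part (iii) is the subtlest. Let $\Delta\mathds{1}_r := \mathds{1}\{(W'_r)^{\top}x > 0\} - \mathds{1}\{(W'_r(0))^{\top}x > 0\}$ and observe that since $|(W'_r - W'_r(0))^{\top}x|\leq u_r$ (using $\|x_{\lfloor(r-1)/m\rfloor+1}\|_2 = 1$), the event $\Delta\mathds{1}_r\neq 0$ forces a sign change and hence $|(W'_r(0))^{\top}x|\leq u_r$. The key move is the decomposition $W''_r = W''_r(0) + (W''_r - W''_r(0))$: on the support of $\Delta\mathds{1}_r$ the first piece satisfies $|(W''_r(0))^{\top}x|\leq u_r$ (by the \emph{same} indicator-driven smallness used for $W'$, exploiting the shared initialization $W''_r(0) = W'_r(0)$), while the second piece satisfies $|(W''_r - W''_r(0))^{\top}x|\leq u'_r$. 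Consequently,
\begin{equation*}
\big|\bm{\psi}^{\top}(\bm{z};W')W'' - \bm{\psi}^{\top}(\bm{z};W'(0))W''\big|^2 \leq (mN)^{-2p}\Big(\sum_r (u_r+u'_r)\,\mathds{1}\{|(W'_r(0))^{\top}x|\leq u_r\}\Big)^2.
\end{equation*}
Cauchy--Schwarz in $r$ (using $\mathds{1}_r^2 = \mathds{1}_r$) extracts the factor $\sum_r (u_r+u'_r)^2 \leq 2B^2+2B^2 = 4B^2$ and leaves the indicator sum, which part (i) bounds by $c_1 B(mN)^{1/2}$ after taking $\mathbb{E}_{\mathrm{init},\varsigma_{\bm{\pi}}}$. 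Multiplying gives the claimed $4c_1 B^3(mN)^{1/2-2p}$.

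The main obstacle is precisely the decomposition in part (iii). A naive bound $|(W''_r)^{\top}x|\leq \|W''_r\|_2$ would produce $\sum_r\|W''_r\|_2^2$, which is not uniformly controlled by $B^2$ (its expected size scales like $mN\mathbb{E}_w[\|w\|_2^2]$), and the resulting estimate would carry a moment of $\|w\|_2^2$ rather than an additional power of $B$, destroying the required $B^3$ scaling. Splitting $W''_r$ around its shared initialization is what forces \emph{both} pieces of the expression to inherit a $B/\sqrt{N}$-scale smallness on the support of $\Delta\mathds{1}_r$; parts (i) and (ii) reuse the same Cauchy--Schwarz/Jensen template but without this additional wrinkle.
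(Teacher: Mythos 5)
Your proof is correct and is essentially the argument the paper has in mind: the paper omits the proof of this lemma, deferring to~\cite{Cai2019}, and your combination of Assumption~\ref{Regularity Condition} (applied with $y=W'_{r}(0)$), the block constraint $\sum_{r=1}^{mN}\|W'_{r}-W'_{r}(0)\|_{2}^{2}\leq B^{2}$, Cauchy--Schwarz in $r$, and Jensen/Cauchy--Schwarz over the initialization is exactly the standard route there. In particular, your decomposition $W''_{r}=W'_{r}(0)+(W''_{r}-W'_{r}(0))$ in part~(iii), which exploits the shared initialization so that both pieces are $O(B)$-small on the support of the indicator (yielding the factor $B^{3}$ rather than a moment of $\|w\|_{2}$), is the right and necessary step, and the constants $c_{1}$, $c_{2}$ and the exponents of $mN$ all come out matching the statement.
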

\par
Similar proofs can be found in~\cite{Cai2019}, and are thus omitted here for brevity.
\subsection{Global convergence in distributed critic step}
For joint policy $\bm{\pi}_{\bm{\theta}(k)}$ in the $k$-th iteration of policy, define
$\varsigma_{k}$ as the stationary distribution $\bm{z}$ and
$W^{*}_{k}$ as an approximate stationary point of (\ref{thecentralizediteration}) under joint policy $\bm{\pi}_{\bm{\theta}(k)}$.
For agent $i\in\mathcal{N}$, define
$\widehat{Q}_{i,t}(\bm{z})=\widehat{Q}_{i}\big(\bm{z};W_{i}(t)\big)$, $\overline{W}(t)=(1/N)\sum_{i=1}^{N}W_{i}(t)$, $\delta^{k}_{0,i}(t)=\widehat{Q}_{0}\big(\bm{z};W_{i}(t)\big)-(1-\gamma)r_{i}(\bm{z})-\gamma\widehat{Q}_{0}\big(\bm{z}';W_{i}(t)\big)$, $\delta^{k}_{0,\mathrm{ave}}(t)=\widehat{Q}_{0}\big(\bm{z};\overline{W}(t)\big)-(1-\gamma)\bar{r}(\bm{z})-\gamma\widehat{Q}_{0}\big(\bm{z}';\overline{W}(t)\big)$, and $\delta^{k*}_{0,\mathrm{ave}}=\widehat{Q}_{0}\big(\bm{z};W_{k}^{*}\big)-(1-\gamma)\bar{r}(\bm{z})-\gamma\widehat{Q}_{0}\big(\bm{z}';W_{k}^{*}\big)$.
Recall the definition of $g^{k}_{i}(t)$ in (\ref{theTDlearningindistributedneuralTDalgorithm1}),
define
\begin{subequations}\label{thedefinitionofstochastic}
\begin{numcases}{}
\bar{g}^{k}_{i}(t)=\mathbb{E}_{\varsigma^{+}_{k}}[g^{k}_{i}(t)]\label{thedefinitionofstochastic-2}\\
g^{k}_{\mathrm{ave}}(t)=\frac{1}{N}\sum_{i=1}^{N}g^{k}_{i}(t)\label{thedefinitionofstochastic-3}\\
\bar{g}^{k}_{\mathrm{ave}}(t)=\frac{1}{N}\sum_{i=1}^{N}\bar{g}^{k}_{i}(t)\label{thedefinitionofstochastic-4}\\
e^{k}_{i}(t)=\sum_{j\in\mathcal{N}_{i}(t)}a_{ij}(t)W_{j}(t)-\eta_{c,t}g^{k}_{i}(t)\notag\\
-P_{S^{W}_{B}}\Big(\sum_{j\in\mathcal{N}_{i}(t)}a_{ij}(t)W_{j}(t)-\eta_{c,t}g^{k}_{i}(t)\Big)\label{thedefinitionofstochastic-5}\\
e^{k}_{\mathrm{ave}}(t)=\frac{1}{N}\sum_{i=1}^{N}e^{k}_{i}(t)\label{thedefinitionofstochastic-6}\\
\bar{g}^{k}_{0,i}(t)=\mathbb{E}_{\varsigma_{k}^{+}}\Big[\delta^{k}_{0,i}(t)\nabla_{W'}\widehat{Q}_{0}\big(\bm{z};W_{i}(t)\big)\Big]\label{thedefinitionofstochastic2-2}\\
\bar{g}^{k}_{0,\mathrm{ave}}(t)=\mathbb{E}_{\varsigma_{k}^{+}}\Big[\delta^{k}_{0,\mathrm{ave}}(t)\nabla_{W'}\widehat{Q}_{0}\big(\bm{z};\overline{W}(t)\big)\Big]\label{thedefinitionofstochastic2-3}\\
\bar{g}^{k*}_{0,\mathrm{ave}}=\mathbb{E}_{\varsigma_{k}^{+}}\Big[\delta^{k*}_{0,\mathrm{ave}}\nabla_{W'}\widehat{Q}_{0}\big(\bm{z};W_{k}^{*}\big)\Big],\label{thedefinitionofstochastic2-1}
\end{numcases}
\end{subequations}
where the annotations corresponding to (\ref{thedefinitionofstochastic-2})-(\ref{thedefinitionofstochastic2-1}) are presented in Table~\ref{tableofmathematicalsymbols}.
\begin{table}[!htb]
\begin{center}
\setlength{\abovecaptionskip}{10pt}
\setlength{\belowcaptionskip}{10pt}
\caption{Annotation of symbols.}
\small
\begin{tabular*}{\hsize}{m{0.07\textwidth}|m{0.38\textwidth}}
\toprule[1pt]
\hline
\makecell[c]{Symbols} & \makecell[c]{Annotation of symbols}\\
\hline
\makecell[c]{$g^{k}_{i}(t)$} & Stochastic semigradient of agent $i$\\
\hline
\makecell[c]{$\bar{g}^{k}_{i}(t)$} & Population semigradient of agent $i$\\
\hline
\makecell[c]{$g^{k}_{\mathrm{ave}}(t)$} & Average stochastic semigradient of agents\\
\hline
\makecell[c]{$\bar{g}^{k}_{\mathrm{ave}}(t)$ } & Average population semigradient of agents \\
\hline
\makecell[c]{$e^{k}_{i}(t)$} & Projection error of agent $i$\\
\hline
\makecell[c]{$e^{k}_{\mathrm{ave}}(t)$} & Average projection error of agents\\
\hline
\makecell[c]{$\bar{g}^{k}_{0,i}(t)$} & Locally linearized population semigradient of agent $i$\\
\hline
\makecell[c]{$\bar{g}^{k}_{0,\mathrm{ave}}(t)$} & Locally linearized population semigradient by the average reward $\bar{r}(\bm{z})$ and average approximate parameter $\overline{W}(t)$\\
\hline
\makecell[c]{$\bar{g}_{0,\mathrm{ave}}^{k*}$} & Locally linearized population semigradient by the average reward $\bar{r}(\bm{z})$ and stationary point $W^{*}_{k}$\\
\hline
\bottomrule[1pt]
\end{tabular*}\label{tableofmathematicalsymbols}
\end{center}
\end{table}

\par
To clearly clarify the proof process in this paper, an analytical flowchart is presented in Fig.~\ref{flowchartofproof}
\begin{figure}[!htb]
\centering
\includegraphics[width=1.0\hsize]{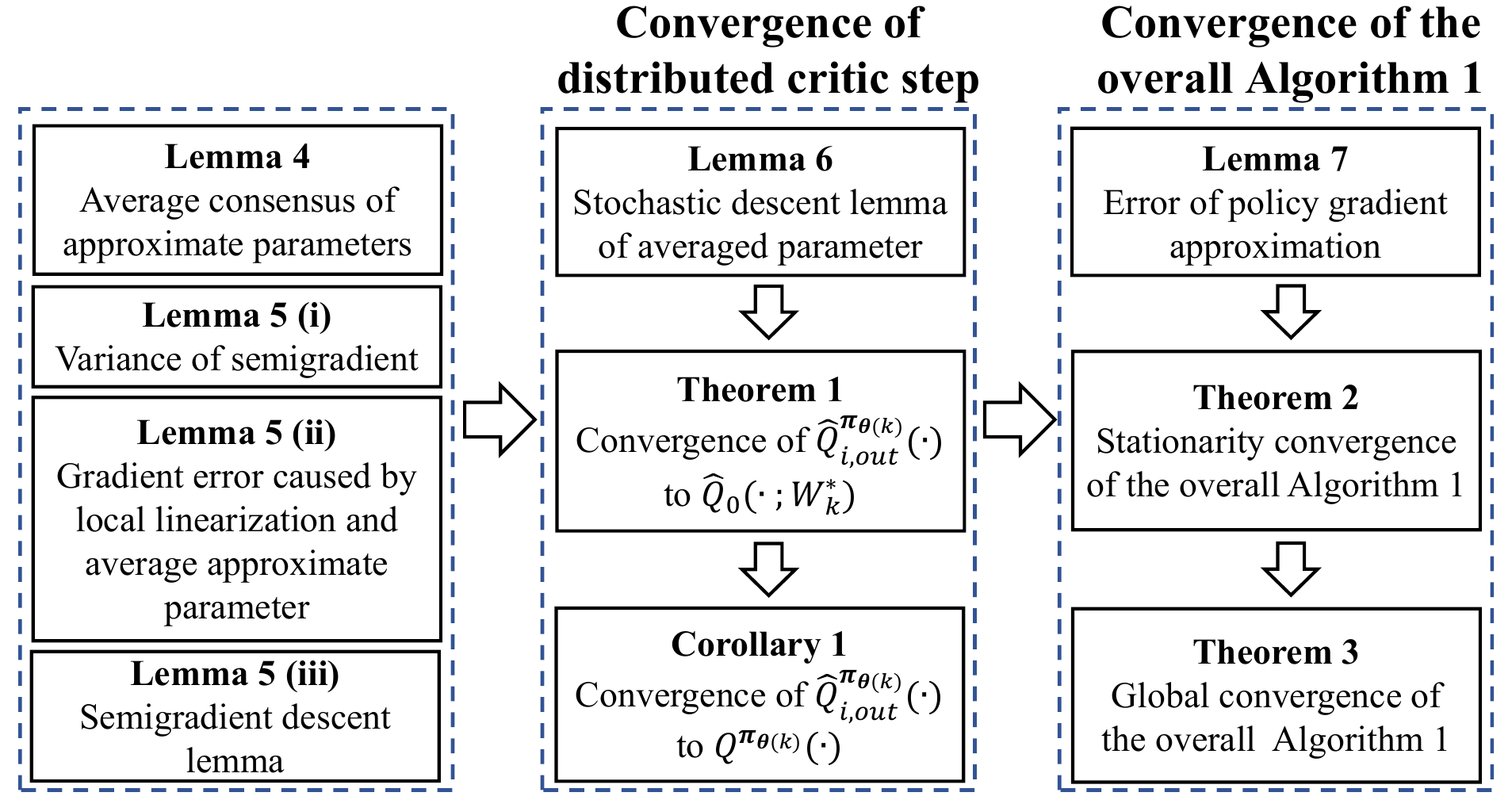}
\caption{The main analytical flowchart of proof process.}\label{flowchartofproof}
\end{figure}
\begin{lemma}\label{lemmaofaverageconsensus}
Suppose Assumptions~\ref{theassumptionofcommunication}-\ref{Regularity Condition} hold.
Let the learning rate $\eta_{c,t}=\frac{1-\gamma}{24\sqrt{t+1}}$ and
$C=I_{N}-(1/N)\mathbf{1}_{N}\mathbf{1}_{N}^{\top}$, then we have that
\begin{align}
&\mathbb{E}_{\mathrm{init}}[\|(C\otimes I_{dmN})W_{tot}(t)\|_{2}]\notag\\
\leq&\frac{\sqrt{md_{1}}N}{\lambda}\lambda^{t}_{D}\!+\!\sum_{t'=0}^{t-1}\frac{(1-\gamma)\lambda^{t-1-t'}_{D}}{24\lambda\sqrt{t'+1}}\mathcal{O}(Bm^{\frac{1}{2}-p}N^{1-p}),
\end{align}
where $W_{tot}(t)=\big(W^{\top}_{1}(t),\cdots,W^{\top}_{N}(t)\big)^{\top}$,  $\lambda=\min\big\{1-1/(2N^{3}),\sup_{t\geq0}\sigma_{2}\big(A(t)\big)\big\}$,
$\sigma_{2}\big(A(t)\big)$ is the second largest singular value of $A(t)$, and $\lambda_{D}=\lambda^{\frac{1}{D}}$.
\end{lemma}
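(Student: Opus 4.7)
The strategy is to treat the distributed critic recursion as a consensus-error dynamical system driven by the stochastic semigradient input $G(t)$ and projection slack $E(t)$, and to separately contract the initial mismatch, the injected noise, and the projection residual using the consensus contraction induced by Assumption~\ref{theassumptionofcommunication}.

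First, I would lift the per-agent updates into a compact Kronecker form. Combining the mixing step (\ref{theTDlearningindistributedneuralTDalgorithm}) with the projection step (\ref{theprojectindistributedneuralTDalgorithm}) and the definition of $e_i^k(t)$ in (\ref{thedefinitionofstochastic-5}) gives $W_i(t+1) = \sum_{j \in \mathcal{N}_i(t)} a_{ij}(t) W_j(t) - \eta_{c,t} g_i^k(t) - e_i^k(t)$. Stacking $W_{tot}(t) = (W_1(t)^\top, \ldots, W_N(t)^\top)^\top$ and similarly defining $G(t)$ and $E(t)$ yields $W_{tot}(t+1) = (A(t) \otimes I_{dmN}) W_{tot}(t) - \eta_{c,t} G(t) - E(t)$. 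Since $A(t)$ is doubly stochastic, $CA(t) = A(t) C$; multiplying both sides by $(C \otimes I_{dmN})$ and iterating produces $(C \otimes I) W_{tot}(t) = (\Phi(t,0) C \otimes I) W_{tot}(0) - \sum_{t'=0}^{t-1} (\Phi(t, t'+1) C \otimes I)(\eta_{c,t'} G(t') + E(t'))$, where $\Phi(t,s) := A(t-1) A(t-2) \cdots A(s)$ with the convention $\Phi(t,t) = I_N$.

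Next, I would invoke the standard Nedic--Ozdaglar-type consensus contraction for products of doubly stochastic matrices over uniformly strongly connected networks with $a_{ij}(t) \geq \alpha$: the $D$-step scrambling argument delivers $\|\Phi(t,s) C\|_2 \leq \lambda_D^{t-s}/\lambda$ for the specific $\lambda = \min\{1 - 1/(2N^3), \sup_{t} \sigma_2(A(t))\}$ and $\lambda_D = \lambda^{1/D}$ stated in the lemma. Applying this to the unrolled recursion and using the triangle inequality gives $\mathbb{E}_{\mathrm{init}}[\|(C \otimes I) W_{tot}(t)\|_2] \leq (\lambda_D^t/\lambda) \, \mathbb{E}_{\mathrm{init}}[\|W_{tot}(0)\|_2] + \sum_{t'=0}^{t-1} (\lambda_D^{t-1-t'}/\lambda)(\eta_{c,t'} \mathbb{E}_{\mathrm{init}}[\|G(t')\|_2] + \mathbb{E}_{\mathrm{init}}[\|E(t')\|_2])$.

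Finally, I would supply the three per-term norm bounds. For the initial value, $\mathbb{E}_{\mathrm{init}}[\|W_i(0)\|_2^2] = mN d_1$ (Fact~\ref{thefactinpaper}(ii) summed across the $mN$ blocks) combined with Cauchy--Schwarz over the $N$ agents gives $\mathbb{E}_{\mathrm{init}}[\|W_{tot}(0)\|_2] \leq N\sqrt{m d_1}$, matching the first summand of the claimed bound. For the semigradient input, Fact~\ref{thefactinpaper}(iv) yields $\|g_i^k(t)\|_2 \leq |\delta_i^k(t)| \cdot (mN)^{1/2-p}$; Assumption~\ref{theassumptionofreward} together with Fact~\ref{thefactinpaper}(v) controls $|\delta_i^k(t)|$ in terms of $R_0$, $B$, and $|\widehat{Q}_0(\bm z)|$, after which Fact~\ref{thefactinpaper}(ii) gives $\mathbb{E}_{\mathrm{init}}[\|G(t')\|_2] = \mathcal{O}(B m^{1/2-p} N^{1-p})$. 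For the projection residual, convexity of $S_B^W$ together with $W_j(t) \in S_B^W$ ensures $\sum_j a_{ij}(t) W_j(t) \in S_B^W$, so only $-\eta_{c,t} g_i^k(t)$ can push the argument outside; hence $\|e_i^k(t)\|_2 \leq \eta_{c,t} \|g_i^k(t)\|_2$, and $E(t')$ inherits the rate of $\eta_{c,t'} G(t')$ up to a constant absorbable into $\mathcal{O}(\cdot)$. Substituting $\eta_{c,t'} = (1-\gamma)/(24\sqrt{t'+1})$ then collapses into the stated bound. I expect the main obstacle to be the careful $N$-bookkeeping: a blunt triangle inequality $\|G(t')\|_2 \leq \sum_i \|g_i^k(t)\|_2$ would cost an extra $\sqrt{N}$, so one must use $\|G(t')\|_2 = (\sum_i \|g_i^k(t)\|_2^2)^{1/2}$ throughout and exploit the block-diagonal structure of $(C \otimes I_{dmN})$ to land on $N^{1-p}$ rather than $N^{3/2-p}$; deriving the sharp contraction constant with the precise form of $\lambda$ also requires invoking the scrambling bound without loose constants that would spoil the leading-order rate.
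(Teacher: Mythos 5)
Your proposal is correct and follows essentially the same route as the paper's proof: stack the updates into Kronecker form, unroll the consensus-error recursion, apply the Nedi\'{c}-type contraction bound $\|\Phi(t,s)C\|_{2}\leq\lambda_{D}^{t-s}/\lambda$ (the paper cites Proposition~4 of the Nedi\'{c}--Olshevsky--Rabbat survey), and then control $\mathbb{E}_{\mathrm{init}}[\|W_{tot}(0)\|_{2}]\leq N\sqrt{md_{1}}$, $\mathbb{E}_{\mathrm{init}}[\|g^{k}_{i}(t)\|_{2}]=\mathcal{O}(B(mN)^{\frac{1}{2}-p})$ via Fact~\ref{thefactinpaper} and Assumption~\ref{theassumptionofreward}, and $\|e^{k}_{i}(t)\|_{2}\leq\eta_{c,t}\|g^{k}_{i}(t)\|_{2}$ via the projection property. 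Your observation about using $\|\mathbf{G}^{k}(t')\|_{2}=(\sum_{i}\|g^{k}_{i}(t')\|^{2}_{2})^{1/2}$ to land on $N^{1-p}$ rather than $N^{\frac{3}{2}-p}$ is exactly the bookkeeping the paper performs.
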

\par
Lemma~\ref{lemmaofaverageconsensus} establishes the average consensus convergence of $W_{i}(t)$ to $\overline{W}(t)$.
The detailed proof is provided in Appendix~\ref{theproofoflemmaofaverageconsensus}.
By leveraging the result from Lemma~7 in~\cite{nedic2010}, which states that $\lim_{t\rightarrow\infty}\sum_{t'=0}^{t-1}(\lambda^{t-1-t'}_{D}/\sqrt{t'+1})=0$, we further have that $\mathbb{E}_{\mathrm{init}}[\|(C\otimes I_{dmN})W_{\mathrm{tot}}(t)\|_{2}] \to 0$ as $m \to \infty$ and $t \to \infty$.
\begin{lemma}\label{Lemma4-6}
Suppose Assumptions~\ref{theassumptionofcommunication}-\ref{Regularity Condition} hold.
In the $k$-th ieration of policy in Algorithm~\ref{distributedneuralpolicygradientAlgorithm}, for $t=0,\cdots,T_{c}-1$, we have
\par
(i) $\mathbb{E}_{\mathrm{init}}\big[\|g^{k}_{i}(t)-\bar{g}^{k}_{i}(t)\|^{2}_{2}\big]\leq12R^{2}_{0}(mN)^{1-2p}+(48d_{1}+48B^{2})(mN)^{2-4p}$;
\par
(ii) $\mathbb{E}_{\mathrm{init}}\big[\|\bar{g}^{k}_{i}(t)-\bar{g}^{k}_{0,\mathrm{ave}}(t)\|^{2}_{2}\big]=\mathcal{O}\big(B^{3}(mN)^{1-2p}\big)$;
\par
(iii) $\mathbb{E}_{\mathrm{init}}\big[\|\bar{g}^{k}_{\mathrm{ave}}(t)\|^{2}_{2}\big]
\leq12\mathbb{E}_{\mathrm{init},\varsigma_{k}}\Big[\Big(\widehat{Q}_{0}\big(\bm{z};\overline{W}(t)\big)-\widehat{Q}_{0}(\bm{z};W^{*}_{k})\Big)^{2}\Big]
+\mathcal{O}\big(B^{3}(mN)^{1-2p}\big)$.
\end{lemma}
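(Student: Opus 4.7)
The plan is to handle parts (i)--(iii) in sequence, since (iii) builds on the tools developed in (i) and (ii).

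For part (i), I would apply the standard variance inequality $\mathbb{E}\|X - \mathbb{E} X\|_2^2 \le \mathbb{E}\|X\|_2^2$, which reduces the claim to bounding $\mathbb{E}_{\mathrm{init},\varsigma_k^+}[\|g^k_i(t)\|_2^2]$. Writing $g^k_i(t) = \delta^k_i(t)\nabla_{W_i}\widehat{Q}_i(\bm{z};W_i(t))$ and applying Cauchy--Schwarz together with Fact~\ref{thefactinpaper}(iv) pulls out the gradient factor $(mN)^{1-2p}$. The TD error $\delta^k_i(t)$ is then split via $(a - b - c)^2 \le 3(a^2 + b^2 + c^2)$ into the two $\widehat{Q}_i$ terms and the reward $r_i$; Fact~\ref{thefactinpaper}(v) and Assumption~\ref{theassumptionofreward} dispatch these three contributions, and Fact~\ref{thefactinpaper}(ii) converts the $\mathbb{E}_{\mathrm{init}}[\widehat{Q}_0^2]$ factor into $(mN)^{1-2p}d_1$. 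Collecting the pieces produces precisely $12R_0^2(mN)^{1-2p} + (48d_1 + 48B^2)(mN)^{2-4p}$.

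For part (ii), I would introduce intermediate gradients and telescope the difference $\bar{g}^k_i(t) - \bar{g}^k_{0,\mathrm{ave}}(t)$ into three increments: (a) the nonlinear-vs-linearized gap $\bar{g}^k_i(t) - \bar{g}^k_{0,i}(t)$, handled by Lemma~\ref{Lemma1to3}(iii) after expanding the Bellman operator; (b) the parameter swap $W_i(t)\to\overline{W}(t)$ inside the linearized gradient and TD error, controlled by the consensus bound in Lemma~\ref{lemmaofaverageconsensus} together with the indicator-mismatch estimate in Lemma~\ref{Lemma1to3}(i)--(ii); and (c) the reward swap $r_i\to\bar{r}$, which produces a residual of the form $(1-\gamma)^2(r_i-\bar{r})^2\,\|\nabla\widehat{Q}_0\|_2^2 \le 4R_0^2(mN)^{1-2p}$ after taking expectations. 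Squaring the telescoped sum and applying $(a_1+a_2+a_3)^2\le 3(a_1^2+a_2^2+a_3^2)$, the three squared pieces combine into $\mathcal{O}(B^3(mN)^{1-2p})$, and it is precisely in piece (c) that the condition $p>1/2$ is invoked so that this otherwise constant reward-mismatch term decays with the overparameterization.

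For part (iii), I would start from $\|\bar{g}^k_{\mathrm{ave}}(t)\|_2^2 \le 2\|\bar{g}^k_{\mathrm{ave}}(t) - \bar{g}^k_{0,\mathrm{ave}}(t)\|_2^2 + 2\|\bar{g}^k_{0,\mathrm{ave}}(t)\|_2^2$. The first term is dispatched by Jensen applied to part (ii), since $\bar{g}^k_{\mathrm{ave}}(t) - \bar{g}^k_{0,\mathrm{ave}}(t) = (1/N)\sum_i (\bar{g}^k_i(t) - \bar{g}^k_{0,\mathrm{ave}}(t))$. For the second, I exploit the fact that $\widehat{Q}_0$ is affine in its parameter, so $\nabla_{W'}\widehat{Q}_0(\bm{z};W')$ is parameter-free; consequently,
\begin{align*}
\bar{g}^k_{0,\mathrm{ave}}(t) - \bar{g}^{k*}_{0,\mathrm{ave}} = \mathbb{E}_{\varsigma^+_k}\!\bigl[\bigl(\Delta(\bm{z}) - \gamma\Delta(\bm{z}')\bigr)\nabla\widehat{Q}_0(\bm{z})\bigr],
\end{align*}
with $\Delta(\bm{z}) = \widehat{Q}_0(\bm{z};\overline{W}(t)) - \widehat{Q}_0(\bm{z};W^*_k)$. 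The approximate-stationarity inequality~(\ref{theinequalityofstationarypoint}) applied with $\overline{W}(t)\in S^W_B$ lets me discard the $\bar{g}^{k*}_{0,\mathrm{ave}}$ piece up to a controllable error; Cauchy--Schwarz together with $\mathbb{E}[\|\nabla\widehat{Q}_0\|_2^2] \le (mN)^{1-2p} \le 1$ (for $p>1/2$ and sufficiently large $mN$), plus the marginal identity between $\bm{z}$ and $\bm{z}'$ under the stationary distribution $\varsigma_k$, yield the $12\,\mathbb{E}[\Delta(\bm{z})^2]$ main term. Residual contributions from $\mathbb{E}[(\delta^{k*}_{0,\mathrm{ave}})^2]$, bounded via Facts~\ref{thefactinpaper}(ii) and (vi) combined with Assumption~\ref{theassumptionofreward}, feed into the $\mathcal{O}(B^3(mN)^{1-2p})$ remainder.

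The main obstacle will be the bookkeeping in part (ii): isolating the reward mismatch $r_i-\bar{r}$ from the many intertwined linearization and consensus errors, and verifying that pairing it with the $(mN)^{1-2p}$ gradient norm causes it to vanish in the overparameterized limit. This is precisely where the width-scaling regime $\tfrac{1}{2}<p<1$ highlighted in the remark following~(\ref{two-layerneuralnetworkofQfunction}) does its critical work, preventing an otherwise constant residual that would spoil the subsequent global convergence argument.
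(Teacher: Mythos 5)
Your overall strategy matches the paper's: part (i) via a second-moment bound on $g^{k}_{i}(t)$ using Fact~\ref{thefactinpaper}(ii),(iv),(v) and Assumption~\ref{theassumptionofreward}; part (ii) by separating the TD-error mismatch (including the reward gap $r_{i}-\bar{r}$, which is exactly where $p>\tfrac12$ kills the otherwise constant $R_{0}^{2}$ term) from the gradient/indicator mismatch handled by Lemma~\ref{Lemma1to3}; part (iii) by inserting $\bar{g}^{k}_{0,\mathrm{ave}}(t)$ and $\bar{g}^{k*}_{0,\mathrm{ave}}$ and bounding the latter directly through $\mathbb{E}[(\delta^{k*}_{0,\mathrm{ave}})^{2}]$. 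Your variance-inequality shortcut in (i) is in fact slightly sharper than the paper's $2\|g\|^{2}+2\|\bar g\|^{2}$ split.

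The one step that would fail as described is your treatment of the parameter swap $W_{i}(t)\to\overline{W}(t)$ in part (ii) via the consensus bound of Lemma~\ref{lemmaofaverageconsensus}. That lemma controls only the first moment of $\|(C\otimes I_{dmN})W_{tot}(t)\|_{2}$ and its bound contains the term $\sqrt{md_{1}}N\lambda_{D}^{t}/\lambda$, which grows with $m$; after squaring and pairing with the gradient factor $(mN)^{1-2p}$ you would obtain something of order $m^{2-2p}$, which diverges for $p<1$ and is not uniform in $t$, so it cannot be absorbed into $\mathcal{O}(B^{3}(mN)^{1-2p})$. The correct control is purely deterministic: after the projection step both $W_{i}(t)$ and $\overline{W}(t)$ lie in $S^{W}_{B}$, hence $\|W_{i}(t)-\overline{W}(t)\|_{2}\leq 2B$, and by linearity of $\widehat{Q}_{0}$ together with Fact~\ref{thefactinpaper}(iii) the swap costs at most $4B^{2}(mN)^{1-2p}$ — no consensus argument is needed. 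Two smaller points: in (iii) the approximate-stationarity inequality (\ref{theinequalityofstationarypoint}) plays no role here (it is used in Lemma~\ref{StochasticDescentLemma}), and your nested two-way splits give the key coefficient $16$ rather than the stated $12$; the paper's single three-way split $\|a+b+c\|_{2}^{2}\leq 3(\|a\|_{2}^{2}+\|b\|_{2}^{2}+\|c\|_{2}^{2})$ combined with the factor $4(mN)^{1-2p}\leq 4$ is what produces $12$, which matters for the step-size condition $2\eta_{c,t}(1-\gamma)-24\eta_{c,t}^{2}$ in Lemma~\ref{StochasticDescentLemma}.
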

\par
Lemma~\ref{Lemma4-6} establishes the gradient analysis in the distribiuted critic step, and its proof can be found in Appendix~\ref{theproofofLemma4-6}.
\begin{lemma}\label{StochasticDescentLemma}
Suppose Assumptions~\ref{theassumptionofcommunication}-\ref{Regularity Condition} hold.
In the $k$-th iteration of policy in Algorithm~\ref{distributedneuralpolicygradientAlgorithm},
the average parameter $\overline{W}(t)$ satisfies
\begin{align}
&\mathbb{E}_{\mathrm{init}}[\|\overline{W}(t+1)-W^{*}_{k}\|^{2}_{2}]\notag\\
\leq&\mathbb{E}_{\mathrm{init}}[\|\overline{W}(t)-W^{*}_{k}\|^{2}_{2}]-\big(2\eta_{c,t}(1-\gamma)-24\eta_{c,t}^{2}\big)\cdot\notag\\
&\mathbb{E}_{\mathrm{init},\varsigma_{k}}\Big[\Big(\widehat{Q}_{0}\big(\bm{z};\overline{W}(t)\big)-\widehat{Q}_{0}(\bm{z};W^{*}_{k})\Big)^{2}\Big]\notag\\
&+\eta_{c,t}\mathcal{O}\big(B^{\frac{5}{2}}(mN)^{\frac{1}{2}-p}\big)+\eta^{2}_{c,t}\mathcal{O}\big(B^{3}(mN)^{1-2p}\big).\notag
\end{align}
\end{lemma}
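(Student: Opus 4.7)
\textbf{Proof plan for Lemma~\ref{StochasticDescentLemma}.}
The plan is to start from the projected, doubly--stochastic--mixed update in (\ref{thekeystepindistributedneuralTDalgorithm}) and convert it into a one--step descent inequality centered at the approximate stationary point $W^{*}_{k}$. First, I would use the convexity of $\|\cdot\|_{2}^{2}$ together with Jensen's inequality and the non--expansiveness of $P_{S^{W}_{B}}$ (legitimate because $W^{*}_{k}\in S^{W}_{B}$) to reduce the quantity of interest to the averaged pre--projection error,
\begin{align}
\|\overline{W}(t+1)-W^{*}_{k}\|_{2}^{2}\;\leq\;\tfrac{1}{N}\sum_{i=1}^{N}\|\widetilde{W}_{i}(t+1)-W^{*}_{k}\|_{2}^{2}.\notag
\end{align}
Expanding $\widetilde{W}_{i}(t+1)=\sum_{j}a_{ij}(t)W_{j}(t)-\eta_{c,t}g^{k}_{i}(t)$, applying Jensen to $\|\sum_{j}a_{ij}(t)(W_{j}(t)-W^{*}_{k})\|_{2}^{2}$, and using both the row-- and column--stochasticity of $A(t)$ (Assumption~\ref{theassumptionofcommunication}), I obtain a bound of the form
\begin{align}
\|\overline{W}(t+1)\!-\!W^{*}_{k}\|_{2}^{2}
\!\leq\!\|\overline{W}(t)\!-\!W^{*}_{k}\|_{2}^{2}
\!-\!2\eta_{c,t}\!\left\langle\overline{W}(t)\!-\!W^{*}_{k},g^{k}_{\mathrm{ave}}(t)\right\rangle
\!+\!\tfrac{\eta_{c,t}^{2}}{N}\!\sum_{i}\!\|g^{k}_{i}(t)\|_{2}^{2}
\!+\!\mathcal{R}_{\mathrm{cons}}(t),\notag
\end{align}
where $\mathcal{R}_{\mathrm{cons}}(t)$ collects the consensus--gap terms $\tfrac{1}{N}\sum_{j}\|W_{j}(t)-\overline{W}(t)\|_{2}^{2}$ and $\tfrac{2\eta_{c,t}}{N}\sum_{j}\langle W_{j}(t)-\overline{W}(t),\sum_{i}a_{ij}(t)g^{k}_{i}(t)\rangle$, both of which are controlled via Lemma~\ref{lemmaofaverageconsensus}.

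Next I would take conditional expectation over the sample at time $t$ to replace $g^{k}_{\mathrm{ave}}(t)$ by $\bar g^{k}_{\mathrm{ave}}(t)$ in the cross term, and then perform the three--term decomposition
\begin{align}
\langle\overline{W}(t)\!-\!W^{*}_{k},\bar g^{k}_{\mathrm{ave}}(t)\rangle
=\langle\overline{W}(t)\!-\!W^{*}_{k},\bar g^{k}_{0,\mathrm{ave}}(t)\!-\!\bar g^{k*}_{0,\mathrm{ave}}\rangle
+\langle\overline{W}(t)\!-\!W^{*}_{k},\bar g^{k*}_{0,\mathrm{ave}}\rangle
+\langle\overline{W}(t)\!-\!W^{*}_{k},\bar g^{k}_{\mathrm{ave}}(t)\!-\!\bar g^{k}_{0,\mathrm{ave}}(t)\rangle.\notag
\end{align}
The middle term is non--negative by the variational inequality (\ref{theinequalityofstationarypoint}) defining $W^{*}_{k}$, since $\overline{W}(t)\in S^{W}_{B}$. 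For the first term I would exploit the fact that at the fixed initial activation pattern $W'(0)$ the map $W\mapsto\widehat Q_{0}(\bm{z};W)$ is linear, so that
\begin{align}
\langle\overline{W}(t)-W^{*}_{k},\bar g^{k}_{0,\mathrm{ave}}(t)-\bar g^{k*}_{0,\mathrm{ave}}\rangle
\geq(1-\gamma)\,\mathbb{E}_{\varsigma_{k}}\!\left[\left(\widehat Q_{0}(\bm{z};\overline{W}(t))-\widehat Q_{0}(\bm{z};W^{*}_{k})\right)^{2}\right],\notag
\end{align}
where the $(1-\gamma)$ factor comes from a standard TD--monotonicity argument combining the identity $\widehat Q_{0}(\bm{z};W)-\widehat Q_{0}(\bm{z};W^{*}_{k})=\bm\psi(\bm{z};W'(0))^{\top}(W-W^{*}_{k})$, Cauchy--Schwarz, and the equal--marginal property of $(\bm{z},\bm{z}')\sim\varsigma^{+}_{k}$. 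The third term I would bound by Cauchy--Schwarz with $\|\overline{W}(t)-W^{*}_{k}\|_{2}=\mathcal{O}(B)$ (both parameters lie in $S^{W}_{B}$) multiplied by $\|\bar g^{k}_{\mathrm{ave}}(t)-\bar g^{k}_{0,\mathrm{ave}}(t)\|_{2}=\mathcal{O}(B^{3/2}(mN)^{1/2-p})$ from Lemma~\ref{Lemma4-6}(ii), yielding the $\eta_{c,t}\mathcal{O}(B^{5/2}(mN)^{1/2-p})$ contribution.

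Finally I would handle the quadratic term $\tfrac{\eta_{c,t}^{2}}{N}\sum_{i}\mathbb{E}\|g^{k}_{i}(t)\|_{2}^{2}$ by splitting $g^{k}_{i}(t)=\bar g^{k}_{i}(t)+(g^{k}_{i}(t)-\bar g^{k}_{i}(t))$: the variance portion is $\mathcal{O}((mN)^{1-2p})$ by Lemma~\ref{Lemma4-6}(i), while the population portion is bounded via Lemma~\ref{Lemma4-6}(iii) by $12\,\mathbb{E}_{\varsigma_{k}}[(\widehat Q_{0}(\bm{z};\overline{W}(t))-\widehat Q_{0}(\bm{z};W^{*}_{k}))^{2}]+\mathcal{O}(B^{3}(mN)^{1-2p})$. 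The factor $12$ then combines with the factor $2$ picked up when passing from individual $g^{k}_{i}(t)$ to $\bar g^{k}_{\mathrm{ave}}(t)$ via Young's inequality to produce the $24\eta_{c,t}^{2}$ coefficient, and the remaining residuals give the stated $\eta_{c,t}^{2}\mathcal{O}(B^{3}(mN)^{1-2p})$. Assembling all pieces yields the claimed inequality. The most delicate step is the bookkeeping in passing from $\tfrac{1}{N}\sum_{i}\|g^{k}_{i}(t)\|_{2}^{2}$ to a bound expressed in terms of the average quantity $\|\overline{W}(t)-W^{*}_{k}\|_{2}^{2}$: this requires invoking Lemma~\ref{Lemma4-6}(ii)--(iii) simultaneously so that both the heterogeneity of the $\bar g^{k}_{i}(t)$ across agents and the linearization mismatch are absorbed into the same $B^{3}(mN)^{1-2p}$ error and the same $12$--coefficient descent quantity, without producing a residual that survives after $m\to\infty$.
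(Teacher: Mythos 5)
Your core descent mechanism coincides with the paper's: both proofs insert the variational inequality (\ref{theinequalityofstationarypoint}) to discard the term $\langle\overline{W}(t)-W^{*}_{k},\bar g^{k*}_{0,\mathrm{ave}}\rangle$, decompose the remaining semigradient into $(g^{k}_{\mathrm{ave}}-\bar g^{k}_{\mathrm{ave}})+(\bar g^{k}_{\mathrm{ave}}-\bar g^{k}_{0,\mathrm{ave}})+(\bar g^{k}_{0,\mathrm{ave}}-\bar g^{k*}_{0,\mathrm{ave}})$, extract the $(1-\gamma)$ descent from the linearized TD identity and the equal-marginal property of $\varsigma^{+}_{k}$, and control the quadratic term via Lemma~\ref{Lemma4-6}(i) and (iii) to get the $24\eta_{c,t}^{2}$ coefficient. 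Your use of conditional expectation to kill the $(g^{k}_{\mathrm{ave}}-\bar g^{k}_{\mathrm{ave}})$ cross term is in fact slightly cleaner than the paper, which bounds that term crudely by $2B\,\mathbb{E}\|g^{k}_{\mathrm{ave}}-\bar g^{k}_{\mathrm{ave}}\|_{2}$ inside the $\eta_{c,t}\mathcal{O}(B^{5/2}(mN)^{1/2-p})$ budget.

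Where you genuinely diverge is the treatment of the mixing and the projection, and this is where your route has a gap. The paper never passes through $\frac{1}{N}\sum_{i}\|\widetilde W_{i}(t+1)-W^{*}_{k}\|_{2}^{2}$: it uses column stochasticity of $A(t)$ to obtain the \emph{exact} identity $\overline{W}(t+1)=\overline{W}(t)-\eta_{c,t}g^{k}_{\mathrm{ave}}(t)-e^{k}_{\mathrm{ave}}(t)$, where the average projection error satisfies $\|e^{k}_{i}(t)\|_{2}\leq\eta_{c,t}\|g^{k}_{i}(t)\|_{2}$, so no consensus-gap term ever enters this lemma. Your non-expansiveness-plus-Jensen route instead produces $\frac{1}{N}\sum_{j}\|W_{j}(t)-W^{*}_{k}\|_{2}^{2}=\|\overline{W}(t)-W^{*}_{k}\|_{2}^{2}+\frac{1}{N}\sum_{j}\|W_{j}(t)-\overline{W}(t)\|_{2}^{2}$, and the second summand cannot be absorbed as claimed: it enters with coefficient $1$ (not $\eta_{c,t}$ or $\eta_{c,t}^{2}$), Lemma~\ref{lemmaofaverageconsensus} bounds only the \emph{first} moment $\mathbb{E}_{\mathrm{init}}[\|(C\otimes I_{dmN})W_{tot}(t)\|_{2}]$ rather than the second moment you need, and its bound carries a transient $\lambda_{D}^{t}$ piece and extra powers of $N$ that do not fit the stated $\eta_{c,t}\mathcal{O}(B^{5/2}(mN)^{1/2-p})+\eta_{c,t}^{2}\mathcal{O}(B^{3}(mN)^{1-2p})$ form. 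The fix is simply to follow the paper's averaging identity, after which the projection error is handled by (\ref{theprojectionpropertyofe}) and the consensus lemma is not needed for this step at all.
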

\par
Lemma~\ref{StochasticDescentLemma} establishes the stochastic descent of the average parameter $\overline{W}(t)$.
The detailed proof can be found in Appendix~\ref{theproofofStochasticDescentLemma}.
Based on results in Lemmas~\ref{lemmaofaverageconsensus}-\ref{StochasticDescentLemma},
the following theorem can be established.
\begin{theorem}\label{ConvergenceofStochasticUpdate}
Suppose Assumptions~\ref{theassumptionofcommunication}-\ref{Regularity Condition} hold.
Let the learning rate $\eta_{c,t}=\frac{1-\gamma}{24\sqrt{t+1}}$,
the output $\widehat{Q}^{\bm{\pi}_{\bm{\theta}(k)}}_{i,\mathrm{out}}(\bm{z})$ in distributed critic step satisfies
\begin{align}
&\mathbb{E}_{\mathrm{init},\varsigma_{k}}\big[\big(\widehat{Q}^{\bm{\pi}_{\bm{\theta}(k)}}_{i,\mathrm{out}}(\bm{z})-\widehat{Q}_{0}(\bm{z};W^{*}_{k})\big)^{2}\big]\notag\\
\leq&\frac{48B^{2}}{(1-\gamma)\sqrt{T_{c}}}+\frac{1+\ln{T_{c}}}{12\sqrt{T_{c}}}\mathcal{O}\big(B^{3}(mN)^{1-2p}\big)\notag\\
&+\mathcal{O}\big(B^{3}(mN)^{\frac{1}{2}-p}\big).\label{theresultoftheorem1}
\end{align}
\end{theorem}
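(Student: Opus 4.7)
\textbf{Proof proposal for Theorem~\ref{ConvergenceofStochasticUpdate}.} The plan is to first control the deviation of the agent-specific output $\widehat{Q}^{\bm{\pi}_{\bm{\theta}(k)}}_{i,\mathrm{out}}$ from its ``centralized linearized'' counterpart, then bound the centralized linearized iterate via a standard Jensen-type averaging argument on the stochastic descent inequality in Lemma~\ref{StochasticDescentLemma}. Concretely, I would introduce the averaged output parameter $\overline{W}_{\mathrm{out}}=\frac{1}{\sum_{t=0}^{T_{c}-1}\eta_{c,t}}\sum_{t=0}^{T_{c}-1}\eta_{c,t}\overline{W}(t)$ and split
\[
\widehat{Q}_{i}(\bm{z};W_{i,\mathrm{out}})-\widehat{Q}_{0}(\bm{z};W^{*}_{k})=\underbrace{\widehat{Q}_{i}(\bm{z};W_{i,\mathrm{out}})-\widehat{Q}_{0}(\bm{z};W_{i,\mathrm{out}})}_{\text{linearization gap}}+\underbrace{\widehat{Q}_{0}(\bm{z};W_{i,\mathrm{out}})-\widehat{Q}_{0}(\bm{z};\overline{W}_{\mathrm{out}})}_{\text{consensus gap}}+\underbrace{\widehat{Q}_{0}(\bm{z};\overline{W}_{\mathrm{out}})-\widehat{Q}_{0}(\bm{z};W^{*}_{k})}_{\text{optimization gap}}.
\]
Squaring and using $(a+b+c)^{2}\leq 3a^{2}+3b^{2}+3c^{2}$ reduces the theorem to bounding these three pieces separately.

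For the linearization gap I would apply Lemma~\ref{Lemma1to3}(iii) directly, giving a term of order $B^{3}(mN)^{1/2-2p}$, which is absorbed in the $\mathcal{O}(B^{3}(mN)^{1/2-p})$ tail. For the consensus gap, the key observation is that $\widehat{Q}_{0}(\bm{z};\cdot)$ is linear in its parameter, so the gap is controlled by $\|W_{i,\mathrm{out}}-\overline{W}_{\mathrm{out}}\|_{2}$, which in turn is bounded via Jensen by $\frac{1}{\sum\eta_{c,t}}\sum_{t}\eta_{c,t}\|W_{i}(t)-\overline{W}(t)\|_{2}$, and then by Lemma~\ref{lemmaofaverageconsensus} combined with the cited fact $\sum_{t'=0}^{t-1}\lambda_{D}^{t-1-t'}/\sqrt{t'+1}\to 0$, so this contribution is dominated by $\mathcal{O}(B^{3}(mN)^{1/2-p})$ as well. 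The main quantitative work is in the optimization gap.

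For the optimization gap, I would telescope the descent inequality of Lemma~\ref{StochasticDescentLemma} from $t=0$ to $T_{c}-1$. With $\eta_{c,t}=(1-\gamma)/(24\sqrt{t+1})$, a direct computation shows that the coefficient $2\eta_{c,t}(1-\gamma)-24\eta_{c,t}^{2}$ in front of $\mathbb{E}_{\mathrm{init},\varsigma_{k}}[(\widehat{Q}_{0}(\bm{z};\overline{W}(t))-\widehat{Q}_{0}(\bm{z};W^{*}_{k}))^{2}]$ is lower bounded by $(1-\gamma)\eta_{c,t}$ for all $t\geq 0$, so summing yields
\[
(1-\gamma)\sum_{t=0}^{T_{c}-1}\eta_{c,t}\,\mathbb{E}_{\mathrm{init},\varsigma_{k}}\big[\big(\widehat{Q}_{0}(\bm{z};\overline{W}(t))-\widehat{Q}_{0}(\bm{z};W^{*}_{k})\big)^{2}\big]\leq\mathbb{E}_{\mathrm{init}}\big[\|\overline{W}(0)-W^{*}_{k}\|_{2}^{2}\big]+\mathcal{O}(B^{5/2}(mN)^{1/2-p})\sum_{t}\eta_{c,t}+\mathcal{O}(B^{3}(mN)^{1-2p})\sum_{t}\eta_{c,t}^{2}.
\]
The initial distance is $\mathcal{O}(B^{2})$ since both iterates lie in $S^{W}_{B}$, and the two learning-rate sums are of order $(1-\gamma)\sqrt{T_{c}}$ and $(1-\gamma)^{2}(1+\ln T_{c})$, respectively. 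Dividing by $(1-\gamma)\sum_{t}\eta_{c,t}\asymp (1-\gamma)^{2}\sqrt{T_{c}}$, applying Jensen's inequality in the form $(\widehat{Q}_{0}(\overline{W}_{\mathrm{out}})-\widehat{Q}_{0}(W^{*}_{k}))^{2}\leq\sum_{t}\eta_{c,t}(\widehat{Q}_{0}(\overline{W}(t))-\widehat{Q}_{0}(W^{*}_{k}))^{2}/\sum_{t}\eta_{c,t}$ (valid by linearity of $\widehat{Q}_{0}$ and convexity of $x\mapsto x^{2}$), and recombining with the linearization and consensus gaps, produces exactly the three-term bound in (\ref{theresultoftheorem1}).

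The main obstacle, I anticipate, is the careful bookkeeping of the different scales in $m$ and $N$ when merging the three bounds: the linearization gap contributes $(mN)^{1/2-2p}$, the consensus gap inherits the $\mathcal{O}(Bm^{1/2-p}N^{1-p})$ scaling from Lemma~\ref{lemmaofaverageconsensus}, and the optimization gap contributes both $(mN)^{1/2-p}$ and $(mN)^{1-2p}$. Showing that the consensus contribution is strictly dominated by the optimization terms, so that only the two scalings $(mN)^{1-2p}/\sqrt{T_{c}}$ and $(mN)^{1/2-p}$ survive in the final statement, relies on the choice $\tfrac12<p<1$ and on the sublinear growth of the consensus sum; this is where the exponent condition on $p$ is truly used.
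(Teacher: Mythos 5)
Your proposal is correct and follows essentially the same skeleton as the paper's proof: split off the linearization gap via Lemma~\ref{Lemma1to3}(iii), control the optimization gap by telescoping Lemma~\ref{StochasticDescentLemma} with the observation that $2\eta_{c,t}(1-\gamma)-24\eta_{c,t}^{2}\geq\eta_{c,t}(1-\gamma)$, use $\mathbb{E}_{\mathrm{init}}[\|\overline{W}(0)-W^{*}_{k}\|_{2}^{2}]=\mathcal{O}(B^{2})$ and the integral-test bounds on $\sum_{t}\eta_{c,t}$ and $\sum_{t}\eta_{c,t}^{2}$, and pass to $W_{i,\mathrm{out}}$ by Jensen, which is legitimate precisely because $\widehat{Q}_{0}(\bm{z};\cdot)$ is linear in its parameter. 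The one place you diverge is the consensus gap: you propose to invoke Lemma~\ref{lemmaofaverageconsensus}, whereas the paper dispenses with it entirely at this stage, using only the crude almost-sure bound $\|W_{i}(t)-\overline{W}(t)\|_{2}\leq 2B$ (both iterates lie in $S^{W}_{B}$) together with Fact~\ref{thefactinpaper}(iii), so that the consensus contribution is $8B^{2}(mN)^{1-2p}$ per iteration and is absorbed into the $\mathcal{O}\big(B^{3}(mN)^{\frac{1}{2}-p}\big)$ tail since $(mN)^{1-2p}\leq(mN)^{\frac{1}{2}-p}$ for $p>\frac{1}{2}$; Lemma~\ref{lemmaofaverageconsensus} is only needed for the asymptotic consensus discussion, not for Theorem~\ref{ConvergenceofStochasticUpdate}. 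Your route would give a sharper (in $T_{c}$) consensus term, but note a small technical wrinkle: Lemma~\ref{lemmaofaverageconsensus} bounds only the first moment $\mathbb{E}_{\mathrm{init}}[\|(C\otimes I_{dmN})W_{tot}(t)\|_{2}]$, while the theorem requires a second-moment (squared) bound; you would need to upgrade it, e.g.\ via $\mathbb{E}[X^{2}]\leq 2B\,\mathbb{E}[X]$ using the same almost-sure boundedness, or simply fall back on the paper's crude bound, which already suffices for the stated rate.
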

\begin{proof}
Recall the definition of $\widehat{Q}_{0}(\bm{z};W')$ in (\ref{theQ0xW}), we have
\begin{align}\label{keystepinConvergenceofPopulationUpdate}
&\mathbb{E}_{\mathrm{init},\varsigma_{k}}\Big[\Big(\widehat{Q}_{0}\big(\bm{z};W_{i}(t)\big)-\widehat{Q}_{0}\big(\bm{z};W^{*}_{k}\big)\Big)^{2}\Big]\notag\\
\leq&2\mathbb{E}_{\mathrm{init},\varsigma_{k}}\Big[\Big(\widehat{Q}_{0}\big(\bm{z};W_{i}(t)\big)-\widehat{Q}_{0}\big(\bm{z};\overline{W}(t)\big)\Big)^{2}\Big]\notag\\
&+2\mathbb{E}_{\mathrm{init},\varsigma_{k}}\Big[\Big(\widehat{Q}_{0}\big(\bm{z};\overline{W}(t)\big)-\widehat{Q}_{0}\big(\bm{z};W^{*}_{k}\big)\Big)^{2}\Big]\notag\\
\leq& 2\mathbb{E}_{\mathrm{init},\varsigma_{k}}\Big[\Big(\widehat{Q}_{0}\big(\bm{z};\overline{W}(t)\big)-\widehat{Q}_{0}\big(\bm{z};W^{*}_{k}\big)\Big)^{2}\Big]\notag\\
&+8B^{2}(Nm)^{1-2p},
\end{align}
where the last inequality uses (iii) of Fact~\ref{thefactinpaper} and the fact that $W_{i}(t),\overline{W}(t)\in S^{W}_{B}$.
According to Lemma \ref{StochasticDescentLemma}, we have
\begin{align}\label{changeresultofLemmaPopulationDescentLemma}
&\eta_{c,t}(1-\gamma)\mathbb{E}_{\mathrm{init},\varsigma_{k}}\Big[\Big(\widehat{Q}_{0}\big(\bm{z},\overline{W}(t)\big)-\widehat{Q}_{0}\big(\bm{z};W^{*}_{k}\big)\Big)^{2}\Big]\notag\\
\leq&\mathbb{E}_{\mathrm{init}}[\|\overline{W}(t)-W^{*}_{k}\|^{2}_{2}]-\mathbb{E}_{\mathrm{init}}[\|\overline{W}(t+1)-W^{*}_{k}\|^{2}_{2}]\notag\\
&+\eta_{c,t}\mathcal{O}\big(B^{\frac{5}{2}}(mN)^{\frac{1}{2}-p}\big)+\eta^{2}_{c,t}\mathcal{O}\big(B^{3}(mN)^{1-2p}\big).
\end{align}
By combining (\ref{keystepinConvergenceofPopulationUpdate}) and (\ref{changeresultofLemmaPopulationDescentLemma}), we have
\begin{align}\label{simpleConvergenceofPopulationUpdate}
&\frac{\eta_{c,t}(1-\gamma)}{2}\mathbb{E}_{\mathrm{init},\varsigma_{k}}\Big[\Big(\widehat{Q}_{0}\big(\bm{z};W_{i}(t)\big)-\widehat{Q}_{0}\big(\bm{z};W^{*}_{k}\big)\Big)^{2}\Big]\notag\\
\leq&\mathbb{E}_{\mathrm{init}}[\|\overline{W}(t)-W^{*}_{k}\|^{2}_{2}]-\mathbb{E}_{\mathrm{init}}[\|\overline{W}(t+1)-W^{*}_{k}\|^{2}_{2}]\notag\\
&+\eta_{c,t}\mathcal{O}\big(B^{\frac{5}{2}}(mN)^{\frac{1}{2}-p}+B^{2}(mN)^{1-2p}\big)\notag\\
&+\eta^{2}_{c,t}\mathcal{O}\big(B^{3}(mN)^{1-2p}\big).
\end{align}
Telescoping (\ref{simpleConvergenceofPopulationUpdate}) for $t=0,1,\cdots,T_{c}-1$, we have
\begin{align}\label{TelescopingsimpleConvergenceofPopulationUpdate}
&\frac{1-\gamma}{2}\sum_{t=0}^{T_{c}-1}\eta_{c,t}\mathbb{E}_{\mathrm{init},\varsigma_{k}}\Big[\Big(\widehat{Q}_{0}\big(\bm{z};W_{i}(t)\big)-\widehat{Q}_{0}\big(\bm{z};W^{*}_{k}\big)\Big)^{2}\Big]\notag\\
\leq&\mathbb{E}_{\mathrm{init}}[\|\overline{W}(0)-W^{*}_{k}\|^{2}_{2}]-\mathbb{E}_{\mathrm{init}}[\|\overline{W}(T_{c})-W^{*}_{k}\|^{2}_{2}]\notag\\
&+\sum_{t=0}^{T_{c}-1}\eta_{c,t}\mathcal{O}\big(B^{\frac{5}{2}}(mN)^{\frac{1}{2}-p}\big)\notag\\
&+\sum_{t=0}^{T_{c}-1}\eta^{2}_{c,t}\mathcal{O}\big(B^{3}(mN)^{1-2p}\big).
\end{align}
Dividing (\ref{TelescopingsimpleConvergenceofPopulationUpdate}) by $\frac{1-\gamma}{2}\sum_{t=0}^{T_{c}-1}\eta_{c,t}$ and using the Jensen's inequality, we can have
\begin{align}
&\mathbb{E}_{\mathrm{init},\varsigma_{k}}\big[\big(\widehat{Q}_{0}(\bm{z};W_{i,\mathrm{out}})-\widehat{Q}_{0}(\bm{z};W^{*}_{k})\big)^{2}\big]\notag\\
\leq&\frac{24\big(\mathbb{E}_{\mathrm{init}}[\|\overline{W}(0)-W^{*}_{k}\|^{2}_{2}]-\mathbb{E}_{\mathrm{init}}[\|\overline{W}(T_{c})-W^{*}_{k}\|^{2}_{2}]\big)}{(1-\gamma)^{2}\sqrt{T_{c}}}\notag\\
&+\frac{1+\ln{T_{c}}}{24\sqrt{T_{c}}}\mathcal{O}\big(B^{3}(mN)^{1-2p}\big)+\mathcal{O}\big(B^{\frac{5}{2}}(mN)^{\frac{1}{2}-p}\big),\label{themidresulrintheorem1}
\end{align}
where the inequality comes from the integral test yields in~\cite{Doan2019ICML} that
\begin{gather}
\frac{1-\gamma}{2}\sum_{t=0}^{T_{c}-1}\eta_{t}\geq\frac{(1-\gamma)^{2}\sqrt{T_{c}}}{24},\notag\\ \sum_{t=0}^{T_{c}-1}\eta^{2}_{t}\leq \frac{(1-\gamma)^{2}(1+\ln{T_{c}})}{576}.\notag
\end{gather}
Recall $\widehat{Q}^{\bm{\pi}_{\bm{\theta}(k)}}_{i,\mathrm{out}}(\bm{z})=\widehat{Q}_{i}(\bm{z};W_{i,\mathrm{out}})$ in (\ref{theQiout}), we further have
\begin{align}
&\mathbb{E}_{\mathrm{init},\varsigma_{k}}\big[\big(\widehat{Q}^{\bm{\pi}_{\bm{\theta}(k)}}_{i,\mathrm{out}}(\bm{z})-\widehat{Q}_{0}(\bm{z};W^{*}_{k})\big)^{2}\big]\notag\\
\leq&2\mathbb{E}_{\mathrm{init},\varsigma_{k}}\big[\big(\widehat{Q}_{i}(\bm{z};W_{i,\mathrm{out}})-\widehat{Q}_{0}(\bm{z};W_{i,\mathrm{out}})\big)^{2}\big]\notag\\
&+2\mathbb{E}_{\mathrm{init},\varsigma_{k}}\big[\big(\widehat{Q}_{0}(\bm{z};W_{i,\mathrm{out}})-\widehat{Q}_{0}(\bm{z};W^{*}_{k})\big)^{2}\big]\notag\\
\leq&\frac{48B^{2}}{(1-\gamma)\sqrt{T_{c}}}+\frac{1+\ln{T_{c}}}{12\sqrt{T_{c}}}\mathcal{O}\big(B^{3}(mN)^{1-2p}\big)\notag\\
&+\mathcal{O}\big(B^{3}(mN)^{\frac{1}{2}-p}\big),\notag
\end{align}
where the last inequality can be achieved by (\ref{themidresulrintheorem1}) and (iii) of Lemma~\ref{Lemma1to3}.
\end{proof}
\par
Theorem~\ref{ConvergenceofStochasticUpdate} indicates that $\widehat{Q}^{\bm{\pi}_{\bm{\theta}(k)}}_{i,\mathrm{out}}(\bm{z})$ converges to $\widehat{Q}_{0}(\bm{z};W^{*}_{k})$ as $T_{c}\rightarrow\infty$ and $m\rightarrow\infty$.
Based on Theorem~\ref{ConvergenceofStochasticUpdate},
we can further quantifies the distance between $\widehat{Q}^{\bm{\pi}_{\bm{\theta}(k)}}_{i,\mathrm{out}}(\bm{z})$ and $Q^{\bm{\pi}_{\bm{\theta}(k)}}(\bm{z})$ in the following corollary.
\begin{corollary}\label{ConvergenceofStochasticUpdatetoQpi}
Suppose Assumptions~\ref{theassumptionofcommunication}-\ref{Regularity Condition} hold.
Let the learning rate $\eta_{c,t}=\frac{1-\gamma}{24\sqrt{t+1}}$,
the output $\widehat{Q}^{\bm{\pi}_{\bm{\theta}(k)}}_{i,\mathrm{out}}(\bm{z})$ of each agent $i$ satisfies
\begin{align}
&\mathbb{E}_{\mathrm{init},\varsigma_{k}}\big[\big(\widehat{Q}^{\bm{\pi}_{\bm{\theta}(k)}}_{i,\mathrm{out}}(\bm{z})-Q^{\bm{\pi}_{\bm{\theta}(k)}}(\bm{z})\big)^{2}\big]\notag\\
\leq&\frac{96B^{2}}{(1-\gamma)\sqrt{T_{c}}}+\frac{1+\ln{T_{c}}}{6\sqrt{T_{c}}}\mathcal{O}\big(B^{3}(mN)^{1-2p}\big)\notag\\
&+\frac{2\mathbb{E}_{\varsigma_{k}}\big[\big(P_{\mathcal{F}_{B,m}}Q^{\bm{\pi}_{\bm{\theta}(k)}}(\bm{z})-Q^{\bm{\pi}_{\bm{\theta}(k)}}(\bm{z})\big)^{2}\big]}{(1-\gamma)^{2}}\notag\\
&+\mathcal{O}\big(B^{3}(mN)^{\frac{1}{2}-p}\big).\label{theinequalityofcorollary2}
\end{align}
\end{corollary}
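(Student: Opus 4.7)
The plan is to split the error $\widehat{Q}^{\bm{\pi}_{\bm{\theta}(k)}}_{i,\mathrm{out}}(\bm{z})-Q^{\bm{\pi}_{\bm{\theta}(k)}}(\bm{z})$ via the intermediate pivot $\widehat{Q}_{0}(\bm{z};W^{*}_{k})$ and apply the elementary inequality $(a+b)^{2}\leq 2a^{2}+2b^{2}$. The first piece $\widehat{Q}^{\bm{\pi}_{\bm{\theta}(k)}}_{i,\mathrm{out}}(\bm{z})-\widehat{Q}_{0}(\bm{z};W^{*}_{k})$ is already controlled by Theorem~\ref{ConvergenceofStochasticUpdate}, so after multiplying by two it contributes exactly the terms $\frac{96B^{2}}{(1-\gamma)\sqrt{T_{c}}}$, $\frac{1+\ln T_{c}}{6\sqrt{T_{c}}}\mathcal{O}(B^{3}(mN)^{1-2p})$, and $\mathcal{O}(B^{3}(mN)^{\frac{1}{2}-p})$ that appear in (\ref{theinequalityofcorollary2}). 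What remains is to bound the deterministic approximation error $\widehat{Q}_{0}(\cdot;W^{*}_{k})-Q^{\bm{\pi}_{\bm{\theta}(k)}}$ in the $\varsigma_{k}$-weighted norm.

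For this second piece, I would exploit the MSPBE characterization recorded in (\ref{theMSBEofmathbfWtwo}): since $\widehat{Q}_{0}(\cdot;W^{*}_{k})$ is the global optimum of that projected Bellman problem, it satisfies the fixed-point identity $\widehat{Q}_{0}(\cdot;W^{*}_{k})=P_{\mathcal{F}_{B,m}}\mathcal{T}^{\bm{\pi}_{\bm{\theta}(k)}}\widehat{Q}_{0}(\cdot;W^{*}_{k})$. Combining this with the Bellman fixed-point identity $Q^{\bm{\pi}_{\bm{\theta}(k)}}=\mathcal{T}^{\bm{\pi}_{\bm{\theta}(k)}}Q^{\bm{\pi}_{\bm{\theta}(k)}}$ from (\ref{thefixedpoint}) and inserting the pivot $P_{\mathcal{F}_{B,m}}Q^{\bm{\pi}_{\bm{\theta}(k)}}$ yields the decomposition
\begin{align*}
\widehat{Q}_{0}(\cdot;W^{*}_{k})-Q^{\bm{\pi}_{\bm{\theta}(k)}}
&=\bigl(P_{\mathcal{F}_{B,m}}\mathcal{T}^{\bm{\pi}_{\bm{\theta}(k)}}\widehat{Q}_{0}(\cdot;W^{*}_{k})-P_{\mathcal{F}_{B,m}}\mathcal{T}^{\bm{\pi}_{\bm{\theta}(k)}}Q^{\bm{\pi}_{\bm{\theta}(k)}}\bigr)\\
&\quad+\bigl(P_{\mathcal{F}_{B,m}}Q^{\bm{\pi}_{\bm{\theta}(k)}}-Q^{\bm{\pi}_{\bm{\theta}(k)}}\bigr).
\end{align*}
Taking $\|\cdot\|_{\varsigma_{k}}$, the triangle inequality together with the nonexpansiveness of $P_{\mathcal{F}_{B,m}}$ and the $\gamma$-contraction of $\mathcal{T}^{\bm{\pi}_{\bm{\theta}(k)}}$ in the stationary-distribution-weighted norm then gives
\begin{equation*}
\|\widehat{Q}_{0}(\cdot;W^{*}_{k})-Q^{\bm{\pi}_{\bm{\theta}(k)}}\|_{\varsigma_{k}}\leq\frac{1}{1-\gamma}\|P_{\mathcal{F}_{B,m}}Q^{\bm{\pi}_{\bm{\theta}(k)}}-Q^{\bm{\pi}_{\bm{\theta}(k)}}\|_{\varsigma_{k}}.
\end{equation*}
Squaring this bound and multiplying by two yields the remaining term $\frac{2}{(1-\gamma)^{2}}\mathbb{E}_{\varsigma_{k}}\bigl[(P_{\mathcal{F}_{B,m}}Q^{\bm{\pi}_{\bm{\theta}(k)}}-Q^{\bm{\pi}_{\bm{\theta}(k)}})^{2}\bigr]$ in (\ref{theinequalityofcorollary2}); adding the two contributions finishes the proof.

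The main obstacle is justifying the $\gamma$-contraction of $\mathcal{T}^{\bm{\pi}_{\bm{\theta}(k)}}$ in the $\|\cdot\|_{\varsigma_{k}}$ norm, since this property is not automatic for an arbitrary distribution. The key observation is that $\varsigma_{k}$ is precisely the stationary state-action distribution induced by $\bm{\pi}_{\bm{\theta}(k)}$ together with the Markov kernel $\mathcal{P}$, so drawing $(\bm{z},\bm{z}')$ with $\bm{z}\sim\varsigma_{k}$, $\bm{s}'\sim\mathcal{P}(\cdot|\bm{z})$, and $\bm{a}'\sim\bm{\pi}_{\bm{\theta}(k)}(\cdot|\bm{s}')$ yields $\bm{z}'\sim\varsigma_{k}$ as well. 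Applying Jensen's inequality to the conditional expectation defining $[\mathcal{T}^{\bm{\pi}_{\bm{\theta}(k)}}Q](\bm{z})$ and invoking this invariance then gives $\|\mathcal{T}^{\bm{\pi}_{\bm{\theta}(k)}}Q_{1}-\mathcal{T}^{\bm{\pi}_{\bm{\theta}(k)}}Q_{2}\|_{\varsigma_{k}}\leq\gamma\|Q_{1}-Q_{2}\|_{\varsigma_{k}}$. Once this is in hand, everything else is straightforward bookkeeping on top of Theorem~\ref{ConvergenceofStochasticUpdate}.
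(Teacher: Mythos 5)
Your proposal is correct and follows essentially the same route as the paper: split via the pivot $\widehat{Q}_{0}(\cdot;W^{*}_{k})$, invoke Theorem~\ref{ConvergenceofStochasticUpdate} for the first piece, and bound the second piece by $\frac{1}{1-\gamma}\|P_{\mathcal{F}_{B,m}}Q^{\bm{\pi}_{\bm{\theta}(k)}}-Q^{\bm{\pi}_{\bm{\theta}(k)}}\|_{\varsigma_{k}}$ using the fixed-point identities and the $\gamma$-contraction of $P_{\mathcal{F}_{B,m}}\mathcal{T}^{\bm{\pi}_{\bm{\theta}(k)}}$. The only cosmetic difference is that you justify the contraction directly via stationarity of $\varsigma_{k}$ and Jensen's inequality, where the paper cites Lemma~4.2 of~\cite{Cai2019}; the constants work out identically.
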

\begin{proof}
Similar to the definition of $\mathcal{T}^{\bm{\pi}_{\bm{\theta}}}(\cdot)$ in (\ref{theBellmanevaluationoperatorofpiTheta}),
define $\mathcal{T}^{\bm{\pi}_{\bm{\theta}(k)}}(\cdot)$ as the Bellman evaluation operator of joint policy $\bm{\pi}_{\bm{\theta}(k)}$.
By the property of approximate stationary $W^{*}_{k}$
in Section~\ref{thesectionofApproximatestationary}, we have
\begin{align}
&\|\widehat{Q}_{0}(W^{*}_{k})-Q^{\bm{\pi}_{\bm{\theta}(k)}}\|_{\varsigma_{k}}\notag\\
\leq&\|\widehat{Q}_{0}(W^{*}_{k})-P_{\mathcal{F}_{B,m}}Q^{\bm{\pi}_{\bm{\theta}(k)}}\|_{\varsigma_{k}}\!+\!\|P_{\mathcal{F}_{B,m}}Q^{\bm{\pi}_{\bm{\theta}(k)}}-Q^{\bm{\pi}_{\bm{\theta}(k)}}\|_{\varsigma_{k}}\notag\\
\leq&\|P_{\mathcal{F}_{B,m}}\mathcal{T}^{\bm{\pi}_{\bm{\theta}(k)}}\widehat{Q}_{0}(W^{*}_{k})-P_{\mathcal{F}_{B,m}}\mathcal{T}^{\bm{\pi}_{\bm{\theta}(k)}}Q^{\bm{\pi}_{\bm{\theta}(k)}}\|_{\varsigma_{k}}\notag\\
&+\|P_{\mathcal{F}_{B,m}}Q^{\bm{\pi}_{\bm{\theta}(k)}}-Q^{\bm{\pi}_{\bm{\theta}(k)}}\|_{\varsigma_{k}}\notag\\
\leq&\gamma\|\widehat{Q}_{0}(W^{*}_{k})\!-\!Q^{\bm{\pi}_{\bm{\theta}(k)}}\|_{\varsigma_{k}}\!+\!\|P_{\mathcal{F}_{B,m}}Q^{\bm{\pi}_{\bm{\theta}(k)}}\!-\!Q^{\bm{\pi}_{\bm{\theta}(k)}}\|_{\varsigma_{k}},\label{theinequalityinCorollary2}
\end{align}
where the second inequality follows from the fact that $\mathcal{T}^{\bm{\pi}_{\bm{\theta}(k)}}Q^{\bm{\pi}_{\bm{\theta}(k)}}=Q^{\bm{\pi}_{\bm{\theta}(k)}}$
and the last inequality is derived from Lemma 4.2 in~\cite{Cai2019}, which establishes that $P_{\mathcal{F}_{B,m}}\mathcal{T}^{\bm{\pi}_{\bm{\theta}(k)}}$ is a $\gamma$-contraction.
Based on (\ref{theinequalityinCorollary2}), we further have
\begin{align}\label{inequalityofdistance}
&\mathbb{E}_{\varsigma_{k}}[|\widehat{Q}_{0}(\bm{z};W^{*}_{k})-Q^{\bm{\pi}_{\bm{\theta}(k)}}(\bm{z})|^{2}]\notag\\
\leq&\frac{1}{(1-\gamma)^{2}}\mathbb{E}_{\varsigma_{k}}[|P_{\mathcal{F}_{B,m}}Q^{\bm{\pi}_{\bm{\theta}(k)}}(\bm{z})-Q^{\bm{\pi}_{\bm{\theta}(k)}}(\bm{z})|^{2}].
\end{align}
Since
\begin{align}
&\mathbb{E}_{\mathrm{init},\varsigma_{k}}\big[\big(\widehat{Q}^{\bm{\pi}_{\bm{\theta}(k)}}_{i,\mathrm{out}}(\bm{z})-Q^{\bm{\pi}_{\bm{\theta}(k)}}(\bm{z})\big)^{2}\big]\notag\\
\leq&2\mathbb{E}_{\mathrm{init},\varsigma_{k}}\big[\big(\widehat{Q}^{\bm{\pi}_{\bm{\theta}(k)}}_{i,\mathrm{out}}(\bm{z})-\widehat{Q}_{0}(\bm{z};W^{*}_{k})\big)^{2}\big]\notag\\
&+2\mathbb{E}_{\mathrm{init},\varsigma_{k}}\big[\big(\widehat{Q}_{0}(\bm{z};W^{*}_{k})-Q^{\bm{\pi}_{\bm{\theta}(k)}}(\bm{z})\big)^{2}\big],\label{beforethecorollary1}
\end{align}
substituting (\ref{inequalityofdistance}) and (\ref{theresultoftheorem1}) into (\ref{beforethecorollary1}), we can complete the proof.
\end{proof}
\par
Corollary~\ref{ConvergenceofStochasticUpdatetoQpi} establishes the convergence result of $\widehat{Q}^{\bm{\pi}_{\bm{\theta}(k)}}_{i,\mathrm{out}}$ to the global $Q$-function $Q^{\bm{\pi}_{\bm{\theta}(k)}}(\bm{z})$.
Define a function class
\begin{align}\label{functionclassinfty}
\mathcal{F}_{B,\infty}=\Big\{f(\bm{z})=&f_{0}(\bm{z})+\int\mathds{1}\{W^{\top}x>0\}x^{\top}\iota(W)d\mu(W)\notag\\
&:\|\iota(W)\|_{\infty}\leq B/\sqrt{d}\Big\},
\end{align}
where $f_{0}(\bm{z})=\lim_{m\rightarrow\infty}\widehat{Q}\big(\bm{z};W'(0)\big)$, $\mu(\cdot)$ is the density function of distribution $N(0,I_{d}/d)$, and $\iota(\cdot)$ together with $f_{0}(\cdot)$ parameterizes the element of $\mathcal{F}_{B,\infty}$.
If the global $Q$-function $Q^{\bm{\pi}_{\bm{\theta}(k)}}$ satisfies a more restrictive assumption that $Q^{\bm{\pi}_{\bm{\theta}(k)}}\in\mathcal{F}_{B,\infty}$,
then we can further have that $\widehat{Q}^{\bm{\pi}_{\bm{\theta}(k)}}_{i,\mathrm{out}}$ converges to $Q^{\bm{\pi}_{\bm{\theta}(k)}}$ as $m\rightarrow\infty$ and $T_{c}\rightarrow\infty$.
\section{Global Convergence of distributed neural policy gradient algorithm}\label{theconvergenceofoverallalgorithm}
Before analyzing the global convergence, we introduce the stationarity point of the objective function $J(\bm{\pi}_{\bm{\theta}})$ as below.
\begin{definition}\label{thedefinitionofstationarypoint}
In the NMARL problem, $\widehat{\bm{\theta}}$ is called as the stationary point of $J(\bm{\pi}_{\bm{\theta}})$ in (\ref{thedefinelongtermreturninMARL}), if
\begin{align}\label{thestationarypointinNMARLproblem}
\nabla_{\bm{\theta}}J(\bm{\pi}_{\widehat{\bm{\theta}}})^{\top}(\bm{\theta}-\widehat{\bm{\theta}})\leq0,\forall\bm{\theta}\in S^{\theta}_{B}.
\end{align}
\end{definition}
Based on the stationary point in Definition~\ref{thedefinitionofstationarypoint}, we establish the stationarity convergence of Algorithm~\ref{distributedneuralpolicygradientAlgorithm}
\subsection{Stationarity convergence of Algorithm~\ref{distributedneuralpolicygradientAlgorithm}}
For joint policy $\bm{\pi}_{\bm{\theta}(k)}$ in the $k$-th iteration of policy, define $\nu_{k}$ and $\sigma_{k}$ are the state visitation and the state-action visitation measures, respectively.
Moreover, define $\xi_{i,k}=\widehat{\nabla}_{\theta_{i}}J(\bm{\pi}_{\bm{\theta}(k)})-\mathbb{E}_{\sigma_{k}}[\widehat{\nabla}_{\theta_{i}}J(\bm{\pi}_{\bm{\theta}(k)})]$ and $\xi_{k}=(\xi^{\top}_{1,k},\cdots,\xi^{\top}_{N,k})^{\top}$.
Some useful assumptions of regularity condition of the NMARL problem are introduced below.
\begin{assumption}\label{LipschitzContinuousPolicyGradient}
There exists constants $L>0$ and $\sigma_{\xi}>0$ such that $\nabla_{\bm{\theta}}J(\bm{\pi}_{\bm{\theta}})$ is $L$-Lipschitz continuous and $\mathbb{E}_{\sigma_{k}}[\|\xi_{i,k}\|^{2}_{2}]\leq\frac{\sigma^{2}_{\xi}}{|\mathcal{B}|N}$ for all $i\in\mathcal{N}$ and $k=0,\cdots,K-1$.
\end{assumption}
\par
Assumption~\ref{LipschitzContinuousPolicyGradient} imposes the regularity conditions on the smoothness of the objectice function $J(\bm{\pi}_{\bm{\theta}})$ with respect to $\bm{\theta}$ and the upper bound on variance of $\widehat{\nabla}_{\theta_{i}}J(\bm{\pi}_{\bm{\theta}(k)})$ for all $i\in\mathcal{N}$ during the iteration process.
\begin{assumption}\label{RegularityConditiononsigmaandvarsigma}
For joint policy sequenece $\{\bm{\pi}_{\bm{\theta}(k)}\}_{k=0}^{K-1}$, there exists a constant $\kappa_{a}>0$, such that
\begin{align}\label{theinequalityogRegularityConditiononsigmaandvarsigma}
\Big\{\mathbb{E}_{\varsigma_{k}}\big[\big(\mathrm{d}\sigma_{k}(\bm{z})/\mathrm{d}\varsigma_{k}(\bm{z})\big)^{2}\big]\Big\}^{1/2}\leq\kappa_{a},\forall k=0,\cdots,K-1,
\end{align}
where $\mathrm{d}\sigma_{k}/\mathrm{d}\varsigma_{k}$ is the Radon-Nikodym derivative of $\sigma_{k}$ with respect to $\varsigma_{k}$.
\end{assumption}
\par
Assumption~\ref{RegularityConditiononsigmaandvarsigma} gives a regularity condition on the discrepancy between the stateaction visitation measure and the stationary state-action distribution corresponding to the policy during the iteration process.
Under Assumptions~\ref{LipschitzContinuousPolicyGradient}-\ref{RegularityConditiononsigmaandvarsigma}, we have the following results.
\begin{lemma}\label{thefirstlemmaoftheequalityofLipschitzbefore}
Suppose Assumption~\ref{RegularityConditiononsigmaandvarsigma} holds.
For the joint policy $\bm{\pi}_{\bm{\theta}(k)}$, we have that
\begin{align}\label{thepolicygradientapproximationerror}
&\mathbb{E}[\|\nabla_{\theta_{i}}J(\bm{\pi}_{\bm{\theta}(k)})-\widehat{\nabla}_{\theta_{i}}J(\bm{\pi}_{\bm{\theta}(k)})\|^{2}_{2}]\notag\\
\leq&2\mathbb{E}[\|\xi_{i,k}\|^{2}_{2}]\!+8\kappa^{2}_{a}(mN)^{1-2p}\mathbb{E}[\|Q^{\bm{\pi}_{\bm{\theta}(k)}}\!-\widehat{Q}_{i,\mathrm{out}}^{\bm{\pi}_{\bm{\theta}(k)}}\|^{2}_{\varsigma_{k}}].
\end{align}
\end{lemma}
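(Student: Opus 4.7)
The plan is to decompose the error into a stochastic (zero-mean) part and a bias (approximation) part, then bound each separately. I would begin by writing
\begin{align}
\nabla_{\theta_{i}}J(\bm{\pi}_{\bm{\theta}(k)})-\widehat{\nabla}_{\theta_{i}}J(\bm{\pi}_{\bm{\theta}(k)})
=\Big(\nabla_{\theta_{i}}J(\bm{\pi}_{\bm{\theta}(k)})-\mathbb{E}_{\sigma_{k}}[\widehat{\nabla}_{\theta_{i}}J(\bm{\pi}_{\bm{\theta}(k)})]\Big)-\xi_{i,k},\notag
\end{align}
and invoking $\|a+b\|_{2}^{2}\leq2\|a\|_{2}^{2}+2\|b\|_{2}^{2}$ so that the $\xi_{i,k}$ term already delivers the first summand $2\,\mathbb{E}[\|\xi_{i,k}\|_{2}^{2}]$ in the target bound. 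What remains is to control the bias term.

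Next I would substitute the policy gradient theorem (Lemma~\ref{thelemmaofpolicygradienttheorem}) together with the expression for $\widehat{\nabla}_{\theta_{i}}J(\bm{\pi}_{\bm{\theta}(k)})$ in (\ref{thesetimationofJpiTheta}) to rewrite the bias as
\begin{align}
\nabla_{\theta_{i}}J(\bm{\pi}_{\bm{\theta}(k)})-\mathbb{E}_{\sigma_{k}}[\widehat{\nabla}_{\theta_{i}}J(\bm{\pi}_{\bm{\theta}(k)})]
=\mathbb{E}_{\sigma_{k}}\big[(Q^{\bm{\pi}_{\bm{\theta}(k)}}(\bm{z})-\widehat{Q}^{\bm{\pi}_{\bm{\theta}(k)}}_{i,\mathrm{out}}(\bm{z}))\nabla_{\theta_{i}}\log\pi_{i}(a_{i}|\bm{s};\theta_{i}(k))\big].\notag
\end{align}
Then, using Proposition~\ref{ThepolicygradienttheoremforMARL} (so that $\nabla_{\theta_i}\log\pi_i=\overline{\psi}_i$) together with the deterministic bound $\|\psi_{i,r}\|_{2}\leq(mN)^{-p}$ from (\ref{thenormofpsi}), I would derive $\|\overline{\psi}_{i}(\bm{s},a_{i};\theta_{i})\|_{2}^{2}\leq 4m(mN)^{-2p}\leq 4(mN)^{1-2p}$ (the last inequality holds for $N\geq 1$). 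This supplies the crucial factor $(mN)^{1-2p}$ appearing in the target bound.

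The main technical step is the simultaneous application of Cauchy–Schwarz and a change of measure from $\sigma_{k}$ to $\varsigma_{k}$ via the Radon–Nikodym derivative. Writing $\bm{z}\sim\sigma_{k}$ as a reweighted expectation under $\varsigma_{k}$, I would estimate
\begin{align}
&\Big\|\mathbb{E}_{\sigma_{k}}\big[(Q^{\bm{\pi}_{\bm{\theta}(k)}}-\widehat{Q}^{\bm{\pi}_{\bm{\theta}(k)}}_{i,\mathrm{out}})\overline{\psi}_{i}\big]\Big\|_{2}^{2}
\leq\Big(\mathbb{E}_{\varsigma_{k}}\big[|Q^{\bm{\pi}_{\bm{\theta}(k)}}-\widehat{Q}^{\bm{\pi}_{\bm{\theta}(k)}}_{i,\mathrm{out}}|\cdot\|\overline{\psi}_{i}\|_{2}\cdot\mathrm{d}\sigma_{k}/\mathrm{d}\varsigma_{k}\big]\Big)^{2}\notag\\
&\leq\mathbb{E}_{\varsigma_{k}}\big[(Q^{\bm{\pi}_{\bm{\theta}(k)}}-\widehat{Q}^{\bm{\pi}_{\bm{\theta}(k)}}_{i,\mathrm{out}})^{2}\big]\cdot\mathbb{E}_{\varsigma_{k}}\big[\|\overline{\psi}_{i}\|_{2}^{2}(\mathrm{d}\sigma_{k}/\mathrm{d}\varsigma_{k})^{2}\big],\notag
\end{align}
and then pull the deterministic bound $\|\overline{\psi}_{i}\|_{2}^{2}\leq4(mN)^{1-2p}$ outside and invoke Assumption~\ref{RegularityConditiononsigmaandvarsigma} to bound the second factor by $4(mN)^{1-2p}\kappa_{a}^{2}$. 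Multiplying by the factor of $2$ from the $\|a+b\|^{2}$ split yields the $8\kappa_{a}^{2}(mN)^{1-2p}$ coefficient in the desired inequality.

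The only mildly delicate part is deciding \emph{where} to pair Cauchy–Schwarz with the change of measure: if one first applies Cauchy–Schwarz under $\sigma_{k}$ and then tries to reweight, one ends up needing a fourth moment of $(Q^{\bm{\pi}_{\bm{\theta}(k)}}-\widehat{Q}_{i,\mathrm{out}}^{\bm{\pi}_{\bm{\theta}(k)}})$ under $\varsigma_{k}$, which Assumption~\ref{RegularityConditiononsigmaandvarsigma} does not control. Performing the change of measure \emph{before} Cauchy–Schwarz, and exploiting the fact that $\|\overline{\psi}_{i}\|_{2}$ is a deterministic constant that can be pulled out of the $\varsigma_{k}$-expectation, circumvents this issue and gives the stated second-moment bound under $\varsigma_{k}$ with the correct constants. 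Taking total expectation $\mathbb{E}[\cdot]$ over training randomness at the end completes the argument.
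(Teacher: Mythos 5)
Your proposal is correct and follows essentially the same route as the paper's proof: the same variance--bias split via $\xi_{i,k}$, the same rewriting of the bias through Proposition~\ref{ThepolicygradienttheoremforMARL}, the same deterministic bound $\|\overline{\psi}_{i}\|_{2}\leq 2(mN)^{\frac{1}{2}-p}$, and the same Cauchy--Schwarz/change-of-measure step under Assumption~\ref{RegularityConditiononsigmaandvarsigma}, yielding identical constants. The only cosmetic difference is that the paper pulls the deterministic $\|\overline{\psi}_{i}\|_{2}$ bound out (via Jensen) \emph{before} applying Cauchy--Schwarz to $\mathbb{E}_{\sigma_{k}}[|Q^{\bm{\pi}_{\bm{\theta}(k)}}-\widehat{Q}^{\bm{\pi}_{\bm{\theta}(k)}}_{i,\mathrm{out}}|]$, whereas you keep it inside and extract it afterward; both orderings avoid the fourth-moment issue you flag.
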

\par
Lemma~\ref{thefirstlemmaoftheequalityofLipschitzbefore} establishes the policy gradient approximation error in the decentralized actor step, and its proof is presented in Appendix~\ref{theproofofthefirstlemmaoftheequalityofLipschitzbefore}. 
\par
Define $\delta^{\bm{\theta}}_{k}=\big((\delta^{\bm{\theta}}_{1,k})^{\top},\cdots,(\delta^{\bm{\theta}}_{N,k})^{\top}\big)^{\top}$ with the $i$-th element obeying
\begin{align}
\delta^{\bm{\theta}}_{i,k}=\big(\theta_{i}(k+1)-\theta_{i}(k)\big)\big/\eta_{a}\label{thedeltainactorstep}
\end{align}
and the gradient mapping of policy parameter as
\begin{align}\label{thedefinnitionofrho}
\bm{\rho}(k)=\big[P_{S^{\theta}_{B}}\big(\bm{\theta}(k)+\eta_{a}\nabla_{\bm{\theta}}J(\bm{\pi}_{\bm{\theta}(k)})\big)-\bm{\theta}(k)\big]\big/\eta_{a}.
\end{align}
Based on the above definitions, the stationarity convergence of Algorithm~\ref{distributedneuralpolicygradientAlgorithm} is established as follows.
\begin{theorem}\label{ConvergencetoStationaryPoint}
Suppose Assumptions~\ref{theassumptionofcommunication}-\ref{RegularityConditiononsigmaandvarsigma} hold.
In the Algorithm~\ref{distributedneuralpolicygradientAlgorithm}, let the learning rate $\eta_{a}\in(0,\frac{1}{2L})$, $K\geq4L^{2}$, and $T_{c}=\mathcal{O}\big((mN)^{2p-1}\big)$.
If $Q^{\bm{\pi}_{\bm{\theta}(k)}}\in\mathcal{F}_{B,\infty}$ for all $k=0,\cdots,K-1$ during the iteration process, then it holds that
\begin{align}\label{theresultinlemmakeyfinal}
\frac{1}{K}\sum_{k=0}^{K-1}\mathbb{E}[\|\bm{\rho}(k)\|^{2}_{2}]\leq&\frac{16R_{0}}{\eta_{a}K}+\frac{4\sigma^{2}_{\xi}}{|\mathcal{B}|}+\tilde{\mathcal{O}}\big(B^{\frac{5}{2}}N^{\frac{1}{2}}T^{-\frac{1}{2}}_{c}\big)\notag\\
&+\mathcal{O}\big(B^{\frac{5}{2}}m^{\frac{1-2p}{4}}N^{\frac{3-2p}{4}}\big).
\end{align}
\end{theorem}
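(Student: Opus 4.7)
The plan is to combine the standard inexact-projected-gradient convergence argument for an $L$-smooth maximization objective with the policy-gradient approximation bound of Lemma~\ref{thefirstlemmaoftheequalityofLipschitzbefore} and the critic error rate of Corollary~\ref{ConvergenceofStochasticUpdatetoQpi}, and then calibrate $T_c$ to balance the neural-network bias against the critic stochastic error. First I would use Assumption~\ref{LipschitzContinuousPolicyGradient} to write the descent inequality
\[
J(\bm{\pi}_{\bm{\theta}(k+1)})\geq J(\bm{\pi}_{\bm{\theta}(k)})+\eta_a\nabla_{\bm{\theta}}J(\bm{\pi}_{\bm{\theta}(k)})^{\top}\delta^{\bm{\theta}}_{k}-\tfrac{L\eta_a^{2}}{2}\|\delta^{\bm{\theta}}_{k}\|_{2}^{2},
\]
with $\delta^{\bm{\theta}}_{k}$ as in (\ref{thedeltainactorstep}). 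The variational characterization of the projection in (\ref{theupdateofpolicyparameter}) gives $\widehat{\nabla}_{\bm{\theta}}J(\bm{\pi}_{\bm{\theta}(k)})^{\top}\delta^{\bm{\theta}}_{k}\geq\|\delta^{\bm{\theta}}_{k}\|_{2}^{2}$, so decomposing $\nabla_{\bm{\theta}}J=\widehat{\nabla}_{\bm{\theta}}J+(\nabla_{\bm{\theta}}J-\widehat{\nabla}_{\bm{\theta}}J)$ and applying Young's inequality yields a one-step bound of the form
\[
\tfrac{\eta_a}{4}\|\delta^{\bm{\theta}}_{k}\|_{2}^{2}\leq J(\bm{\pi}_{\bm{\theta}(k+1)})-J(\bm{\pi}_{\bm{\theta}(k)})+C_{1}\eta_a\|\nabla_{\bm{\theta}}J(\bm{\pi}_{\bm{\theta}(k)})-\widehat{\nabla}_{\bm{\theta}}J(\bm{\pi}_{\bm{\theta}(k)})\|_{2}^{2},
\]
using $\eta_a<1/(2L)$ to absorb the quadratic $L\eta_a^{2}/2$ term. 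Nonexpansiveness of $P_{S^{\theta}_{i,B}}$ implies $\|\bm{\rho}(k)-\delta^{\bm{\theta}}_{k}\|_{2}\leq\|\nabla_{\bm{\theta}}J-\widehat{\nabla}_{\bm{\theta}}J\|_{2}$, hence $\|\bm{\rho}(k)\|_{2}^{2}\leq 2\|\delta^{\bm{\theta}}_{k}\|_{2}^{2}+2\|\nabla_{\bm{\theta}}J-\widehat{\nabla}_{\bm{\theta}}J\|_{2}^{2}$, which combines with the previous display to bound $\eta_a\|\bm{\rho}(k)\|_{2}^{2}$ by the one-step objective increment plus a multiple of the gradient approximation error.

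Next I would telescope this inequality over $k=0,\dots,K-1$ and take expectations. Since Assumption~\ref{theassumptionofreward} together with the $(1-\gamma)$-normalization in (\ref{thedefinelongtermreturninMARL}) gives $|J(\bm{\pi}_{\bm{\theta}})|\leq R_{0}$, the telescoped objective terms contribute the $16R_{0}/(\eta_a K)$ term (the constants $16$ and the condition $K\geq 4L^{2}$ arise from tracking the smoothness factors when absorbing the quadratic term). To handle the expected gradient error $\mathbb{E}[\|\nabla_{\bm{\theta}}J-\widehat{\nabla}_{\bm{\theta}}J\|_{2}^{2}]=\sum_{i=1}^{N}\mathbb{E}[\|\nabla_{\theta_{i}}J-\widehat{\nabla}_{\theta_{i}}J\|_{2}^{2}]$, I would invoke Lemma~\ref{thefirstlemmaoftheequalityofLipschitzbefore}: the variance term contributes $\sum_{i}2\mathbb{E}[\|\xi_{i,k}\|_{2}^{2}]\leq 2\sigma_{\xi}^{2}/|\mathcal{B}|$ by Assumption~\ref{LipschitzContinuousPolicyGradient}, which after bookkeeping becomes the $4\sigma_{\xi}^{2}/|\mathcal{B}|$ term.

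The critic-induced bias $8\kappa_{a}^{2}(mN)^{1-2p}\sum_{i}\mathbb{E}[\|Q^{\bm{\pi}_{\bm{\theta}(k)}}-\widehat{Q}^{\bm{\pi}_{\bm{\theta}(k)}}_{i,\mathrm{out}}\|_{\varsigma_{k}}^{2}]$ is then controlled by Corollary~\ref{ConvergenceofStochasticUpdatetoQpi}. The hypothesis $Q^{\bm{\pi}_{\bm{\theta}(k)}}\in\mathcal{F}_{B,\infty}$ is crucial here: under it, the projection gap $\mathbb{E}_{\varsigma_{k}}[(P_{\mathcal{F}_{B,m}}Q^{\bm{\pi}_{\bm{\theta}(k)}}-Q^{\bm{\pi}_{\bm{\theta}(k)}})^{2}]$ vanishes in the $m\to\infty$ limit at a rate that can be absorbed into the $\mathcal{O}(B^{5/2}m^{(1-2p)/4}N^{(3-2p)/4})$ term, leaving the $(1-\gamma)^{-1}T_c^{-1/2}$, $(mN)^{1-2p}\ln T_c\cdot T_c^{-1/2}$, and $(mN)^{1/2-p}$ contributions from Corollary~\ref{ConvergenceofStochasticUpdatetoQpi}. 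Multiplying by the $(mN)^{1-2p}$ prefactor from Lemma~\ref{thefirstlemmaoftheequalityofLipschitzbefore} and by the $N$ from summing over agents, and finally setting $T_c=\mathcal{O}((mN)^{2p-1})$, balances the residual $T_c^{-1/2}$ piece against the $(mN)^{1/2-p}$ piece and produces the $\tilde{\mathcal{O}}(B^{5/2}N^{1/2}T_c^{-1/2})$ and $\mathcal{O}(B^{5/2}m^{(1-2p)/4}N^{(3-2p)/4})$ terms in (\ref{theresultinlemmakeyfinal}).

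The main obstacle is the careful bookkeeping of the exponents of $m$, $N$, $B$ and $p$ after summing Lemma~\ref{thefirstlemmaoftheequalityofLipschitzbefore} over agents, substituting Corollary~\ref{ConvergenceofStochasticUpdatetoQpi}, and imposing $T_c=\mathcal{O}((mN)^{2p-1})$; in particular, verifying that the $(mN)^{1-2p}$ prefactor from the policy gradient error multiplied by the $(mN)^{1-2p}\ln T_c/\sqrt{T_c}$ contribution of the critic stays subdominant once the calibration is fixed. A secondary subtlety is to show that $Q^{\bm{\pi}_{\bm{\theta}(k)}}\in\mathcal{F}_{B,\infty}$ forces $\|P_{\mathcal{F}_{B,m}}Q^{\bm{\pi}_{\bm{\theta}(k)}}-Q^{\bm{\pi}_{\bm{\theta}(k)}}\|_{\varsigma_{k}}$ to decay fast enough to be captured by the stated neural-approximation rate; once these are in place, the remainder of the argument is a routine telescoping in the template of projected gradient methods with biased stochastic gradients.
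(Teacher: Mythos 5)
Your proposal follows essentially the same route as the paper's proof: the $L$-smoothness descent inequality, the projection variational inequality giving $\widehat{\nabla}_{\bm{\theta}}J(\bm{\pi}_{\bm{\theta}(k)})^{\top}\delta^{\bm{\theta}}_{k}\geq\|\delta^{\bm{\theta}}_{k}\|_{2}^{2}$, nonexpansiveness to relate $\bm{\rho}(k)$ and $\delta^{\bm{\theta}}_{k}$, telescoping with $|J|\leq R_{0}$, the variance bound from Assumption~\ref{LipschitzContinuousPolicyGradient}, and the critic error rate with $Q^{\bm{\pi}_{\bm{\theta}(k)}}\in\mathcal{F}_{B,\infty}$ and the calibration $T_{c}=\mathcal{O}((mN)^{2p-1})$. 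The only cosmetic differences are that you absorb the cross term via Young's inequality and route the critic bias through the squared bound of Lemma~\ref{thefirstlemmaoftheequalityofLipschitzbefore}, whereas the paper splits the inner product into three terms and bounds the bias term directly in first-order norm; both lead to the same error structure.
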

\begin{proof}
By using Assumption~\ref{LipschitzContinuousPolicyGradient}, we have
\begin{align}
&J(\bm{\pi}_{\bm{\theta}(k+1)})-J(\bm{\pi}_{\bm{\theta}(k)})\notag\\
\geq&\eta_{a}\nabla_{\bm{\theta}}J(\bm{\pi}_{\bm{\theta}(k)})^{\top}\delta^{\bm{\theta}}_{k}-\frac{L}{2}\|\bm{\theta}(k+1)-\bm{\theta}(k)\|^{2}_{2}.\label{theinequalityofLipschitz}
\end{align}
Recall that $\xi_{i,k}=\widehat{\nabla}_{\theta_{i}}J(\bm{\pi}_{\bm{\theta}(k)})-\mathbb{E}_{\sigma_{k}}[\widehat{\nabla}_{\theta_{i}}J(\bm{\pi}_{\bm{\theta}(k)})]$, we have
\begin{align}\label{theequalityofLipschitz}
\nabla_{\theta_{i}}J(\bm{\pi}_{\bm{\theta}(k)})^{\top}\delta^{\bm{\theta}}_{i,k}=&\!\underbrace{\big(\nabla_{\theta_{i}}J(\bm{\pi}_{\bm{\theta}(k)})\!-\!\mathbb{E}_{\sigma_{k}}[\widehat{\nabla}_{\theta_{i}}J(\bm{\pi}_{\bm{\theta}(k)})]\big)^{\top}\delta^{\bm{\theta}}_{i,k}}_{(\mathrm{i})}\notag\\
&\underbrace{-\xi^{\top}_{i,k}\delta^{\bm{\theta}}_{i,k}}_{(\mathrm{ii})}+\underbrace{\widehat{\nabla}_{\theta_{i}}J(\bm{\pi}_{\bm{\theta}(k)})^{\top}\delta^{\bm{\theta}}_{i,k}}_{(\mathrm{iii})}.
\end{align}
\par
For the $(\mathrm{i})$-th term on the right-hand side of (\ref{theequalityofLipschitz}), we use Proposition~\ref{ThepolicygradienttheoremforMARL} and have
\begin{align}
&\Big|\big(\nabla_{\theta_{i}}J(\bm{\pi}_{\bm{\theta}(k)})-\mathbb{E}_{\sigma_{k}}[\widehat{\nabla}_{\theta_{i}}J(\bm{\pi}_{\bm{\theta}(k)})]\big)^{\top}\delta^{\bm{\theta}}_{i,k}\Big|\notag\\
=&\Big|\mathbb{E}_{\sigma_{k}}\big[\overline{\psi}_{i}(\bm{s},a_{i};\theta_{i})\big(Q^{\bm{\pi}_{\bm{\theta}(k)}}(\bm{z})-\widehat{Q}^{\bm{\pi}_{\bm{\theta}(k)}}_{i,\mathrm{out}}(\bm{z})\big)\big]^{\top}\delta^{\bm{\theta}}_{i,k}\Big|\notag\\
\leq&\|\delta^{\bm{\theta}}_{i,k}\|_{2}\big\|\mathbb{E}_{\sigma_{k}}\big[\overline{\psi}_{i}(\bm{s},a_{i};\theta_{i})\big(Q^{\bm{\pi}_{\bm{\theta}(k)}}(\bm{z})-\widehat{Q}^{\bm{\pi}_{\bm{\theta}(k)}}_{i,\mathrm{out}}(\bm{z})\big)\big]\big\|_{2}\notag\\
\leq&\|\delta^{\bm{\theta}}_{i,k}\|_{2}\mathbb{E}_{\sigma_{k}}[\|\overline{\psi}_{i}(\bm{s},a_{i};\theta_{i})\|_{2}|Q^{\bm{\pi}_{\bm{\theta}(k)}}(\bm{z})-\widehat{Q}^{\bm{\pi}_{\bm{\theta}(k)}}_{i,\mathrm{out}}(\bm{z})|]\notag\\
\leq&4\eta^{-1}_{a}BN^{-\frac{1}{2}}\mathbb{E}_{\sigma_{k}}[|Q^{\bm{\pi}_{\bm{\theta}(k)}}(\bm{z})-\widehat{Q}^{\bm{\pi}_{\bm{\theta}(k)}}_{i,\mathrm{out}}(\bm{z})|]\notag\\
\leq&4\eta^{-1}_{a}BN^{-\frac{1}{2}}\Big\{\mathbb{E}_{\varsigma_{k}}\big[\big(\mathrm{d}\sigma_{k}(\bm{z})/\mathrm{d}\varsigma_{k}(\bm{z})\big)^{2}\big]\Big\}^{1/2}\cdot\notag\\
&\|Q^{\bm{\pi}_{\bm{\theta}(k)}}(\bm{z})-\widehat{Q}^{\bm{\pi}_{\bm{\theta}(k)}}_{i,\mathrm{out}}(\bm{z})\|_{\varsigma_{k}}\notag\\
\leq&4\kappa_{a}\eta^{-1}_{a}BN^{-\frac{1}{2}}\|Q^{\bm{\pi}_{\bm{\theta}(k)}}(\bm{z})-\widehat{Q}^{\bm{\pi}_{\bm{\theta}(k)}}_{i,\mathrm{out}}(\bm{z})\|_{\varsigma_{k}},\label{thei-thtermintheequalityofLipschitz}
\end{align}
where the second inequality follows from the Jensen's inequality, the third inequality uses the fact that $\|\delta^{\bm{\theta}}_{i,t}\|_{2}\leq2\eta^{-1}_{a}BN^{-\frac{1}{2}}$ and $\|\overline{\psi}_{i}(\bm{s},a_{i};\theta_{i})\|_{2}\leq2(mN)^{\frac{1}{2}-p}\leq2$, the forth inequality comes from the Cauchy-Schwartz inequality, and the last inequality uses Assumption~\ref{RegularityConditiononsigmaandvarsigma}.
\par
For the $(\mathrm{ii})$-th term on the right-hand side of (\ref{theequalityofLipschitz}), we have
\begin{align}\label{theii-thtermintheequalityofLipschitz}
-\xi^{\top}_{i,k}\delta^{\bm{\theta}}_{i,k}\geq-\frac{1}{2}(\|\xi_{i,k}\|^{2}_{2}+\|\delta^{\bm{\theta}}_{i,k}\|^{2}_{2}).
\end{align}
\par
For the $(\mathrm{iii})$-th term on the right-hand side of (\ref{theequalityofLipschitz}), we use (\ref{thedeltainactorstep}) and have
\begin{align}\label{theiii-thtermintheequalityofLipschitz}
&\widehat{\nabla}_{\theta_{i}}J(\bm{\pi}_{\bm{\theta}(k)})^{\top}\delta^{\bm{\theta}}_{i,k}\notag\\
=&\Big[\delta^{\bm{\theta}}_{i,k}-\Big(\theta_{i}(k+1)-\big(\theta_{i}(k)+\eta_{a}\widehat{\nabla}_{\theta_{i}}J(\bm{\pi}_{\bm{\theta}(k)})\big)\Big)\Big/\eta_{a}\Big]\delta^{\bm{\theta}}_{i,k}\notag\\
\geq&\|\delta^{\bm{\theta}}_{i,k}\|^{2}_{2},
\end{align}
where the inequality comes from the property of projection operation that for all $\bm{\theta}',\bm{\theta}''\in S^{\theta}_{B}$,
\begin{align}\label{thepropertyofprojectionoperation}
&\Big(P_{S^{\theta}_{i,B}}\big(\theta_{i}'+\eta_{a}\widehat{\nabla}_{\theta_{i}}J(\bm{\pi}_{\bm{\theta}'})\big)-\big(\theta_{i}'+\eta_{a}\widehat{\nabla}_{\theta_{i}}J(\bm{\pi}_{\bm{\theta}'})\big)\Big)^{\top}\notag\\
&\Big(P_{S^{\theta}_{i,B}}\big(\theta_{i}'+\eta_{a}\widehat{\nabla}_{\theta_{i}}J(\bm{\pi}_{\bm{\theta}'})\big)-\theta_{i}''\Big)\leq0.
\end{align}
Substituting (\ref{thei-thtermintheequalityofLipschitz}), (\ref{theii-thtermintheequalityofLipschitz}), and (\ref{theiii-thtermintheequalityofLipschitz}) into (\ref{theequalityofLipschitz}), we have
\begin{align}\label{thei-ii-iii-equalityofLipschitz}
\nabla_{\theta_{i}}J(\bm{\pi}_{\bm{\theta}(k)})^{\top}\delta^{\bm{\theta}}_{i,k}
\geq&-4\kappa_{a}\eta^{-1}_{a}BN^{-\frac{1}{2}}\|Q^{\bm{\pi}_{\bm{\theta}(k)}}-\widehat{Q}^{\bm{\pi}_{\bm{\theta}(k)}}_{i,\mathrm{out}}\|_{\varsigma_{k}}\notag\\
&-\frac{1}{2}\|\xi_{i,k}\|^{2}_{2}+\frac{1}{2}\|\delta^{\bm{\theta}}_{i,k}\|^{2}_{2}.
\end{align}
By combining (\ref{theinequalityofLipschitz}) and (\ref{thei-ii-iii-equalityofLipschitz}), we have
\begin{align}
\frac{1-L\eta_{a}}{2}\mathbb{E}&[\|\delta^{\bm{\theta}}_{k}\|^{2}_{2}]\leq\frac{1}{\eta_{a}}\mathbb{E}[J(\bm{\pi}_{\bm{\theta}(k+1)})-J(\bm{\pi}_{\bm{\theta}(k)})]\notag\\
&+4\kappa_{a}\eta^{-1}_{a}BN^{-\frac{1}{2}}\sum_{i=1}^{N}\mathbb{E}[\|Q^{\bm{\pi}_{\bm{\theta}(k)}}-\widehat{Q}^{\bm{\pi}_{\bm{\theta}(k)}}_{i,\mathrm{out}}\|_{\varsigma_{k}}]\notag\\
&+\frac{1}{2}\mathbb{E}[\|\xi_{k}\|^{2}_{2}].\label{theresultequalityofLipschitz}
\end{align}
Recall that $0<\eta_{a}<\frac{1}{2L}$ and $K\geq4L^{2}$, we have
\begin{align}\label{theinequalityofrho}
&\frac{1}{K}\sum_{k=0}^{K-1}\mathbb{E}[\|\bm{\rho}(k)\|^{2}_{2}]\notag\\
\leq&\frac{1}{K}\sum_{k=0}^{K-1}\big(2\mathbb{E}[\|\delta^{\bm{\theta}}_{k}\|^{2}_{2}\|]+2\mathbb{E}[\|\bm{\rho}(k)-\delta^{\bm{\theta}}_{k}\|^{2}_{2}]\big)\notag\\
\leq&\frac{1}{K}\sum_{k=0}^{K-1}\big(4(1-L\eta_{a})\mathbb{E}[\|\delta^{\bm{\theta}}_{k}\|^{2}_{2}\|]+2\mathbb{E}[\|\bm{\rho}(k)-\delta^{\bm{\theta}}_{k}\|^{2}_{2}]\big).
\end{align}
Based on the definitions of $\bm{\rho}(k)$ in (\ref{thedefinnitionofrho}) and $\delta^{\bm{\theta}}_{i,k}$ in (\ref{thedeltainactorstep}), we have
\begin{align}\label{theinequalityoferrorbetweenrhoanddelta}
\|\bm{\rho}(k)-\delta^{\bm{\theta}}_{k}\|_{2}=&\frac{1}{\eta_{a}}\mathbb{E}\big[\big\|P_{S^{\theta}_{B}}\big(\bm{\theta}(k)+\eta_{a}\nabla_{\bm{\theta}}J(\bm{\pi}_{\bm{\theta}(k)})\big)\notag\\
&-P_{S^{\theta}_{B}}\big(\bm{\theta}(k)+\eta_{a}\widehat{\nabla}_{\bm{\theta}}J(\bm{\pi}_{\bm{\theta}(k)})\big)\big]\notag\\
\leq&\|\nabla_{\bm{\theta}}J(\bm{\pi}_{\bm{\theta}(k)})-\widehat{\nabla}_{\bm{\theta}}J(\bm{\pi}_{\bm{\theta}(k)})\|_{2}.
\end{align}
Substituting (\ref{theinequalityoferrorbetweenrhoanddelta}) into (\ref{theinequalityofrho}), we further have
\begin{align}
\frac{1}{K}\sum_{k=0}^{K-1}\mathbb{E}[\|\bm{\rho}(k)\|^{2}_{2}]\leq&\frac{1}{K}\sum_{k=0}^{K-1}\Big(4(1-L\eta_{a})\mathbb{E}[\|\delta^{\bm{\theta}}_{k}\|^{2}_{2}\|]\notag\\
&+2\mathbb{E}[\|\nabla_{\bm{\theta}}J(\bm{\pi}_{\bm{\theta}(k)})-\widehat{\nabla}_{\bm{\theta}}J(\bm{\pi}_{\bm{\theta}(k)})\|^{2}_{2}]\Big)\notag\\
\leq&\frac{8}{K\eta_{a}}\mathbb{E}[J(\bm{\pi}_{\bm{\theta}(K)})-J(\bm{\pi}_{\bm{\theta}(0)})]\notag\\
&+\frac{4}{K}\sum_{k=0}^{K-1}\mathbb{E}\|\xi_{k}\|^{2}_{2}+\tilde{\mathcal{O}}\big(B^{\frac{5}{2}}N^{\frac{1}{2}}T^{-\frac{3}{4}}_{c}\big)\notag\\
&+\mathcal{O}\big(B^{\frac{5}{2}}m^{\frac{1-2p}{4}}N^{\frac{3-2p}{4}}\big)\notag\\
\leq&\frac{16R_{0}}{K\eta_{a}}+\frac{4\sigma^{2}_{\xi}}{|\mathcal{B}|}+\tilde{\mathcal{O}}\big(B^{\frac{5}{2}}N^{\frac{1}{2}}T^{-\frac{3}{4}}_{c}\big)\notag\\
&+\mathcal{O}\big(B^{\frac{5}{2}}m^{\frac{1-2p}{4}}N^{\frac{3-2p}{4}}\big),\notag
\end{align}
where the second inequality follows from (\ref{theresultequalityofLipschitz}) and  Theorem~\ref{ConvergenceofStochasticUpdate}, and the last inequality uses Assumption~\ref{LipschitzContinuousPolicyGradient}.
\end{proof}
\par
Theorem~\ref{ConvergencetoStationaryPoint} indicates that  Algorithm~\ref{distributedneuralpolicygradientAlgorithm} can convergence to a stationary point at a rate of $\mathcal{O}(\frac{1}{K})$ with some error terms.
It is important to note that the terms $\tilde{\mathcal{O}}\big(B^{\frac{5}{2}}N^{\frac{1}{2}}T^{-\frac{3}{4}}_{c}\big)+\mathcal{O}\big(B^{\frac{5}{2}}m^{\frac{1-2p}{4}}N^{\frac{3-2p}{4}}\big)$  on the right side of (\ref{theresultinlemmakeyfinal}) represents the error cased by function approximation in the distributed critic step, while $\frac{4\sigma^{2}_{\xi}}{|\mathcal{B}|}$ denotes the approximation error of policy gradient in the decentralized actor step.
\subsection{Global convergence of Algorithhm~\ref{distributedneuralpolicygradientAlgorithm}}
\par
Define $\bm{\pi}^{*}$ as the optimal policy and satisfy $J(\bm{\pi}^{*})=\max_{\bm{\pi_{\theta}}}J(\bm{\pi_{\theta}})$.
By recalling the definition of $\bm{\psi}(\bm{z};\bm{\theta})$ in (\ref{thejointfeaturemapping}), for stationary policy $\bm{\pi}_{\bm{\widehat{\theta}}}$, define $\bm{\psi}(\bm{\widehat{\theta}})=[\bm{\psi}(\bm{z};\bm{\widehat{\theta}})]_{\bm{z}\in\mathcal{S}\times\mathcal{A}}$ and $u_{\widehat{\bm{\theta}}}(\bm{z})$ as
\begin{align}\label{thedefinitionofuwidehattheta}
u_{\widehat{\bm{\theta}}}(\bm{z})=\frac{\mathrm{d}\sigma_{\bm{\pi}^{*}}}{\mathrm{d}\sigma_{\bm{\pi}_{\widehat{\bm{\theta}}}}}(\bm{z})-\frac{\mathrm{d}\nu_{\bm{\pi}^{*}}}{\mathrm{d}\nu_{\bm{\pi}_{\widehat{\bm{\theta}}}}}(\bm{s})+\bm{\psi}^{\top}_{\widehat{\bm{\theta}}}\widehat{\bm{\theta}},
\end{align}
where $\sigma_{\bm{\pi}^{*}}$ and $\sigma_{\bm{\pi}_{\widehat{\bm{\theta}}}}$ are the state-action visitation measure of $\bm{\pi}^{*}$ and $\bm{\pi}_{\widehat{\bm{\theta}}}$, respectively.
$\nu_{\bm{\pi}^{*}}$ and $\nu_{\bm{\pi}_{\widehat{\bm{\theta}}}}$ are the stationary state-action distribution of $\bm{\pi}^{*}$ and $\bm{\pi}_{\widehat{\bm{\theta}}}$, respectively.
Moreover, $\mathrm{d}\sigma_{\bm{\pi}^{*}}/\mathrm{d}\sigma_{\bm{\pi}_{\widehat{\bm{\theta}}}}$ and $\mathrm{d}\nu_{\bm{\pi}^{*}}/\mathrm{d}\nu_{\bm{\pi}_{\widehat{\bm{\theta}}}}$ are the Radon-Nikodym derivatives.
The global optimality of stationary policy $\bm{\pi}_{\bm{\widehat{\theta}}}$ is characterized in the following lemma.
\begin{lemma}\label{thetheoremofGlobalOptimalityofStationaryPoint}
Suppose Assumptions~\ref{theassumptionofcommunication}-\ref{RegularityConditiononsigmaandvarsigma} hold.
If the stationary policy $\bm{\pi}_{\widehat{\bm{\theta}}}$ satisfies $u_{\widehat{\bm{\theta}}}\in\mathcal{F}_{B,\infty}$, then it holds that  \begin{align}\label{theresuleoftheoremofGlobalOptimalityofStationaryPoint}
(1-\gamma)\mathbb{E}_{\mathrm{init}}[J(\bm{\pi}^{*})-J(\bm{\pi}_{\widehat{\bm{\theta}}})]=\mathcal{O}\big(B^{\frac{3}{2}}(mN)^{\frac{1}{4}-p}\big).
\end{align}
\end{lemma}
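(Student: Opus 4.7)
The plan is to (i)~reduce the suboptimality gap to an expectation of the advantage function via the performance difference lemma, (ii)~introduce $u_{\widehat{\bm{\theta}}}$ through a change of measure, (iii)~exploit $u_{\widehat{\bm{\theta}}}\in\mathcal{F}_{B,\infty}$ to replace the resulting density-ratio difference by a linear functional in the tangent-feature space, and (iv)~conclude via the stationarity inequality in Definition~\ref{thedefinitionofstationarypoint}.

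For step (i), the definition of $J(\bm{\pi}_{\bm{\theta}})$ in~(\ref{thedefinelongtermreturninMARL}) together with the classical telescoping argument for the performance difference identity yields
\begin{align*}
(1-\gamma)[J(\bm{\pi}^{*})-J(\bm{\pi}_{\widehat{\bm{\theta}}})]=\mathbb{E}_{\sigma_{\bm{\pi}^{*}}}[A^{\bm{\pi}_{\widehat{\bm{\theta}}}}(\bm{z})],
\end{align*}
where $A^{\bm{\pi}_{\widehat{\bm{\theta}}}}(\bm{z})=Q^{\bm{\pi}_{\widehat{\bm{\theta}}}}(\bm{z})-\mathbb{E}_{\bm{a}'\sim\bm{\pi}_{\widehat{\bm{\theta}}}(\cdot|\bm{s})}[Q^{\bm{\pi}_{\widehat{\bm{\theta}}}}(\bm{s},\bm{a}')]$. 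Changing measure from $\sigma_{\bm{\pi}^{*}}$ to $\sigma_{\bm{\pi}_{\widehat{\bm{\theta}}}}$, and using the fact that any function of $\bm{s}$ alone integrates to zero against $A^{\bm{\pi}_{\widehat{\bm{\theta}}}}$ under $\bm{\pi}_{\widehat{\bm{\theta}}}(\cdot|\bm{s})$, this right-hand side can be rewritten as $\mathbb{E}_{\sigma_{\bm{\pi}_{\widehat{\bm{\theta}}}}}[(u_{\widehat{\bm{\theta}}}(\bm{z})-\bm{\psi}^{\top}(\bm{z};\widehat{\bm{\theta}})\widehat{\bm{\theta}})\,A^{\bm{\pi}_{\widehat{\bm{\theta}}}}(\bm{z})]$ by the definition~(\ref{thedefinitionofuwidehattheta}) of $u_{\widehat{\bm{\theta}}}$.

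Next, the hypothesis $u_{\widehat{\bm{\theta}}}\in\mathcal{F}_{B,\infty}$ allows me to pick $\bm{\theta}^{\sharp}\in S^{\theta}_{B}$ such that $\bm{\psi}^{\top}(\bm{z};\bm{\theta}(0))\bm{\theta}^{\sharp}$ approximates $u_{\widehat{\bm{\theta}}}(\bm{z})$ in $L^{2}(\varsigma_{\widehat{\bm{\theta}}})$ as $m\to\infty$, with residual controlled in analogy with the projection argument of~\cite{Cai2019}. Swapping the base-point feature $\bm{\psi}(\cdot;\bm{\theta}(0))$ for the running feature $\bm{\psi}(\cdot;\widehat{\bm{\theta}})$ costs an $L^{2}$-error bounded by Lemma~\ref{Lemma1to3}(iii), and moving from $\varsigma_{\widehat{\bm{\theta}}}$ to $\sigma_{\bm{\pi}_{\widehat{\bm{\theta}}}}$ incurs a factor of $\kappa_{a}$ from Assumption~\ref{RegularityConditiononsigmaandvarsigma}. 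Using Assumption~\ref{theassumptionofreward} to bound $\|A^{\bm{\pi}_{\widehat{\bm{\theta}}}}\|_{\infty}\le 2R_{0}$ and applying Cauchy--Schwarz, the expectation reduces to
\begin{align*}
\mathbb{E}_{\sigma_{\bm{\pi}_{\widehat{\bm{\theta}}}}}[\bm{\psi}^{\top}(\bm{z};\widehat{\bm{\theta}})(\bm{\theta}^{\sharp}-\widehat{\bm{\theta}})\,A^{\bm{\pi}_{\widehat{\bm{\theta}}}}(\bm{z})]+\mathcal{O}\!\big(B^{3/2}(mN)^{1/4-p}\big).
\end{align*}

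Finally, since $\mathbb{E}_{\bm{a}|\bm{s};\bm{\pi}_{\widehat{\bm{\theta}}}}[A^{\bm{\pi}_{\widehat{\bm{\theta}}}}]=0$, I may center $\bm{\psi}$ into $\overline{\bm{\psi}}$ and replace $A^{\bm{\pi}_{\widehat{\bm{\theta}}}}$ by $Q^{\bm{\pi}_{\widehat{\bm{\theta}}}}$ inside the expectation without changing its value; Proposition~\ref{ThepolicygradienttheoremforMARL}, applied agent-wise and then stacked across $i$, identifies the main term as $\nabla_{\bm{\theta}}J(\bm{\pi}_{\widehat{\bm{\theta}}})^{\top}(\bm{\theta}^{\sharp}-\widehat{\bm{\theta}})$. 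Since $\bm{\theta}^{\sharp}\in S^{\theta}_{B}$, the stationarity condition in Definition~\ref{thedefinitionofstationarypoint} forces this inner product to be $\le 0$, and taking expectation over the initialization yields~(\ref{theresuleoftheoremofGlobalOptimalityofStationaryPoint}). The main obstacle will be the bookkeeping of the second step: pinning down the exponent $(mN)^{1/4-p}$ requires carefully balancing the $\mathcal{F}_{B,\infty}\!\to\!\mathcal{F}_{B,m}$ projection error against the feature-drift bound of Lemma~\ref{Lemma1to3}(iii), and correctly tracking how $\|\bm{\theta}^{\sharp}-\widehat{\bm{\theta}}\|_{2}\le 2B/\sqrt{N}$ combines with them under Cauchy--Schwarz to produce the advertised rate.
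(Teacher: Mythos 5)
Your proposal is correct and, in its quantitative core, follows the same route as the paper: both arguments reduce the suboptimality gap to $2R_{0}\inf_{\bm{\theta}\in S^{\theta}_{B}}\|u_{\widehat{\bm{\theta}}}-\bm{\psi}^{\top}(\widehat{\bm{\theta}})\bm{\theta}\|_{\sigma_{\bm{\pi}_{\widehat{\bm{\theta}}}}}$ and then split this infimum into the $\mathcal{F}_{B,\infty}\!\to\!\mathcal{F}_{B,m}$ projection error of order $\mathcal{O}(B^{1/2}(mN)^{-p/2})$ plus the feature-drift term controlled by Lemma~\ref{Lemma1to3}(iii), which supplies the dominant $\mathcal{O}(B^{3/2}(mN)^{1/4-p})$ rate. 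The one genuine difference is that the paper obtains the key reduction by citing Theorem~4.8 of~\cite{Wang2019} as a black box, whereas you re-derive it from scratch: performance difference lemma, change of measure to $\sigma_{\bm{\pi}_{\widehat{\bm{\theta}}}}$, cancellation of the state-only term against the advantage, identification of the main term with $\nabla_{\bm{\theta}}J(\bm{\pi}_{\widehat{\bm{\theta}}})^{\top}(\bm{\theta}^{\sharp}-\widehat{\bm{\theta}})$ via Proposition~\ref{ThepolicygradienttheoremforMARL}, and finally the stationarity inequality of Definition~\ref{thedefinitionofstationarypoint} to kill it. This buys transparency — it makes explicit where the stationarity hypothesis and the $2R_{0}$ factor enter, and verifies that the cited single-agent result survives the block-structured joint feature map $\bm{\psi}$ — at the cost of length; the paper's version is shorter but opaque about why stationarity implies near-optimality. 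One small point to tighten: you invoke Assumption~\ref{RegularityConditiononsigmaandvarsigma} to pass from $\varsigma_{\widehat{\bm{\theta}}}$ to $\sigma_{\bm{\pi}_{\widehat{\bm{\theta}}}}$, but that assumption is stated only along the iterate sequence $\{\bm{\theta}(k)\}$; like the paper, you should either extend it to the stationary point $\widehat{\bm{\theta}}$ or note that Lemma~\ref{Lemma1to3}(iii) and the projection bound can be read directly in the $\sigma_{\bm{\pi}_{\widehat{\bm{\theta}}}}$ norm since Assumption~\ref{Regularity Condition} holds for arbitrary pairs of policies.
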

\begin{proof}
Let $u_{\widehat{\bm{\theta}}}=[u_{\widehat{\bm{\theta}}}(\bm{z})]_{\bm{z}\in\mathcal{S}\times\mathcal{A}}$.
By invoking Theorem 4.8 from~\cite{Wang2019}, we obtain
\begin{align}
&(1-\gamma)\big(J(\bm{\pi}^{*})-J(\bm{\pi}_{\widehat{\bm{\theta}}})\big)\notag\\
\leq&2R_{0}\inf_{\bm{\theta}\in S^{\theta}_{B}}\big\{\|u_{\widehat{\bm{\theta}}}-\bm{\psi}^{\top}(\widehat{\bm{\theta}})\bm{\theta}\|_{\sigma_{\bm{\pi}_{\widehat{\bm{\theta}}}}}\big\}.\label{thefirstinequalityinTheorem3}
\end{align}
For $\inf_{\bm{\theta}\in S^{\theta}_{B}}\|u_{\widehat{\bm{\theta}}}-\bm{\psi}^{\top}(\widehat{\bm{\theta}})\bm{\theta}\|_{\sigma_{\bm{\pi}_{\widehat{\bm{\theta}}}}}$ in (\ref{thefirstinequalityinTheorem3}), we have
\begin{align}\label{thefirstinequalityoftheoremofGlobalOptimalityofStationaryPoint}
&\inf_{\bm{\theta}\in
S^{\theta}_{B}}\big\{\|u_{\widehat{\bm{\theta}}}-\bm{\psi}^{\top}(\widehat{\bm{\theta}})\bm{\theta}\|_{\sigma_{\bm{\pi}_{\widehat{\bm{\theta}}}}}\big\}\notag\\
\leq&\|u_{\widehat{\bm{\theta}}}-P_{\mathcal{F}_{B,m}}u_{\widehat{\bm{\theta}}}\|_{\sigma_{\bm{\pi}_{\widehat{\bm{\theta}}}}}\notag\\
&+\inf_{\bm{\theta}\in S^{\theta}_{B}}\big\{\|P_{\mathcal{F}_{B,m}}u_{\widehat{\bm{\theta}}}-\bm{\psi}^{\top}(\widehat{\bm{\theta}})\bm{\theta}\|_{\sigma_{\bm{\pi}_{\widehat{\bm{\theta}}}}}\big\}.
\end{align}
Let $\widetilde{\bm{\theta}}\in S^{\theta}_{B}$ and satisfy $P_{\mathcal{F}_{B,m}}u_{\widehat{\bm{\theta}}}=\bm{\psi}^{\top}\big(\bm{\theta}(0)\big)\widetilde{\bm{\theta}}\in\mathcal{F}_{B,m}$, we have
\begin{align}\label{thesecondinequalityoftheoremofGlobalOptimalityofStationaryPoint}
&\inf_{\bm{\theta}\in S^{\theta}_{B}}\big\{\|u_{\widehat{\bm{\theta}}}-\bm{\psi}^{\top}(\widehat{\bm{\theta}})\bm{\theta}\|_{\sigma_{\bm{\pi}_{\widehat{\bm{\theta}}}}}\big\}\notag\\
\leq&\|u_{\widehat{\bm{\theta}}}-P_{\mathcal{F}_{B,m}}u_{\widehat{\bm{\theta}}}\|_{\sigma_{\bm{\pi}_{\widehat{\bm{\theta}}}}}+\|\bm{\psi}^{\top}\big(\bm{\theta}(0)\big)\widetilde{\bm{\theta}}-\bm{\psi}^{\top}(\widehat{\bm{\theta}})\widetilde{\bm{\theta}}\|_{\sigma_{\bm{\pi}_{\widehat{\bm{\theta}}}}}.
\end{align}
Given that $u_{\widehat{\bm{\theta}}} \in \mathcal{F}_{B,\infty}$, analogous to Eq.~(B.5) in~\cite{Wang2019}, it follows that
\begin{align}\label{theinequalityoferrorbetweenPiuandu}
\mathbb{E}_{\mathrm{init}}\big[\|P_{\mathcal{F}_{B,m}}u_{\widehat{\bm{\theta}}}-u_{\widehat{\bm{\theta}}}\|^{2}_{\sigma_{\bm{\pi}_{\widehat{\bm{\theta}}}}}\big]
\leq&\mathcal{O}\big(B(mN)^{-p}\big).
\end{align}
According to (iii) of Lemma \ref{Lemma1to3}, we also have
\begin{align}
&\mathbb{E}_{\mathrm{init}}\big[\|\bm{\psi}^{\top}\big(\bm{\theta}(0)\big)\widetilde{\bm{\theta}}-\bm{\psi}^{\top}(\widehat{\bm{\theta}})\widetilde{\bm{\theta}}\|_{\sigma_{\bm{\pi}_{\widehat{\bm{\theta}}}}}\big]\notag\\
\leq&\Big(\mathbb{E}_{\mathrm{init}}\big[\|\bm{\psi}^{\top}\big(\bm{\theta}(0)\big)\widetilde{\bm{\theta}}-\bm{\psi}^{\top}(\widehat{\bm{\theta}})\widetilde{\bm{\theta}}\|^{2}_{\sigma_{\bm{\pi}_{\widehat{\bm{\theta}}}}}\big]\Big)^{1/2}\notag\\
\leq&\mathcal{O}\big(B^{\frac{3}{2}}(mN)^{\frac{1}{4}-p}\big).\label{theinequalityofthelocalerror}
\end{align}
By substituting (\ref{thesecondinequalityoftheoremofGlobalOptimalityofStationaryPoint}) (\ref{theinequalityoferrorbetweenPiuandu}), and (\ref{theinequalityofthelocalerror}) into (\ref{thefirstinequalityinTheorem3}) and subsequently taking the expectation, we have
\begin{align}
&(1-\gamma)\mathbb{E}_{\mathrm{init}}[J(\bm{\pi}^{*})-J(\bm{\pi}_{\widehat{\bm{\theta}}})]\notag\\
\leq&2R_{0}\mathbb{E}_{\mathrm{init}}\Big[\inf_{\bm{\theta}\in S^{\theta}_{B}}\big\{\|u_{\widehat{\bm{\theta}}}-\bm{\psi}^{\top}(\widehat{\bm{\theta}})\bm{\theta}\|_{\sigma_{\bm{\pi}_{\widehat{\bm{\theta}}}}}\big\}\Big]\notag\\
\leq&\mathcal{O}\big(B^{\frac{3}{2}}(mN)^{\frac{1}{4}-p}\big),\notag
\end{align}
which completes the proof of Lemma~\ref{thetheoremofGlobalOptimalityofStationaryPoint}.
\end{proof}
\par
By invoking Lemma~\ref{thetheoremofGlobalOptimalityofStationaryPoint} and imposing the additional, more stringent assumption that $u_{\bm{\theta}(k)} \in \mathcal{F}_{B,\infty}$, we can establish the subsequent theorem.
\begin{theorem}\label{thecorollaryofneuralpolicygradientmethod}
Suppose Assumptions~\ref{theassumptionofcommunication}-\ref{RegularityConditiononsigmaandvarsigma} hold.
In the Algorithm~\ref{distributedneuralpolicygradientAlgorithm}, let the learning rate $\eta_{a}\in(0,\frac{1}{2L})$, $K\geq4L^{2}$, and $T_{c}=\mathcal{O}\big((mN)^{2p-1}\big)$.
If $Q^{\bm{\pi}_{\bm{\theta}(k)}},u_{\bm{\theta}(k)}\in\mathcal{F}_{B,\infty}$ for all $k=0,\cdots,K-1$ during the iteration process, then it holds
\begin{align}
&\frac{1-\gamma}{K}\sum_{k=0}^{K-1}\mathbb{E}[J(\bm{\pi}^{*})-J(\bm{\pi}_{\bm{\theta}(k)})]\notag\\
\leq&\frac{8(B+\eta_{a}R_{0})\sqrt{R_{0}}}{\sqrt{\eta_{a}K}}+\frac{4(B+\eta_{a}R_{0})\sigma_{\xi}}{\sqrt{|\mathcal{B}|}}+\tilde{\mathcal{O}}\big(B^{\frac{9}{4}}N^{\frac{1}{4}}T^{-\frac{3}{8}}_{c}\big)\notag\\
&+\mathcal{O}\big(B^{\frac{9}{4}}m^{\frac{1-2p}{8}}N^{\frac{3-2p}{8}}\big).\label{thelastinequalityintheorem}
\end{align}
\end{theorem}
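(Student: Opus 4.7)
The plan is to reduce the optimality gap bound to the stationarity bound already established in Theorem~\ref{ConvergencetoStationaryPoint} by means of a generalized version of Lemma~\ref{thetheoremofGlobalOptimalityofStationaryPoint}. The latter was stated only for a true stationary point $\widehat{\bm{\theta}}$; I need an analogous inequality valid at every iterate $\bm{\theta}(k)$, at the cost of an extra term proportional to the gradient mapping $\bm{\rho}(k)$ defined in (\ref{thedefinnitionofrho}). Concretely, by revisiting the performance-difference argument that underlies Theorem~4.8 of~\cite{Wang2019} and tracking the residual that appears when $\nabla_{\bm{\theta}}J(\bm{\pi}_{\bm{\theta}(k)})^{\top}(\bm{\theta}-\bm{\theta}(k))\not\leq 0$, I expect to obtain, for each $k$,
\begin{align}
(1-\gamma)\big(J(\bm{\pi}^{*})-J(\bm{\pi}_{\bm{\theta}(k)})\big)\leq 2R_{0}\inf_{\bm{\theta}\in S^{\theta}_{B}}\|u_{\bm{\theta}(k)}-\bm{\psi}^{\top}(\bm{\theta}(k))\bm{\theta}\|_{\sigma_{\bm{\pi}_{\bm{\theta}(k)}}}+2(B+\eta_{a}R_{0})\|\bm{\rho}(k)\|_{2}.\notag
\end{align}
The first term is handled exactly as in the proof of Lemma~\ref{thetheoremofGlobalOptimalityofStationaryPoint}: since $u_{\bm{\theta}(k)}\in\mathcal{F}_{B,\infty}$, splitting through $P_{\mathcal{F}_{B,m}}u_{\bm{\theta}(k)}$ and invoking (iii) of Lemma~\ref{Lemma1to3} together with the projection estimate (\ref{theinequalityoferrorbetweenPiuandu}) yields an $\mathcal{O}(B^{3/2}(mN)^{1/4-p})$ bound in expectation.

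Next I would average the inequality above over $k=0,\dots,K-1$, take expectation, and apply Jensen's (or Cauchy--Schwarz) to the gradient-mapping term:
\begin{align}
\frac{1}{K}\sum_{k=0}^{K-1}\mathbb{E}\|\bm{\rho}(k)\|_{2}\leq\Big(\frac{1}{K}\sum_{k=0}^{K-1}\mathbb{E}[\|\bm{\rho}(k)\|_{2}^{2}]\Big)^{1/2}.\notag
\end{align}
Now Theorem~\ref{ConvergencetoStationaryPoint} bounds the right-hand side by $\tfrac{16R_{0}}{\eta_{a}K}+\tfrac{4\sigma_{\xi}^{2}}{|\mathcal{B}|}+\tilde{\mathcal{O}}(B^{5/2}N^{1/2}T_{c}^{-3/4})+\mathcal{O}(B^{5/2}m^{(1-2p)/4}N^{(3-2p)/4})$, and taking the square root distributes these four contributions into $\sqrt{R_{0}/(\eta_{a}K)}$, $\sigma_{\xi}/\sqrt{|\mathcal{B}|}$, $\tilde{\mathcal{O}}(B^{5/4}N^{1/4}T_{c}^{-3/8})$, and $\mathcal{O}(B^{5/4}m^{(1-2p)/8}N^{(3-2p)/8})$. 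Multiplying by the prefactor $2(B+\eta_{a}R_{0})$ produces exactly the coefficients $8(B+\eta_{a}R_{0})\sqrt{R_{0}}/\sqrt{\eta_{a}K}$ and $4(B+\eta_{a}R_{0})\sigma_{\xi}/\sqrt{|\mathcal{B}|}$ in (\ref{thelastinequalityintheorem}), while the approximation errors combine with the $B^{3/2}(mN)^{1/4-p}$ piece to produce the claimed $\tilde{\mathcal{O}}(B^{9/4}N^{1/4}T_{c}^{-3/8})+\mathcal{O}(B^{9/4}m^{(1-2p)/8}N^{(3-2p)/8})$ remainder (the first-term contribution is dominated by the latter). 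The choice $T_{c}=\mathcal{O}((mN)^{2p-1})$ is used to match the critic error to the neural approximation error, ensuring both vanish as $m\to\infty$.

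The main obstacle I anticipate is the derivation of the per-iterate performance difference bound containing $\|\bm{\rho}(k)\|_{2}$. In Lemma~\ref{thetheoremofGlobalOptimalityofStationaryPoint} the stationarity $\nabla_{\bm{\theta}}J(\bm{\pi}_{\widehat{\bm{\theta}}})^{\top}(\bm{\theta}-\widehat{\bm{\theta}})\leq 0$ is used implicitly to eliminate a cross term when integrating $\log(\pi^{*}/\pi_{\widehat{\bm{\theta}}})$ against $\sigma_{\bm{\pi}^{*}}-\sigma_{\bm{\pi}_{\widehat{\bm{\theta}}}}$; for a generic iterate this cross term is only bounded by $\langle\nabla_{\bm{\theta}}J(\bm{\pi}_{\bm{\theta}(k)}),\bm{\theta}^{*}-\bm{\theta}(k)\rangle$, which must be recast using the projected-gradient identity $P_{S^{\theta}_{B}}(\bm{\theta}(k)+\eta_{a}\nabla_{\bm{\theta}}J)-\bm{\theta}(k)=\eta_{a}\bm{\rho}(k)$ and the fact that $\|\bm{\theta}^{*}-\bm{\theta}(k)\|_{2}\leq 2B$ together with $\|\nabla_{\bm{\theta}}J\|_{2}\leq 2R_{0}$ to extract the factor $(B+\eta_{a}R_{0})$. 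Carrying this out rigorously — essentially the multi-agent analogue of the soft-max policy improvement lemma used in the centralized neural AC analysis of~\cite{Wang2019} — is where the bulk of the work lies; the remaining manipulations are routine algebra with the rates produced by Theorem~\ref{ConvergenceofStochasticUpdate} and Theorem~\ref{ConvergencetoStationaryPoint}.
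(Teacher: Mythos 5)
Your proposal is correct and follows essentially the same route as the paper: the paper likewise bounds $\nabla_{\bm{\theta}}J(\bm{\pi}_{\bm{\theta}(k)})^{\top}(\bm{\theta}-\bm{\theta}(k))\leq(2B+2\eta_{a}R_{0})\|\bm{\rho}(k)\|_{2}$ via the projection property of $\bm{\rho}(k)$, feeds this stationarity deficit into the performance-difference bound of Lemma~\ref{thetheoremofGlobalOptimalityofStationaryPoint} to get $\frac{1-\gamma}{K}\sum_{k}\mathbb{E}[J(\bm{\pi}^{*})-J(\bm{\pi}_{\bm{\theta}(k)})]\leq\frac{2B+2\eta_{a}R_{0}}{K}\sum_{k}\mathbb{E}[\|\bm{\rho}(k)\|_{2}]+\mathcal{O}(B^{\frac{3}{2}}(mN)^{\frac{1}{4}-p})$, and then invokes Theorem~\ref{ConvergencetoStationaryPoint} with Cauchy--Schwarz exactly as you describe. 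Your identification of the coefficient $2(B+\eta_{a}R_{0})$, the absorption of the $B^{\frac{3}{2}}(mN)^{\frac{1}{4}-p}$ term, and the square-root distribution of the four error contributions all match the paper's argument.
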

\begin{proof}
By revisiting the definition of $\bm{\rho}(k)$ in (\ref{thedefinnitionofrho}), we leverage the property of the projection operator and have
\begin{align}\label{theprojectionoperateinequality}
\big(\eta_{a}\bm{\rho}(k)\!-\eta_{a}\nabla_{\bm{\theta}}J(\bm{\pi}_{\bm{\theta}(k)})\big)^{\top}\big(\eta_{a}\bm{\rho}(k)\!+\bm{\theta}(k)-\bm{\theta}\big)\leq0,
\end{align}
for all $\bm{\theta}\in S^{\theta}_{B}$.
According to (\ref{theprojectionoperateinequality}), we further have
\begin{align}
&\nabla_{\bm{\theta}}J(\bm{\pi}_{\bm{\theta}(k)})^{\top}\big(\bm{\theta}-\bm{\theta}(k)\big)\notag\\
\leq&\bm{\rho}^{\top}(k)\big(\bm{\theta}-\bm{\theta}(k)\big)-\eta_{a}\|\bm{\rho}(k)\|^{2}_{2}+\eta_{a}\bm{\rho}^{\top}(k)\nabla_{\bm{\theta}}J(\bm{\pi}_{\bm{\theta}(k)})\notag\\
\leq&\|\bm{\rho}(k)\|_{2}\big(\|\bm{\theta}-\bm{\theta}(k)\|_{2}+\eta_{a}\|\nabla_{\bm{\theta}}J(\bm{\pi}_{\bm{\theta}(k)})\|_{2}\big)\notag\\
\leq&\|\bm{\rho}(k)\|_{2}\Big(\|\bm{\theta}-\bm{\theta}(k)\|_{2}\notag\\
&+\eta_{a}\mathbb{E}_{\sigma_{k}}\big[\big|Q^{\bm{\pi}_{\bm{\theta}(k)}}(\bm{z})\big|\big\|\overline{\bm{\psi}}\big(\bm{z};\bm{\theta}(k)\big)\big\|_{2}\big]\Big)\notag\\
\leq&(2B+2\eta_{a}R_{0})\|\bm{\rho}(k)\|_{2},\label{theinequalityofthecorollaryofneuralpolicygradientmethod}
\end{align}
where the third inequality comes from (\ref{theresultofpolicygradientforagenti}) and the last inequality uses $\overline{\bm{\psi}}(\bm{z};\bm{\theta})$ in (\ref{thejointfeaturemapping}) and the fact that $\bm{\theta},\bm{\theta}(t)\in S^{\theta}_{B}$.
Note that the right-hand side of (\ref{theinequalityofthecorollaryofneuralpolicygradientmethod}) quantifies the deviation of $\bm{\theta}(k)$ from a stationary point $\bm{\widehat{\theta}}$.
By leveraging Lemma~\ref{thetheoremofGlobalOptimalityofStationaryPoint} and (\ref{theinequalityofthecorollaryofneuralpolicygradientmethod}), we have
\begin{align}
&\frac{1-\gamma}{K}\sum_{k=0}^{K-1}\mathbb{E}[J(\bm{\pi}^{*})-J(\bm{\pi}_{\bm{\theta}(k)})]\notag\\
\leq&\frac{2B+2\eta_{a}R_{0}}{K}\sum_{k=0}^{K-1}\mathbb{E}[\|\bm{\rho}(k)\|_{2}]+\mathcal{O}\big(B^{\frac{3}{2}}(mN)^{\frac{1}{4}-p}\big)\label{beforelastinequality}
\end{align}
By integrating Theorem~\ref{ConvergencetoStationaryPoint} with inequality~(\ref{beforelastinequality}), the proof of this theorem can be rigorously established.
\end{proof}
\par
Theorem~\ref{thecorollaryofneuralpolicygradientmethod} establishes the global convergence of Algorithm~\ref{distributedneuralpolicygradientAlgorithm}.
Note that in the right-hand side of (\ref{thelastinequalityintheorem}), the term $\tilde{\mathcal{O}}\big(B^{\frac{9}{4}}N^{\frac{1}{4}}T^{-\frac{3}{8}}_{c}\big)+\mathcal{O}\big(B^{\frac{9}{4}}m^{\frac{1-2p}{8}}N^{\frac{3-2p}{8}}\big)$ corresponds to the error introduced by function approximation in the distributed critic step, while the term $\frac{4(B+\eta_{a}R_{0})\sigma_{\xi}}{\sqrt{|\mathcal{B}|}}$ represents the policy gradient approximation error incurred in the decentralized actor step.
\section{Simulation}\label{SectionSimulations}
In this section, we evaluate the performance of Algorithm~\ref{distributedneuralpolicygradientAlgorithm} through simulation experiments conducted on robots path planning in various path networks.
\subsection{Robots path planning environment}
Similar to the path planning presented in~\cite{Zhou2023}, we
focuses on the path planning for $N$ robots (i.e., agents), where
acyclic path networks with varying structures are illustrated in Fig.~\ref{networkswithdiversestructures}.
In this figure, the ``blue'' nodes $\{b_{s}\}_{s=1,2,3}$ denote the initial positions of the agents, and the ``red'' node indicates the destination.
%
\begin{figure}[htbp]
    \centering
    \subfigure[Acyclic path networks]{\includegraphics[width=0.245\textwidth]{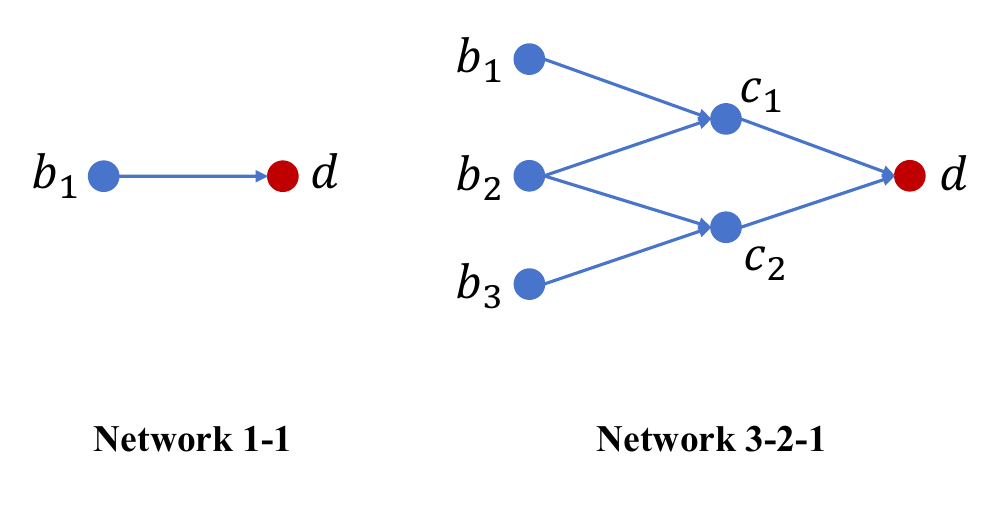}\label{networkswithdiversestructures}}
    \subfigure[Communication networks]{\includegraphics[width=0.2\textwidth]{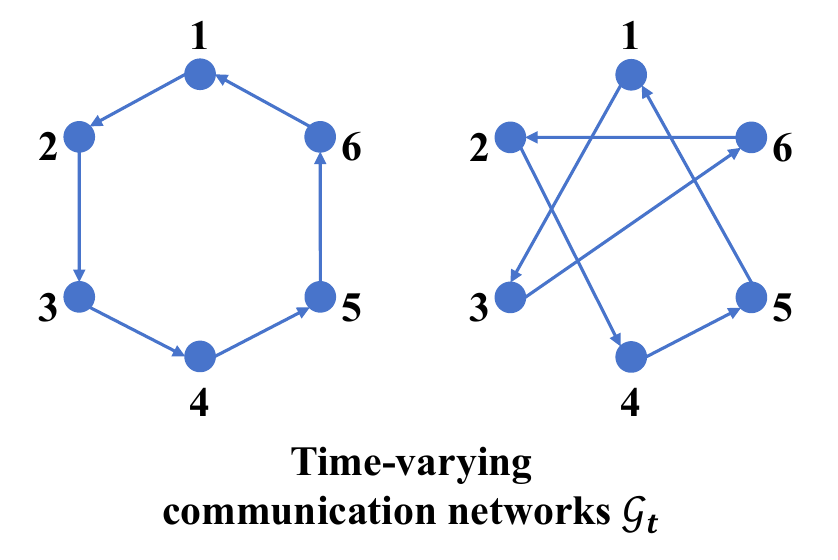}\label{communicationnetwork}}
    \caption{Acyclic path networks and communication networks in the robots path planning problem.}
    \label{networkss}
\end{figure}
\par
In this problem, the movement rules of agents are formally defined as follows.
(i) At each time $t$, every agent $i$ has the option to either traverse along an outward-directed path to reach the next location or remain stationary at its current position;
(ii) Prior to reaching the destination, each agent $i$ incurs a penalty at each time step;
(iii) If agent $i$ shares the same movement path with other agents, it will incur an additional penalty due to the potential risk of collision. The objective of the agents is to find an optimal joint policy that enables them to reach the destination as efficiently as possible while minimizing the likelihood of collisions.

\par
Based on the aforementioned description, we develop a NMARL model, in which the state, action, and reward are respectively defined as follows.
\par
\textbf{State and action}: For each agent $i$, its local state $s_i$ and local action $a_i$ are formally defined as its location and movement, respectively.
In order to clearly understand the agents' movement dynamics, we illustrate this concept using the path network 3-2-1 depicted in Fig.~\ref{networkswithdiversestructures}.
If agent $i$ is located at node $b_2$, it has the option to remain stationary at the current node for one time step or to transition to the next location via edge $(b_2, c_1)$ or edge $(b_2, c_2)$.
\par
\textbf{Reward function}:
Let $\bm{s}_{t} = (s_{1,t}, \cdots, s_{N,t})$ and $\bm{a}_{t} = (a_{1,t}, \cdots, a_{N,t})$ represent the global state and global action of agents at time $t$, respectively.
The reward function for each agent $i$ prior to reaching the destination location is defined as
\begin{align}\notag
r_{i}(\bm{s}_{t},\bm{a}_{t})=\left\{
\begin{array}{ll}
-r_{\mathrm{cost}},\mathrm{if}~s_{i,t+1}=s_{i,t}~\mathrm{or}~p_{s_{i,t}\rightarrow s_{i,t+1}}=1\\
-r_{\mathrm{cost}}-\frac{p_{s_{i,t}\rightarrow s_{i,t+1}}}{N}r_{\mathrm{collision}},\mathrm{otherwise},
\end{array}
\right.
\end{align}
where $r_{\mathrm{cost}}=0.5$ is the time run cost, $r_{\mathrm{collision}}=0.5$ is the penalty of collision, and $p_{s_{i,t}\rightarrow s_{i,t+1}}$ is the number of agents moving through path $s_{i,t}\rightarrow s_{i,t+1}$.
Specifically, upon reaching the destination location, the reward received by agent $i$ will remain at zero.
\subsection{Path network~1-1}
In this subsection, we consider a simple path network 1-1, where the number of agents is $N=6$, the discount factor is $\gamma=0.9$, and all agents are initially positioned at $b_{1}$.
The time-varying communication networks are illustrated in Fig.~\ref{communicationnetwork}, where the agents' communication alternates between these two networks.
To demonstrate the effectiveness of the proposed Algorithm~\ref{distributedneuralpolicygradientAlgorithm} in terms of convergence performance and optimality, we compare it with the centralized algorithm presented in~\cite{MeiICML2020}, which has been rigorously proven to converge to the global optimal value.
Both Algorithm~\ref{distributedneuralpolicygradientAlgorithm} and the centralized algorithm are tested on a hardware device (Intel Xeon Gold 6326 CPU @ 3.50GHz 2.90 GHz).
\par
The discounted average cumulative rewards, i.e., the objective function $J(\bm{\pi}_{\bm{\theta}(k)})$ and the norm of the policy gradient $\|\nabla_{\bm{\theta}}J(\bm{\pi}_{\bm{\theta}(k)})\|_{2}$, generated by the centralized algorithm and our Algorithm~\ref{distributedneuralpolicygradientAlgorithm}, are depicted in Figs.~\ref{NMARL-ITP1-1}-\ref{NMARL-ITP1-1policygradient} versus  running time, respectively.
As illustrated in Fig.~\ref{NMARL-ITP1-1}, the convergence of the proposed Algorithm~\ref{distributedneuralpolicygradientAlgorithm} closely approximates the optimal value generated by the centralized algorithm, thereby validating both the convergence and optimality of Algorithm~\ref{distributedneuralpolicygradientAlgorithm}.
Additionally, as another critical indicator of convergence, the norm of the policy gradient for Algorithm~\ref{distributedneuralpolicygradientAlgorithm} is depicted in Fig.~\ref{NMARL-ITP1-1policygradient}, which asymptotically approaches zero.
It is worth noting that, while the centralized algorithm can achieve the optimal value, it necessitates calculating the exact policy gradient during the learning process.
This undoubtedly results in a significantly longer time per iteration, as demonstrated in Figs.~\ref{NMARL-ITP1-1}--\ref{NMARL-ITP1-1policygradient}.
In contrast, agents in Algorithm~\ref{distributedneuralpolicygradientAlgorithm} utilize approximate policy gradients to update their policy parameters, thereby effectively reducing the computational runtime of the algorithm.
\begin{figure}[htbp]
    \centering
    \subfigure[ $J(\bm{\pi}_{\bm{\theta}(k)})$]{\includegraphics[width=0.24\textwidth]{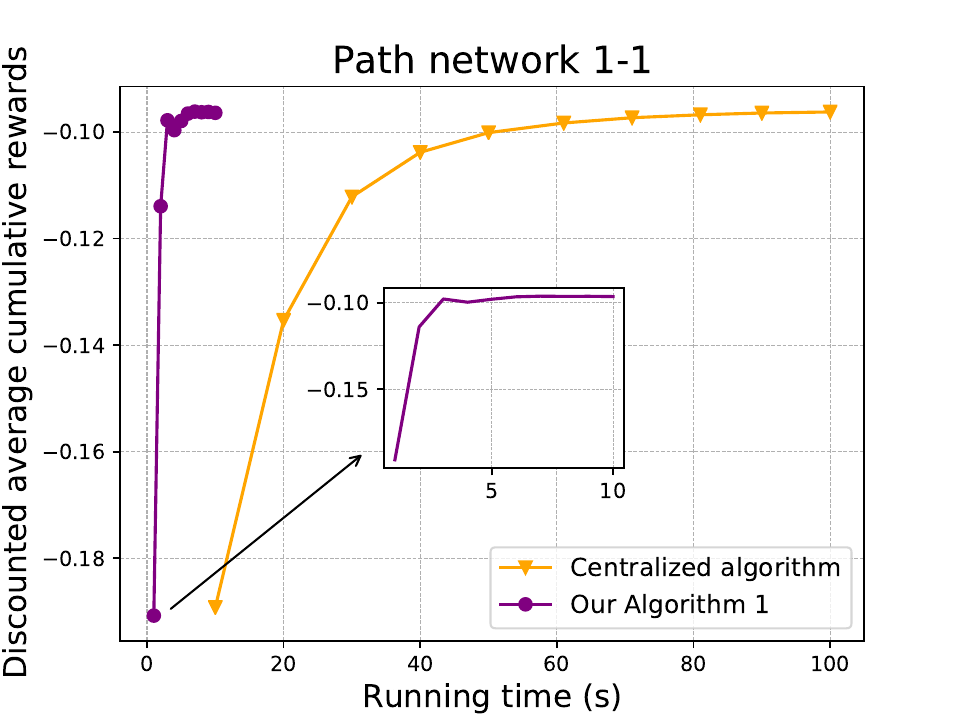}\label{NMARL-ITP1-1}}
    \subfigure[ $\|\nabla_{\bm{\theta}}J(\bm{\pi}_{\bm{\theta}(k)})\|_{2}$]{\includegraphics[width=0.24\textwidth]{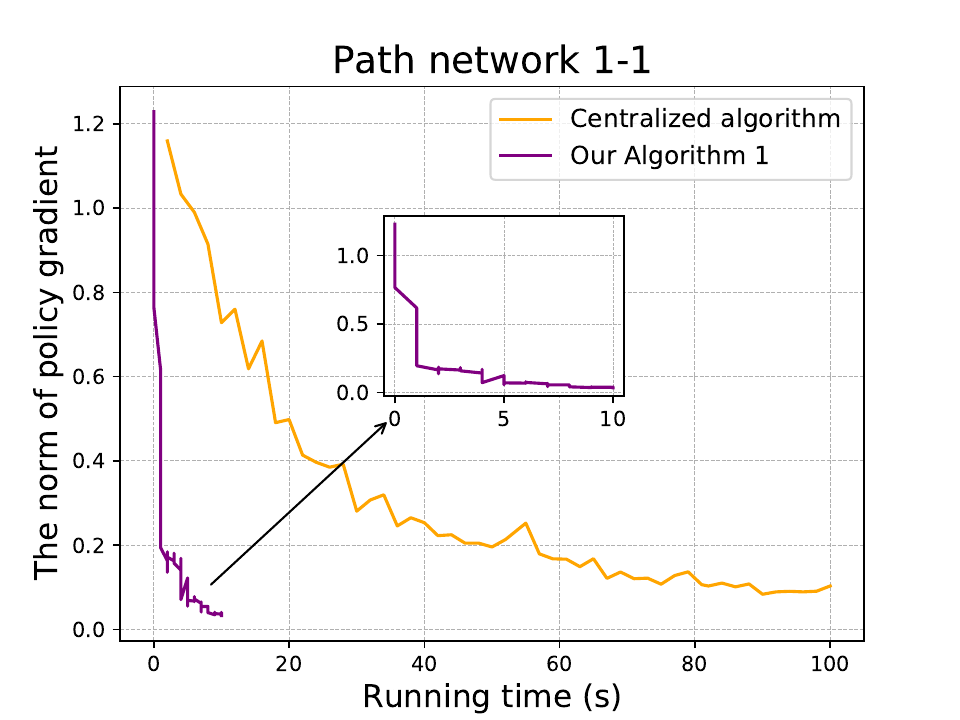}\label{NMARL-ITP1-1policygradient}}
    \caption{Performances of the centralized algorithm and our Algorithm~\ref{distributedneuralpolicygradientAlgorithm} in path network~1-1.}
    \label{network11tot}
\end{figure}
\par
The policies of agents generated by Algorithm~\ref{distributedneuralpolicygradientAlgorithm} at the 0-th, 1000-th, 2000-th, 3000-th, and 4000-th iterations are selected to exhibit the decision-making process regarding agents' movements during training.
As illustrated in Fig.~\ref{NMARL-ITP1-1policy}, with the increase in the number of iterations, the probabilities of agents moving forward at location $b_{1}$ progressively converge toward those learned by the centralized algorithm, which further validates the effectiveness of our Algorithm~\ref{distributedneuralpolicygradientAlgorithm}.
\begin{figure}[!htb]
\centering
\includegraphics[width=0.6\hsize]{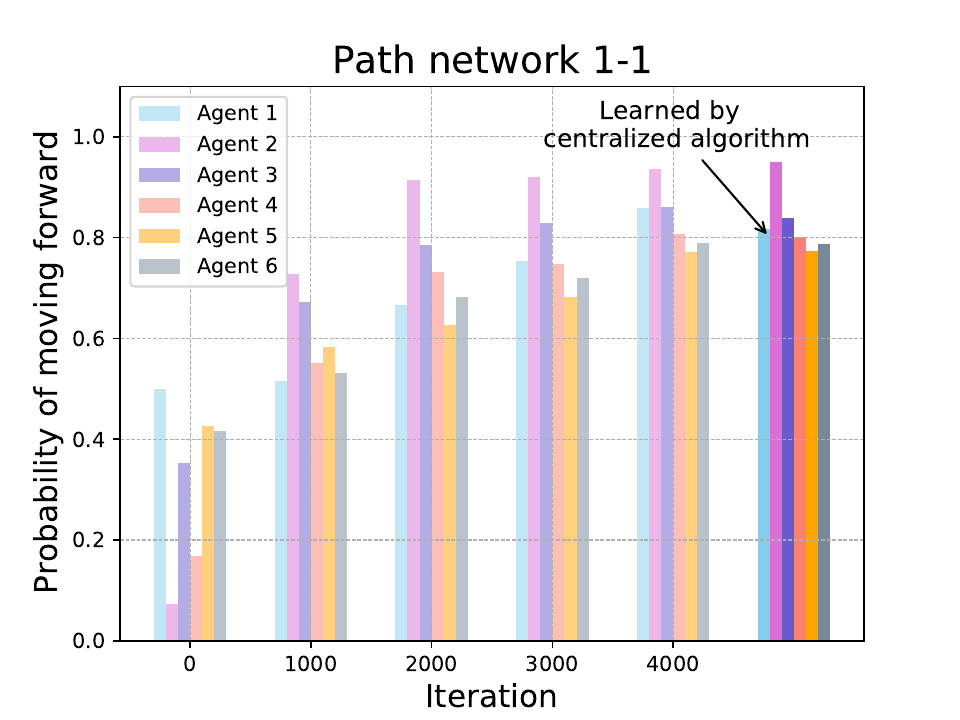}
\caption{Policies of agents at location $b_{1}$ generated by Algorithm~\ref{distributedneuralpolicygradientAlgorithm} in path network~1-1.}\label{NMARL-ITP1-1policy}
\end{figure}
\par
To evaluate the convergence performance of Algorithm~\ref{distributedneuralpolicygradientAlgorithm} under varying conditions, an ablation study is conducted by adjusting the size of the sample batch $\mathcal{B}$.
The discounted average cumulative rewards $J(\bm{\pi}_{\bm{\theta}(k)})$ and the norm of the policy gradient $\|\nabla_{\bm{\theta}}J(\bm{\pi}_{\bm{\theta}(k)})\|_{2}$ generated by Algorithm~\ref{distributedneuralpolicygradientAlgorithm} with $|\mathcal{B}|=4, 6, 8$ are presented in Figs.~\ref{NMARL-ITP1-1changeB}-\ref{NMARL-ITP1-1policygradientchangeB}.
These results clearly indicate that increasing the size of the sample batch $\mathcal{B}$ significantly enhances both the objective function and the policy gradient of Algorithm~\ref{distributedneuralpolicygradientAlgorithm}.
\begin{figure}[htbp]
    \centering
    \subfigure[ $J(\bm{\pi}_{\bm{\theta}(k)})$]{\includegraphics[width=0.24\textwidth]{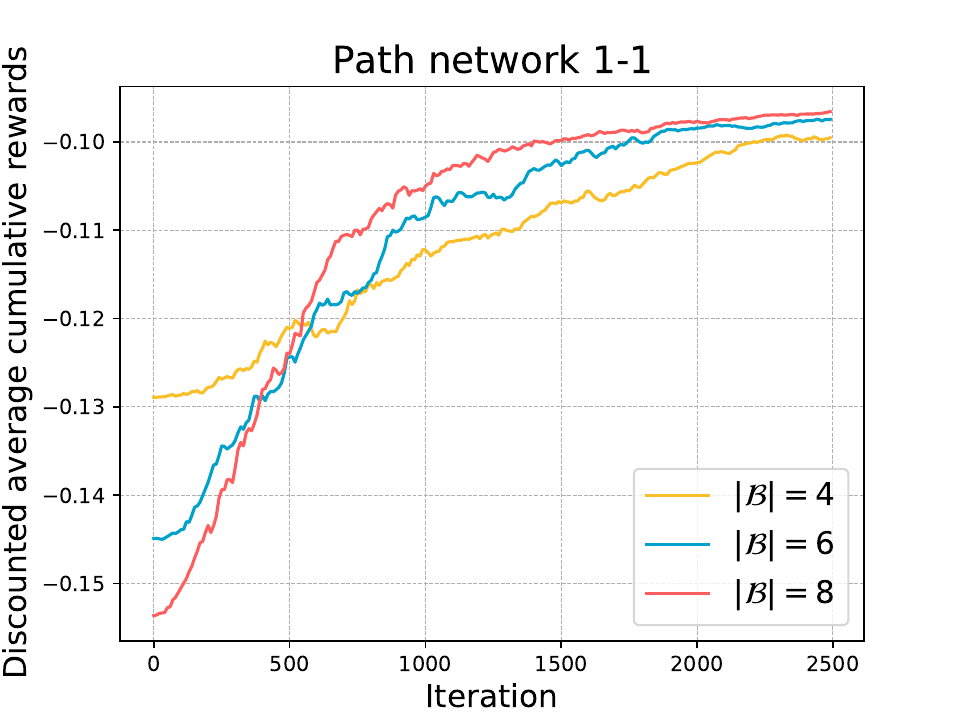}\label{NMARL-ITP1-1changeB}}
    \subfigure[ $\|\nabla_{\bm{\theta}}J(\bm{\pi}_{\bm{\theta}(k)})\|_{2}$]{\includegraphics[width=0.24\textwidth]{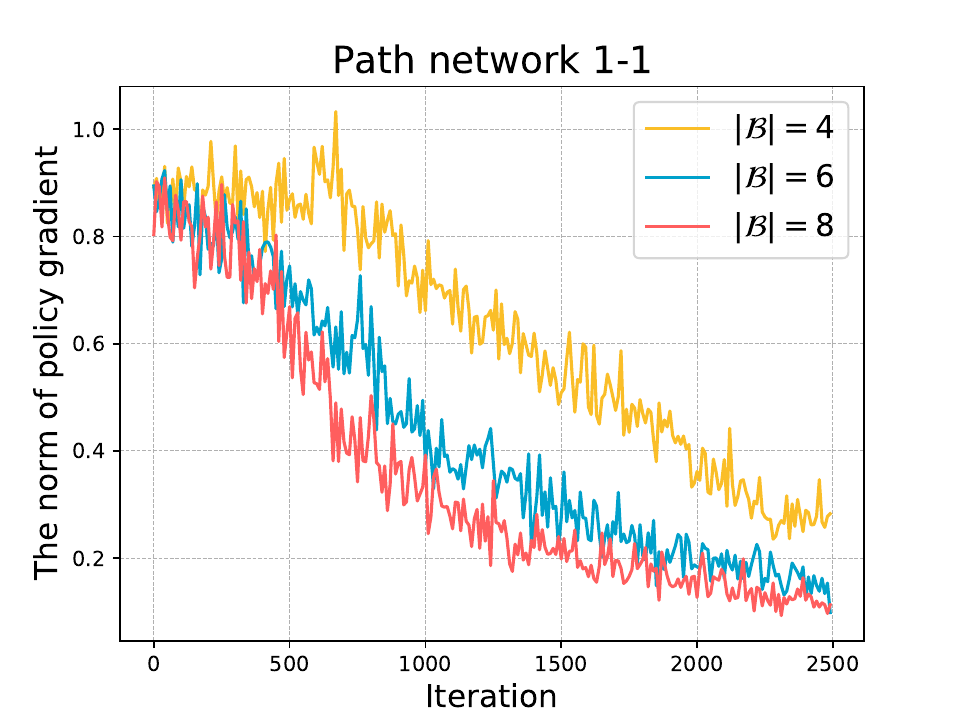}\label{NMARL-ITP1-1policygradientchangeB}}
    \caption{Performances of Algorithm~\ref{distributedneuralpolicygradientAlgorithm} with $|\mathcal{B}|=4,6,8$ in path network~1-1.}
    \label{network1changeBtot}
\end{figure}
\subsection{Path network~3-2-1}
This subsection considers the path planning problem on path network~3-2-1, where the number of agents is $N=6$, the discount factor is $\gamma=0.9$, and the initial positions are $b_{1}, b_{2}, b_{3}, b_{1}, b_{2}, b_{3}$.
The time-varying communication networks of the agents are depicted in Fig.~\ref{communicationnetwork}.
\par
The performance comparisons between our Algorithm~\ref{distributedneuralpolicygradientAlgorithm} and the centralized algorithm are illustrated in Figs.~\ref{NMARL-ITP3-2-1}-\ref{NMARL-ITP3-2-1policygradient}, which respectively depict the values of the objective function $J(\bm{\pi}_{\bm{\theta}(k)})$ and the norm of the policy gradient $\nabla_{\bm{\theta}}J(\bm{\pi}_{\bm{\theta}(k)})$.
Similar to the results presented in Fig.~\ref{network11tot}, Fig.~\ref{network321tot} indicates that Algorithm~\ref{distributedneuralpolicygradientAlgorithm} achieves a close approximation to the performance of the centralized algorithm in terms of both the objective function and the policy gradient norm.
Furthermore, our Algorithm~\ref{distributedneuralpolicygradientAlgorithm} substantially reduces the running time compared to the centralized algorithm.
\begin{figure}[htbp]
    \centering
    \subfigure[ $J(\bm{\pi}_{\bm{\theta}(k)})$]{\includegraphics[width=0.24\textwidth]{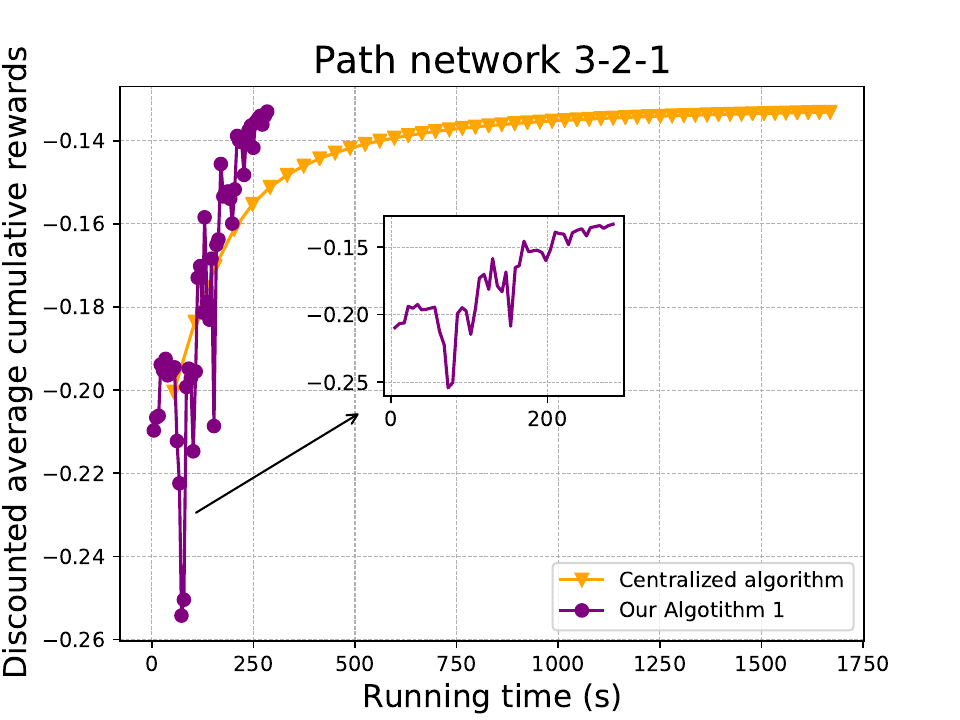}\label{NMARL-ITP3-2-1}}
    \subfigure[ $\|\nabla_{\bm{\theta}}J(\bm{\pi}_{\bm{\theta}(k)})\|_{2}$]{\includegraphics[width=0.24\textwidth]{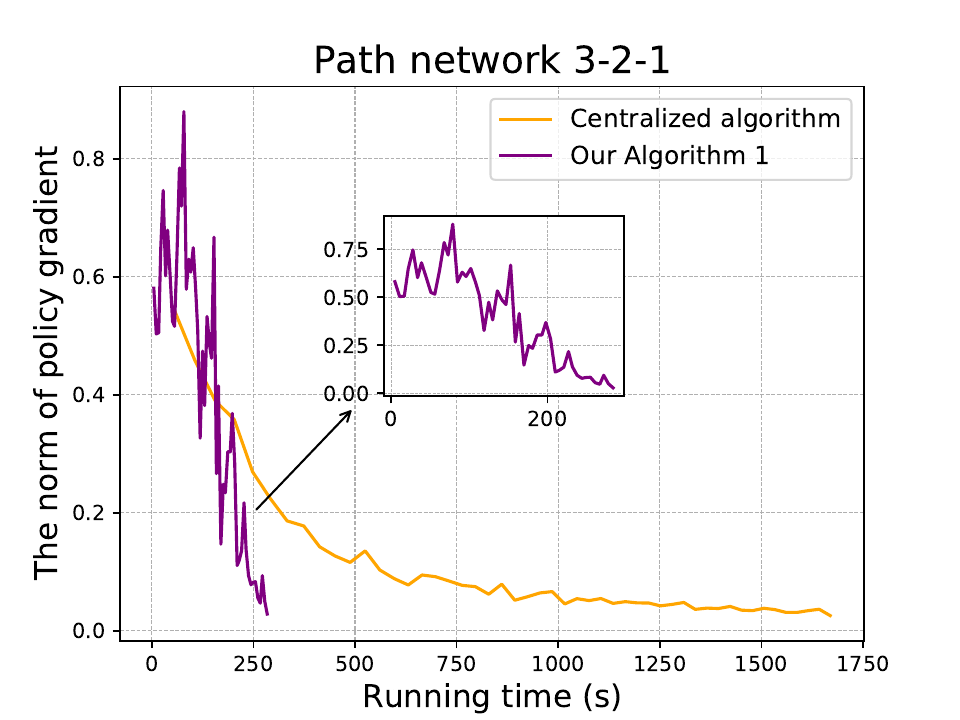}\label{NMARL-ITP3-2-1policygradient}}
    \caption{Performances of the centralized algorithm and our Algorithm~\ref{distributedneuralpolicygradientAlgorithm} in path network~3-2-1.}
    \label{network321tot}
\end{figure}
\par
The policies of selected agents~1, 3, 4, and 6 at their initial locations, generated by Algorithm~\ref{distributedneuralpolicygradientAlgorithm} and the centralized algorithm, are illustrated in Fig.~\ref{NMARL-ITP3-2-1policy}.
It can be observed that the probabilities of forward movement for these agents gradually increase and approach 1, attributed to the presence of only two agents at each initial location and relatively low collision costs.
Additionally, as the iterations progress, the probabilities of forward movement for these agents generated by our Algorithm~\ref{distributedneuralpolicygradientAlgorithm} increasingly converge with those produced by the centralized algorithm.
\begin{figure}[!htb]
\centering
\includegraphics[width=0.6\hsize]{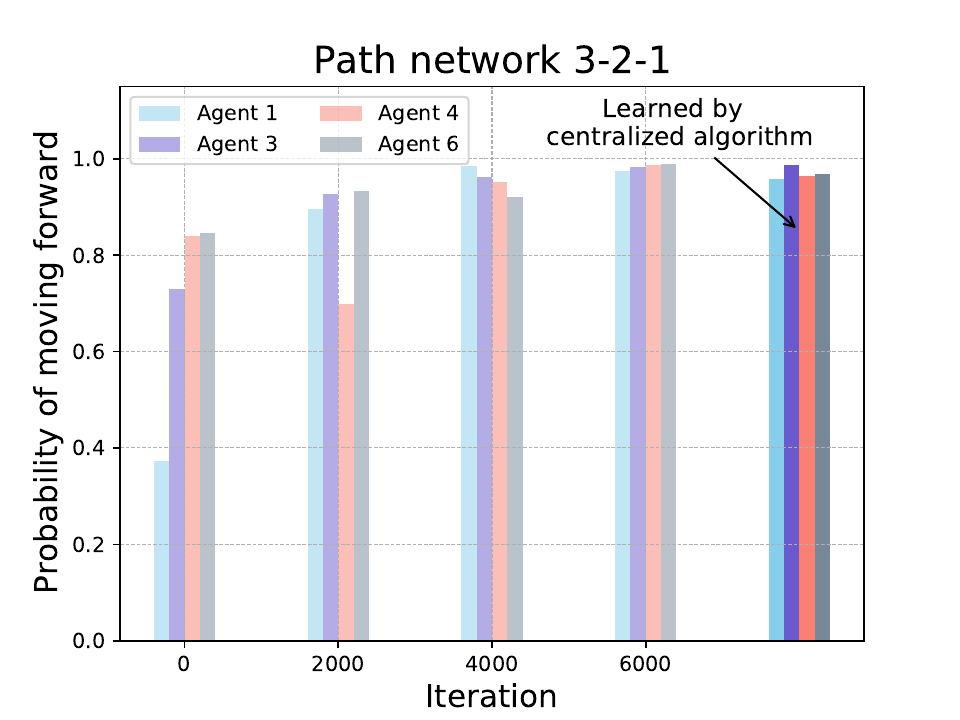}
\caption{Policies of agent~1, 3, 4, and 6 at initial location generated by Algorithm~\ref{distributedneuralpolicygradientAlgorithm} in path network~3-2-1.}\label{NMARL-ITP3-2-1policy}
\end{figure}
\par
In the path network 3-2-1, the impact of varying sample batch sizes on Algorithm~\ref{distributedneuralpolicygradientAlgorithm} is illustrated in Fig.~\ref{network321changeBtot}.
These results suggest that as the sample batch size increases, there is a gradual improvement in both the objective function and the policy gradient norm.
\begin{figure}[htbp]
    \centering
    \subfigure[ $J(\bm{\pi}_{\bm{\theta}(k)})$]{\includegraphics[width=0.24\textwidth]{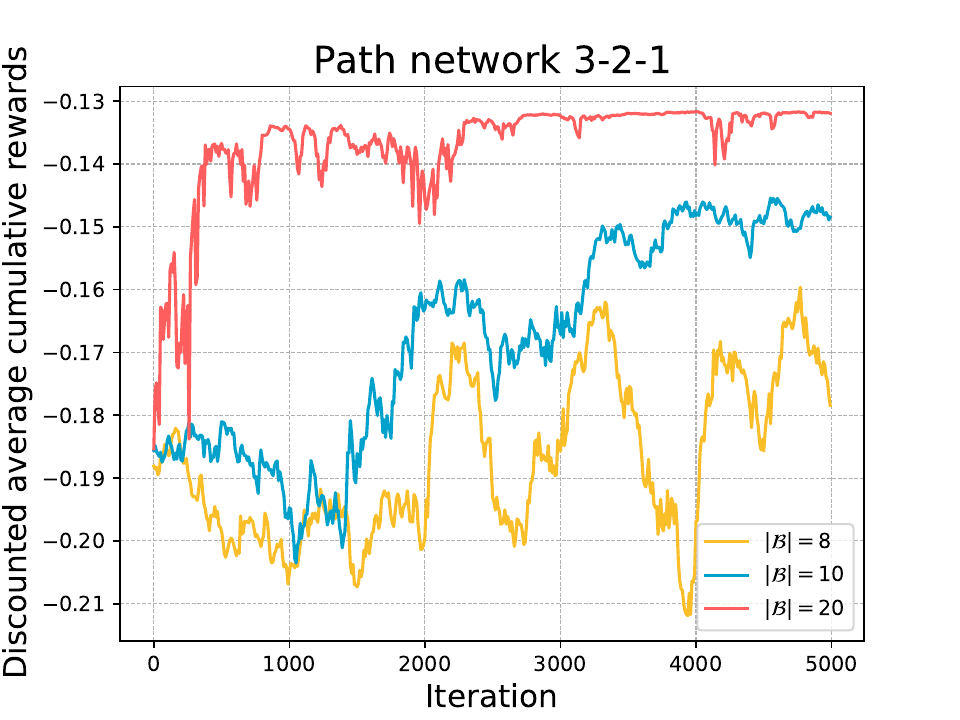}\label{NMARL-ITP3-2-1changeB}}
    \subfigure[ $\|\nabla_{\bm{\theta}}J(\bm{\pi}_{\bm{\theta}(k)})\|_{2}$]{\includegraphics[width=0.24\textwidth]{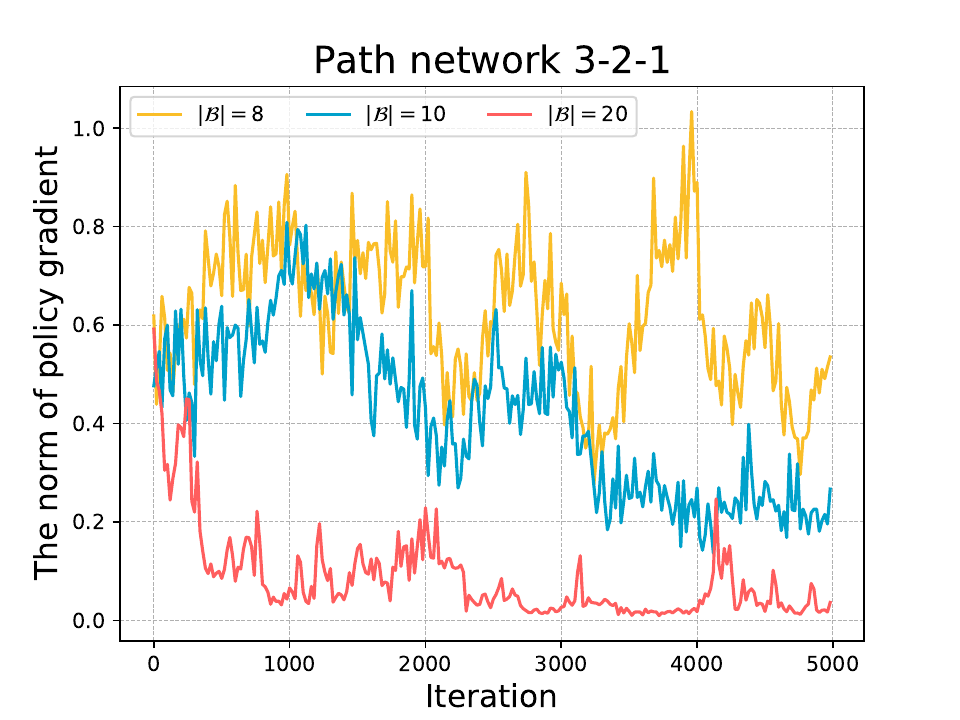}\label{NMARL-ITP3-2-1policygradientchangeB}}
    \caption{Performances of Algorithm~\ref{distributedneuralpolicygradientAlgorithm} with $|\mathcal{B}|=8,10,20$ in path network~3-2-1.}
    \label{network321changeBtot}
\end{figure}
\par
To investigate the impact of isolated agents in communication networks on Algorithm~\ref{distributedneuralpolicygradientAlgorithm}, we consider two scenarios: (i) agents 1 and 4 are isolated; (ii) agents 2 and 5 are isolated.
As depicted in Fig.~\ref{network321changeisolate}, the presence of isolated agents has an impact on the algorithm's performance.
This is particularly evident in second case as agents 2 and 5 occupy critical positions that amplify their influence on the algorithm's behavior.
\begin{figure}[htbp]
    \centering
    \subfigure[ $J(\bm{\pi}_{\bm{\theta}(k)})$]{\includegraphics[width=0.24\textwidth]{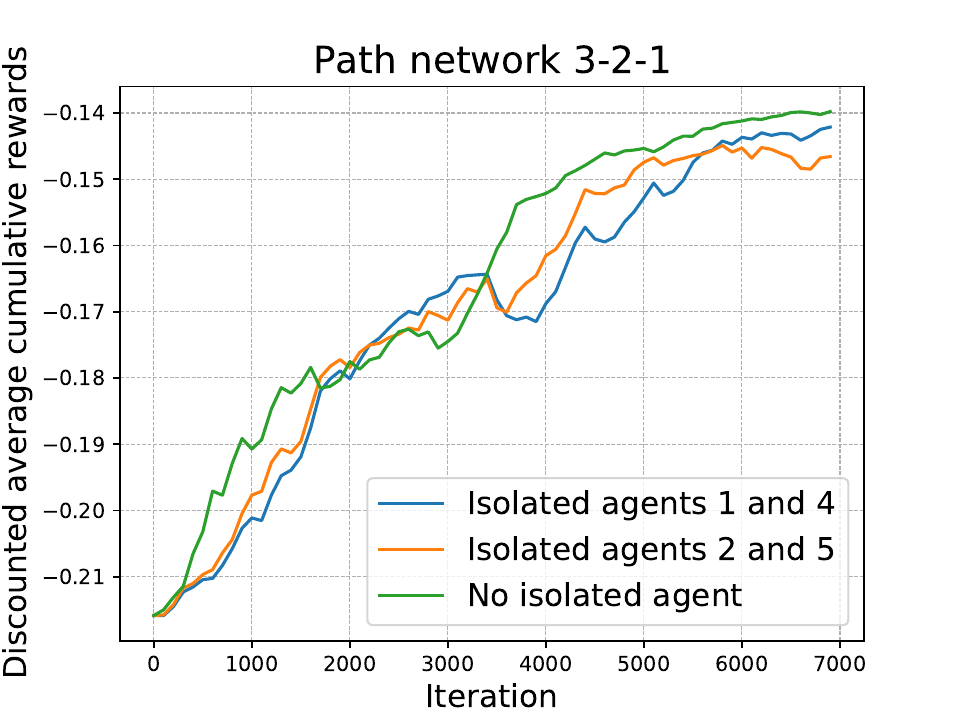}\label{NMARL-ITP3-2-1change6}}
    \subfigure[ $\|\nabla_{\bm{\theta}}J(\bm{\pi}_{\bm{\theta}(k)})\|_{2}$]{\includegraphics[width=0.24\textwidth]{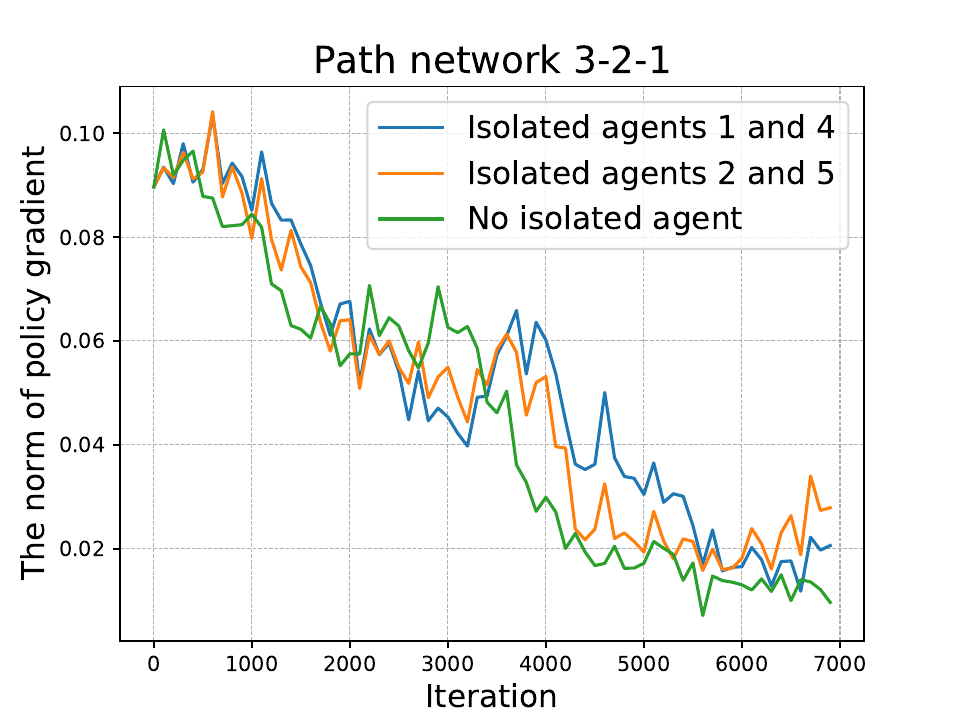}\label{NMARL-ITP3-2-1policygradientchange6}}
    \caption{Performances of Algorithm~\ref{distributedneuralpolicygradientAlgorithm} with isolated agents in path network~3-2-1.}
    \label{network321changeisolate}
\end{figure}

\section{Conclusions}\label{SectionVConclusions}
This paper introduces two novel neural network architectures to propose a distributed neural policy gradient algorithm, which comprises two key components: the distributed critic step and the decentralized actor step.
In this algorithm, agents exclusively utilize approximated parameters from their neighboring agents, thereby eliminating the requirement for a centralized controller.
Additionally, the global convergence is established to provide the theoretical support for the proposed algorithm.
In the future, the application of the distributed neural policy gradient in realistic intelligent scenarios, such as smart grids and intelligent transportation systems, can be further explored.
\appendix

\subsection{Proof of Lemma~\ref{lemmaofaverageconsensus}}\label{theproofoflemmaofaverageconsensus}
\begin{proof}
Recall that $g^{k}_{i}(t)$ in (\ref{theTDlearningindistributedneuralTDalgorithm1}) and $e^{k}_{i}(t)$ in (\ref{thedefinitionofstochastic-5}), define $\mathbf{G}^{k}(t)=\big(g^{k}_{1}(t)^{\top},\cdots,g^{k}_{N}(t)^{\top}\big)^{\top}$ and $\mathbf{E}^{k}(t)=\big(e^{k}_{1}(t)^{\top},\cdots,e^{k}_{N}(t)^{\top}\big)^{\top}$.
According to definition of $C=I_{N}-(1/N)\mathbf{1}_{N}\mathbf{1}_{N}^{\top}$, we have
\begin{align}
&(C\otimes I_{dmN})W_{tot}(t+1)\notag\\
=&W_{tot}(t+1)-\mathbf{1}_{N}\otimes\overline{W}(t+1)\notag\\
=&\big(A(t)\otimes I_{dmN}\big)W_{tot}(t)-\eta_{c,t}\mathbf{G}^{k}(t)-\mathbf{E}^{k}(t)\notag\\
&-\big(\mathbf{1}_{N}\otimes\overline{W}(t)-\eta_{c,t}\mathbf{1}_{N}\otimes g^{k}_{\mathrm{ave}}(t)-\mathbf{1}_{N}\otimes e^{k}_{\mathrm{ave}}(t)\big)\label{theequalityofaverageconsensusinequality}\\
=&\big(A(t)\otimes I_{dmN}\big)(C\otimes I_{dmN})W_{tot}(t)\notag\\
&-\eta_{c,t}(C\otimes I_{dmN})\mathbf{G}^{k}(t)-(C\otimes I_{dmN})\mathbf{E}^{k}(t)\notag\\
=&\prod_{t'=0}^{t}\big(A(t-t')\otimes I_{dmN}\big)(C\otimes I_{dmN})W_{tot}(0)\notag\\
&+\sum_{t'=0}^{t}\eta_{c,t'}\prod_{l=t'+1}^{t}\big(A(l)\otimes I_{dmN}\big)(C\otimes I_{dmN})\mathbf{G}^{k}(t')\notag\\
&-\sum_{t'=0}^{t}\prod_{l=t'+1}^{t}\big(A(l)\otimes I_{dmN}\big)(C\otimes I_{dmN})\mathbf{E}^{k}(t'),\label{theequalityofaverageconsensus}
\end{align}
where (\ref{theequalityofaverageconsensusinequality}) comes from (\ref{thekeystepindistributedneuralTDalgorithm}).
By taking the norm on both sides of (\ref{theequalityofaverageconsensus}), we have
\begin{align}
&\|(C\otimes I_{dmN})W_{tot}(t+1)\|_{2}\notag\\
\leq&\Big\|\Big(\prod_{t'=0}^{t}A(t-t')\Big)C\otimes I_{dmN}\Big\|_{2}\big\|W_{tot}(0)\big\|_{2}\notag\\
&+\sum_{t'=0}^{t}\eta_{c,t'}\Big\|\Big(\prod_{l=t'+1}^{t}A(l)\Big)C\otimes I_{dmN}\Big\|_{2}\big\|\mathbf{G}^{k}(t')\big\|_{2}\notag\\
&+\sum_{t'=0}^{t}\Big\|\Big(\prod_{l=t'+1}^{t}A(l)\Big)C\otimes I_{dmN}\Big\|_{2}\big\|\mathbf{E}^{k}(t')\big\|_{2}\notag\\
\leq&\frac{\lambda^{t+1}_{D}}{\lambda}\|W_{tot}(0)\|_{2}+\sum_{t'=0}^{t}\frac{\lambda^{t-t'}_{D}}{\lambda}\eta_{c,t'}\|\mathbf{G}^{k}(t')\|_{2}\notag\\
&+\sum_{t'=0}^{t}\frac{\lambda^{t-t'}_{D}}{\lambda}\|\mathbf{E}^{k}(t')\|_{2},\label{inequalityofaverageconsensus}
\end{align}
where the first inequality uses the triangle inequality and the fact that the norm of $C$ is less than 1, and the last inequality follows from Proposition~4 in~\cite{nedic2018}.
By taking expectation on both sides of (\ref{inequalityofaverageconsensus}), we have
\begin{align}
&\mathbb{E}_{\mathrm{init}}[\|(C\otimes I_{dmN})W_{tot}(t+1)\|_{2}]\notag\\
\leq&\frac{\lambda^{t+1}_{D}}{\lambda}\mathbb{E}_{\mathrm{init}}[\|W_{tot}(0)\|_{2}]+\sum_{t'=0}^{t}\frac{\lambda^{t-t'}_{D}}{\lambda}\eta_{c,t'}\mathbb{E}_{\mathrm{init}}[\|\mathbf{G}^{k}(t')\|_{2}]\notag\\
&+\sum_{t'=0}^{t}\frac{\lambda^{t-t'}_{D}}{\lambda}\mathbb{E}_{\mathrm{init}}[\|\mathbf{E}^{k}(t')\|_{2}].\label{resultofinequalityofaverageconsensusbefore}
\end{align}
\par
For $\mathbf{G}^{k}(t')$ the right-hand side of (\ref{resultofinequalityofaverageconsensusbefore}), we can use (\ref{theTDlearningindistributedneuralTDalgorithm1}) and obtain
\begin{align}
&\mathbb{E}_{\mathrm{init}}[\|g^{k}_{i}(t')\|^{2}_{2}]\notag\\
=&\mathbb{E}_{\mathrm{init}}\big[\big\|\delta^{k}_{i}(t')\nabla_{W_{i}}\widehat{Q}_{i,t'}(\bm{z})\big\|^{2}_{2}\big]\notag\\
\leq& (mN)^{1-2p}\mathbb{E}_{\mathrm{init}}\big[\big|\delta^{k}_{i,t'}\big|^{2}\big]\notag\\
\leq& (mN)^{1-2p}\mathbb{E}_{\mathrm{init}}\big[\big(\widehat{Q}_{i,t'}(\bm{z})-(1-\gamma)r_{i}(\bm{z})-\gamma\widehat{Q}_{i,t'}(\bm{z}')\big)^{2}\big]\notag\\
\leq&3(mN)^{1-2p}\mathbb{E}_{\mathrm{init}}\big[\widehat{Q}_{i,t'}(\bm{z})^{2}+\widehat{Q}_{i,t'}(\bm{z}')^{2}+R^{2}_{0}\big]\notag\\
\leq& (mN)^{1-2p}\mathbb{E}_{\mathrm{init}}\big[12\widehat{Q}_{0}(\bm{z})^{2}+12B^{2}(mN)^{1-2p}+3R^{2}_{0}\big]\notag\\
\leq& 3R^{2}_{0}(mN)^{1-2p}+(12d_{1}+12B^{2})(mN)^{2-4p},\label{thesecondinequalityofVarianceBound}
\end{align}
where the first inequality follows from (iv) of Fact \ref{thefactinpaper}, the third inequality uses Assumption \ref{theassumptionofreward} and the fact that $(a + b + c)^2 \leq 3a^2 + 3b^2 + 3c^2$ for all $a, b, c \in \mathbb{R}$, the forth inequality is obtained from (v) of Fact \ref{thefactinpaper}, and the last inequality follows from (ii) of Fact \ref{thefactinpaper}.
According to (\ref{thesecondinequalityofVarianceBound}), we further have
\begin{align}\label{theboundofg}
\mathbb{E}_{\mathrm{init}}[\|g^{k}_{i}(t')\|_{2}]\leq\mathcal{O}\big(B(mN)^{\frac{1}{2}-p}\big).
\end{align}
\par
For $\mathbf{E}^{k}(t')$ in the right-hand side of  (\ref{resultofinequalityofaverageconsensusbefore}), we use (\ref{thedefinitionofstochastic-5}) and the projection property to derive
\begin{align}\label{theprojectionpropertyofe}
\|e^{k}_{i}(t')\|_{2}\leq\eta_{c,t'}\|g^{k}_{i}(t)\|_{2}.
\end{align}
Substituting (\ref{theboundofg}) and (\ref{theprojectionpropertyofe}) in (\ref{resultofinequalityofaverageconsensusbefore}), we have
\begin{align}
&\mathbb{E}_{\mathrm{init}}[\|(C\otimes I_{dmN})W_{tot}(t+1)\|_{2}]\notag\\
\leq&\frac{\sqrt{md_{1}}N}{\lambda}\lambda^{t+1}_{D}+\sum_{t'=0}^{t}\frac{\lambda^{t-t'}_{D}}{\lambda}\eta_{c,t'}\mathcal{O}(Bm^{\frac{1}{2}-p}N^{1-p}).\label{resultofinequalityofaverageconsensus}
\end{align}
Substituting $\eta_{c,t}=\frac{1-\gamma}{24\sqrt{t+1}}$ into (\ref{resultofinequalityofaverageconsensus}), we have
\begin{align}
&\mathbb{E}_{\mathrm{init}}[\|(C\otimes I_{dmN})W_{tot}(t)\|_{2}]\notag\\
\leq&\frac{\sqrt{md_{1}}N}{\lambda}\lambda^{t}_{D}+\sum_{t'=0}^{t-1}\frac{(1-\gamma)\lambda^{t-1-t'}_{D}}{24\lambda\sqrt{t'+1}}\mathcal{O}(Bm^{\frac{1}{2}-p}N^{1-p}),\notag
\end{align}
which completes the proof.
\end{proof}

\subsection{Proof of Lemma \ref{Lemma4-6}}\label{theproofofLemma4-6}
\underline{For (i) of Lemma \ref{Lemma4-6}:}
By recalling the definitions of $g^{k}_{i}(t)$ in (\ref{theTDlearningindistributedneuralTDalgorithm1}) and $\bar{g}^{k}_{i}(t)$ in (\ref{thedefinitionofstochastic-2}), we have
\begin{align}\label{thefirstinequalityofVarianceBound}
&\mathbb{E}_{\mathrm{init}}[\|g^{k}_{i}(t)-\bar{g}^{k}_{i}(t)\|^{2}_{2}]\notag\\
\leq&2\mathbb{E}_{\mathrm{init}}[\|g^{k}_{i}(t)\|^{2}_{2}]+2\mathbb{E}_{\mathrm{init}}[\|\bar{g}^{k}_{i}(t)\|^{2}_{2}].
\end{align}
By using $\bar{g}^{k}_{i}(t)=\mathbb{E}_{\varsigma^{+}_{k}}[g^{k}_{i}(t)]$ in (\ref{thedefinitionofstochastic-2}) and (\ref{thesecondinequalityofVarianceBound}), we have
\begin{align}
&\mathbb{E}_{\mathrm{init},\varsigma^{+}_{k}}[\|g^{k}_{i}(t)-\bar{g}^{k}_{i}(t)\|^{2}_{2}]\notag\\
\leq& 12R^{2}_{0}(mN)^{1-2p}+(48d_{1}+48B^{2})(mN)^{2-4p}.\label{fromthefacti}
\end{align}
\par
\underline{For (ii) of Lemma \ref{Lemma4-6}:}
By the definitions of $\bar{g}^{k}_{i}(t)$ in (\ref{thedefinitionofstochastic-2}) and $\bar{g}^{k}_{0,\mathrm{ave}}(t)$ in (\ref{thedefinitionofstochastic2-3}), we have
\begin{align}\label{gradientdifference}
&\|\bar{g}^{k}_{i}(t)-\bar{g}^{k}_{0,\mathrm{ave}}(t)\|_{2}\notag\\
=&\big\|\mathbb{E}_{\varsigma_{k}^{+}}\big[\delta^{k}_{i}(t)\nabla_{W_{i}}\widehat{Q}_{i,t}(\bm{z})\big]\notag\\
&-\mathbb{E}_{\varsigma_{k}^{+}}\big[\delta^{k}_{0,\mathrm{ave}}(t)\nabla_{W'}\widehat{Q}_{0}\big(\bm{z};\overline{W}(t)\big)\big]\big\|_2\notag\\
=&\Big\|\mathbb{E}_{\varsigma_{k}^{+}}\Big[\Big(\delta^{k}_{i}(t)-\delta^{k}_{0,\mathrm{ave}}(t)\Big)\nabla_{W_{i}}\widehat{Q}_{i,t}(\bm{z})\notag\\
&+\delta^{k}_{0,\mathrm{ave}}(t)\Big(\nabla_{W_{i}}\widehat{Q}_{i,t}(\bm{z})-\nabla_{W'}\widehat{Q}_{0}\big(\bm{z};\overline{W}(t)\big)\Big)\Big]\Big\|_{2}\notag\\
\leq&\mathbb{E}_{\varsigma_{k}^{+}}\Big[(mN)^{\frac{1}{2}-p}\big|\delta^{k}_{i}(t)-\delta^{k}_{0,\mathrm{ave}}(t)\big|\notag\\
&+\big|\delta^{k}_{0,\mathrm{ave}}(t)\big|\big\|\nabla_{W_{i}}\widehat{Q}_{i,t}(\bm{z})-\nabla_{W'}\widehat{Q}_{0}\big(\bm{z};\overline{W}(t)\big)\big\|_{2}\Big],
\end{align}
where the inequality is obtained from (iv) of Fact \ref{thefactinpaper}.
By taking expectation over the random initialization on both sides of (\ref{gradientdifference}) after squaring, we have
\begin{align}
&\mathbb{E}_{\mathrm{init}}\big[\|\bar{g}^{k}_{i}(t)-\bar{g}^{k}_{0,\mathrm{ave}}(t)\|^{2}_{2}\big]\notag\\
\leq&2(mN)^{1-2p}\mathbb{E}_{\mathrm{init}}\Big[\mathbb{E}_{\varsigma^{+}_{k}}\big[\big|\delta^{k}_{i}(t)-\delta^{k}_{0,\mathrm{ave}}(t)\big|\big]^{2}\Big]\notag\\
&+2\mathbb{E}_{\mathrm{init}}\Big[\mathbb{E}_{\varsigma^{+}_{k}}\big[\big|\delta^{k}_{0,\mathrm{ave}}(t)\big|\cdot\notag\\
&\big\|\nabla_{W_{i}}\widehat{Q}_{i,t}(\bm{z})-\nabla_{W'}\widehat{Q}_{0}\big(\bm{z};\overline{W}(t)\big)\big\|_{2}\big]^{2}\Big]\notag\\
\leq& \underbrace{2(mN)^{1-2p}\mathbb{E}_{\mathrm{init},\varsigma^{+}_{k}}\big[\big|\delta^{k}_{i}(t)-\delta^{k}_{0,\mathrm{ave}}(t)\big|^{2}\big]}_{\mathrm{(i)}}\notag\\
&+\underbrace{2\mathbb{E}_{\mathrm{init}}\Big[\mathbb{E}_{\varsigma_{k}^{+}}\big[\big|\delta^{k}_{0,\mathrm{ave}}(t)\big|^{2}\big]}_{\mathrm{(ii)}}\cdot\notag\\
&\underbrace{\mathbb{E}_{\varsigma_{k}^{+}}\big[\big\|\nabla_{W_{i}}\widehat{Q}_{i,t}(\bm{z})-\nabla_{W'}\widehat{Q}_{0}\big(\bm{z};\overline{W}(t)\big)\big\|^{2}_{2}\big]\Big]}_{\mathrm{(ii)}},\label{expectationofgradientdifference}
\end{align}
where the last inequality follows from the H{\"o}lder's inequality.
\par
For the $\mathrm{(i)}$-th term in (\ref{expectationofgradientdifference}), we have
\begin{align}
&\mathbb{E}_{\mathrm{init},\varsigma_{k}^{+}}\big[\big|\delta^{k}_{i}(t)-\delta^{k}_{0,\mathrm{ave}}(t)\big|^{2}\big]\notag\\
=&\mathbb{E}_{\mathrm{init},\varsigma_{k}^{+}}\Big[\Big|\big(\widehat{Q}_{i,t}(\bm{z})-(1-\gamma)r_{i}(\bm{z})-\gamma\widehat{Q}_{i,t}(\bm{z}')\big)\notag\\
&-\Big(\widehat{Q}_{0}\big(\bm{z};\overline{W}(t)\big)-(1-\gamma)\bar{r}(\bm{z})-\gamma\widehat{Q}_{0}\big(\bm{z}';\overline{W}(t)\big)\Big)\Big|^{2}\Big]\notag\\
\leq& \mathbb{E}_{\mathrm{init},\varsigma_{k}^{+}}\Big[3\Big(\widehat{Q}_{i,t}(\bm{z})-\widehat{Q}_{0}\big(\bm{z};\overline{W}(t)\big)\Big)^{2}\notag\\
&+3\Big(\widehat{Q}_{i,t}(\bm{z}')-\widehat{Q}_{0}\big(\bm{z}';\overline{W}(t)\big)\Big)^{2}\notag\\
&+3(1-\gamma)^{2}\big(\bar{r}(\bm{z})-r_{i}(\bm{z})\big)^{2}\Big]\notag\\
\leq& 12\mathbb{E}_{\mathrm{init},\varsigma_{k}}\Big[\Big(\widehat{Q}_{i,t}(\bm{z})-\widehat{Q}_{0}\big(\bm{z};W_{i}(t)\big)\Big)^{2}\Big]\notag\\
&+12\mathbb{E}_{\mathrm{init},\varsigma_{k}}\Big[\Big(\widehat{Q}_{0}\big(\bm{z};W_{i}(t)\big)-\widehat{Q}_{0}\big(\bm{z};\overline{W}(t)\big)\Big)^{2}\Big]+12R^{2}_{0}\notag\\
\leq&48c_{1}B^{3}(mN)^{\frac{1}{2}-2p}+48B^{2}(mN)^{1-2p}+12R^{2}_{0},\label{firstterminexpectationofgradientdifference}
\end{align}
where the last inequality follows from (iii) of Lemma~\ref{Lemma1to3}, (iii) of Fact~\ref{thefactinpaper}, and Assumption~\ref{theassumptionofreward}.
\par
For the $\mathrm{(ii)}$-th term in (\ref{expectationofgradientdifference}), we have
\begin{align}\label{secondterminexpectationofgradientdifference}
&\mathbb{E}_{\varsigma_{k}^{+}}\big[\big|\delta^{k}_{0,\mathrm{ave}}(t)\big|^{2}\big]\notag\\
\leq& 3\mathbb{E}_{\varsigma_{k}^{+}}\Big[\Big(\widehat{Q}_{0}\big(\bm{z};\overline{W}(t)\big)^{2}+\gamma^{2}\widehat{Q}_{0}\big(\bm{z}';\overline{W}(t)\big)^{2}\notag\\
&+(1-\gamma)^{2}R^{2}_{0}\Big)\Big]\notag\\
\leq&12\mathbb{E}_{\varsigma_{k}}\big[\widehat{Q}_{0}(\bm{z})^{2}\big]+12B^{2}(mN)^{1-2p}+3R^{2}_{0},
\end{align}
where the last inequality comes from (vi) of Fact~\ref{thefactinpaper}.
Beside, we have
\begin{align}
&\mathbb{E}_{\varsigma_{k}^{+}}\Big[\big\|\nabla_{W_{i}}\widehat{Q}_{i,t}(\bm{z})-\nabla_{W'}\widehat{Q}_{0}\big(\bm{z};\overline{W}(t)\big)\big\|^{2}_{2}\Big]\notag\\
\leq&\frac{1}{(mN)^{2p}}\mathbb{E}_{\varsigma_{k}}\Big[\sum_{r=1}^{Nm}\big(\mathds{1}\{W^{\top}_{i,r}(t)x_{\lfloor(r-1)/m\rfloor+1}>0\}\notag\\
&-\mathds{1}\{W'^{\top}_{r}(0)x_{\lfloor(r-1)/m\rfloor+1}>0\}\big)^{2}\big\|x_{\lfloor(r-1)/m\rfloor+1}\big\|^{2}_{2}\Big]\notag\\
=&\frac{1}{(mN)^{2p}}\mathbb{E}_{\varsigma_{k}}\Big[\sum_{r=1}^{mN}\Big|\mathds{1}\{W^{\top}_{i,r}(t)x_{\lfloor(r-1)/m\rfloor+1}>0\}\notag\\
&-\mathds{1}\{W'^{\top}_{r}(0)x_{\lfloor(r-1)/m\rfloor+1}>0\}\Big|\Big]\label{eq:distribution}\\
\leq&  \frac{1}{(mN)^{2p}}\mathbb{E}_{\varsigma_{k}}\Big[\sum_{r=1}^{mN}\mathds{1}\{|W'^{\top}_{r}(0)x_{\lfloor(r-1)/m\rfloor+1}|\notag\\
&\leq\|W_{i,r}(t)-W'_{r}(0)\|_{2}\}\Big],\label{thirdterminexpectationofgradientdifference}
\end{align}
where the first inequality follows from the fact that $b_{r}\sim\mathrm{Unif}(\{-1,1\})$, (\ref{eq:distribution}) uses  $\|x_{i}\|_{2}=1$, and the last inequality comes from the fact that
\begin{align}
&\mathds{1}\{W^{\top}_{i,r}(t)x_{\lfloor(r-1)/m\rfloor+1}>0\}\notag\\
&\neq\mathds{1}\{W'^{\top}_{r}(0)x_{\lfloor(r-1)/m\rfloor+1}>0\}\notag\\
\Rightarrow&|W'^{\top}_{r}(0)x_{\lfloor(r-1)/m\rfloor+1}|\notag\\
&\leq\big|\big(W^{\top}_{i,r}(t)-W'^{\top}_{r}(0)\big)x_{\lfloor(r-1)/m\rfloor+1}\big|\notag\\
&\leq\|W'_{i,r}(t)-W'_{r}(0)\|_{2}.\notag
\end{align}
Combining (\ref{secondterminexpectationofgradientdifference}) and (\ref{thirdterminexpectationofgradientdifference}), we have
\begin{align}
&2\mathbb{E}_{\mathrm{init}}\Big[\mathbb{E}_{\varsigma_{t}^{+}}\big[\big|\delta^{k}_{0,\mathrm{ave}}(t)\big|^{2}\big]\cdot\notag\\
&\;\;\;\;\;\;\;\;\;\;\;\mathbb{E}_{\varsigma_{k}^{+}}\big[\big\|\nabla_{W_{i}}\widehat{Q}_{i,k}(\bm{z})-\nabla_{W'}\widehat{Q}_{0}\big(\bm{z};\overline{W}(t)\big)\big\|^{2}_{2}\big]\Big]\notag\\
\leq&(24c_{2}B+24c_{1}B^{3})(mN)^{\frac{3}{2}-4p}+6c_{1}R^{2}_{0}B(mN)^{\frac{1}{2}-2p},\label{midinequality}
\end{align}
where the inequality uses (i) and (ii) of Lemma~\ref{Lemma1to3}.
\par
Substituting (\ref{firstterminexpectationofgradientdifference}) and (\ref{midinequality}) into (\ref{expectationofgradientdifference}), we have
\begin{align}
&\mathbb{E}_{\mathrm{init}}\big[\|\bar{g}^{k}_{i}(t)-\bar{g}^{k}_{0,\mathrm{ave}}(t)\|^{2}_{2}\big]\notag\\
\leq& 24R^{2}_{0}(mN)^{1-2p}+6c_{1}R_{0}^{2}B(mN)^{\frac{1}{2}-2p}\notag\\
&+(120c_{1}B^{3}+24c_{2}B)(mN)^{\frac{3}{2}-4p}+96B^{2}(mN)^{2-4p}\notag\\
=&\mathcal{O}\big(B^{3}(mN)^{1-2p}\big),\label{eq:range}
\end{align}
which complets the proof of (ii) of Lemma~\ref{Lemma4-6}.
\par
\underline{For (iii) of Lemma \ref{Lemma4-6}:}
According to the definitions of $\bar{g}^{k}_{\mathrm{ave}}(t)$ in (\ref{thedefinitionofstochastic-4}), $\bar{g}^{k*}_{0,\mathrm{ave}}$ in (\ref{thedefinitionofstochastic2-1}), and $\bar{g}^{k}_{0,\mathrm{ave}}(t)$ in (\ref{thedefinitionofstochastic2-3}), we have
\begin{align}
&\mathbb{E}_{\mathrm{init}}\big[\|\bar{g}^{k}_{\mathrm{ave}}(t)\|^{2}_{2}\big]\notag\\
=&\mathbb{E}_{\mathrm{init}}\big[\|\bar{g}^{k}_{\mathrm{ave}}(t)-\bar{g}^{k}_{0,\mathrm{ave}}(t)+\bar{g}^{k}_{0,\mathrm{ave}}(t)-\bar{g}^{k*}_{0,\mathrm{ave}}+\bar{g}^{k*}_{0,\mathrm{ave}}\|^{2}_{2}\big]\notag\\
\leq& 3\mathbb{E}_{\mathrm{init}}\big[\|\bar{g}^{k}_{\mathrm{ave}}(t)-\bar{g}^{k}_{0,\mathrm{ave}}(t)\|^{2}_{2}\big]\notag\\
&+3\mathbb{E}_{\mathrm{init}}\big[\|\bar{g}^{k}_{0,\mathrm{ave}}(t)-\bar{g}^{k*}_{0,\mathrm{ave}}\|^{2}_{2}\big]+3\mathbb{E}_{\mathrm{init}}\big[\|\bar{g}^{k*}_{0,\mathrm{ave}}\|^{2}_{2}\big]\notag\\
\leq&\underbrace{\frac{3}{N}\mathbb{E}_{\mathrm{init}}\Big[\sum_{i=1}^{N}\|\bar{g}^{k}_{i}(t)-\bar{g}^{k}_{0,\mathrm{ave}}(t)\|^{2}\Big]}_{\mathrm{(i)}}\notag\\
&+\underbrace{3\mathbb{E}_{\mathrm{init}}\big[\|\bar{g}^{k}_{0,\mathrm{ave}}(t)\!-\!\bar{g}^{k*}_{0,\mathrm{ave}}\|^{2}_{2}\big]}_{\mathrm{(ii)}}+\underbrace{3\mathbb{E}_{\mathrm{init}}\big[\|\bar{g}^{k*}_{0,\mathrm{ave}}\|^{2}_{2}\big]}_{\mathrm{(iii)}},\label{firstinequalitythelemmaofbargave}
\end{align}
where the last inequality uses (\ref{thedefinitionofstochastic2-3}).
\par
For the $\mathrm{(i)}$-th term in (\ref{firstinequalitythelemmaofbargave}), we can use (ii) of
Lemma~\ref{Lemma4-6} and have
\begin{align}\label{firstinequalitythelemmaofbargave1-1}
\frac{1}{N}\mathbb{E}_{\mathrm{init}}\Big[\sum_{i=1}^{N}\|\bar{g}^{k}_{i}(t)-\bar{g}^{k}_{0,\mathrm{ave}}(t)\|^{2}\Big]=\mathcal{O}\big(B^{3}(mN)^{1-2p}\big).
\end{align}
\par
For the $\mathrm{(ii)}$-th term in (\ref{firstinequalitythelemmaofbargave}), we have
\begin{align}\label{secondinequalitythelemmaofbargave}
&\mathbb{E}_{\mathrm{init}}\big[\|\bar{g}^{k}_{0,\mathrm{ave}}(t)-\bar{g}^{k*}_{0,\mathrm{ave}}\|^{2}_{2}\big]\notag\\
=&\mathbb{E}_{\mathrm{init}}\Big[\Big\|\mathbb{E}_{\varsigma^{+}_{k}}\Big[\big(\delta^{k}_{0,\mathrm{ave}}(t)-\delta^{k*}_{0,\mathrm{ave}}\big)\nabla_{W'}\widehat{Q}_{0}\big(\bm{z};W'(0)\big)\Big]\Big\|^{2}_{2}\Big]\notag\\
\leq&(mN)^{1-2p} \mathbb{E}_{\mathrm{init},\varsigma_{k}^{+}}\Big[\Big(\widehat{Q}_{0}\big(\bm{z};\overline{W}(t)\big)-\widehat{Q}_{0}(\bm{z};W^{*}_{k})\notag\\
&-\gamma\widehat{Q}_{0}\big(\bm{z}';\overline{W}(t)\big)+\gamma\widehat{Q}_{0}(\bm{z}';W^{*}_{k})\Big)^{2}\Big]\notag\\
\leq&2(mN)^{1-2p}\mathbb{E}_{\mathrm{init},\varsigma_{k}^{+}}\Big[\Big(\widehat{Q}_{0}\big(\bm{z};\overline{W}(t)\big)-\widehat{Q}_{0}(\bm{z};W^{*}_{k})\Big)^{2}\notag\\
&+\gamma^{2}\Big(\widehat{Q}_{0}\big(\bm{z}';\overline{W}(t)\big)-\widehat{Q}_{0}(\bm{z}';W^{*}_{k})\Big)^{2}\Big]\notag\\
\leq& 4(mN)^{1-2p}\mathbb{E}_{\mathrm{init},\varsigma_{k}}\Big[\Big(\widehat{Q}_{0}\big(\bm{z};\overline{W}(t)\big)-\widehat{Q}_{0}(\bm{z};W^{*}_{k})\Big)^{2}\Big],
\end{align}
where the first inequality is obtained from (iii) of Fact \ref{thefactinpaper} and the last inequality comes from that $\bm{z}$ and $\bm{z}'$ follow the same stationary distribution.
\par
For the $\mathrm{(iii)}$-th term in (\ref{firstinequalitythelemmaofbargave}), according to the definition of $\bar{g}^{k*}_{0,\mathrm{ave}}$ in (\ref{thedefinitionofstochastic2-1}), we have
\begin{align}
&\mathbb{E}_{\mathrm{init}}\big[\|\bar{g}^{k*}_{0,\mathrm{ave}}\|^{2}_{2}\big]\notag\\
=&\mathbb{E}_{\mathrm{init}}\big[\|\mathbb{E}_{\varsigma_{k}^{+}}[\delta^{k*}_{0,\mathrm{ave}}\nabla_{W'}\widehat{Q}_{0}(\bm{z};W^{*}_{k})]\|^{2}_{2}\big]\notag\\
\leq&(mN)^{1-2p}\mathbb{E}_{\mathrm{init},\varsigma_{k}^{+}}\big[\big(\widehat{Q}_{0}(\bm{z};W^{*}_{k})-(1-\gamma)\bar{r}(\bm{z})\notag\\
&-\gamma\widehat{Q}_{0}(\bm{z}';W^{*}_{k})\big)^{2}\big]\label{2thirdinequalitythelemmaofbargave}\\
\leq&3(mN)^{1-2p}\mathbb{E}_{\mathrm{init},\varsigma_{k}^{+}}\big[\widehat{Q}_{0}(\bm{z};W^{*}_{k})^{2}+\widehat{Q}_{0}(\bm{z}';W^{*}_{k})^{2}+R^{2}_{0}\big]\notag\\
\leq& 12(mN)^{1-2p}\mathbb{E}_{\mathrm{init},\varsigma_{k}}[\widehat{Q}_{0}(\bm{z})^{2}]+12B^{2}(mN)^{1-4p}\notag\\
&+3(mN)^{1-2p}R^{2}_{0}\label{4thirdinequalitythelemmaofbargave}\\
\leq& (12d_{1}+12B^{2})(mN)^{2-4p}+3(mN)^{1-2p}R^{2}_{0},\label{thirdinequalitythelemmaofbargave}
\end{align}
where (\ref{2thirdinequalitythelemmaofbargave}) follows from the (iii) of Fact~\ref{thefactinpaper}, (\ref{4thirdinequalitythelemmaofbargave}) and (\ref{thirdinequalitythelemmaofbargave}) are obtained from (vi) and (ii) of Fact~\ref{thefactinpaper}, respectively.
\par
Substituting (\ref{firstinequalitythelemmaofbargave1-1}), (\ref{secondinequalitythelemmaofbargave}), and (\ref{thirdinequalitythelemmaofbargave}) into (\ref{firstinequalitythelemmaofbargave}), we have
\begin{align}
&\mathbb{E}_{\mathrm{init}}\big[\|\bar{g}^{k}_{\mathrm{ave}}(t)\|^{2}_{2}\big]\notag\\
\leq&12\mathbb{E}_{\mathrm{init},\varsigma_{k}}\Big[\Big(\widehat{Q}_{0}\big(\bm{z};\overline{W}(t)\big)-\widehat{Q}_{0}(\bm{z};W^{*}_{k})\Big)^{2}\Big]\notag\\
&+\mathcal{O}\big(B^{3}(mN)^{1-2p}\big).
\end{align}
Thus, the proof of Lemma \ref{Lemma4-6} is completed.
\subsection{Proof of Lemma \ref{StochasticDescentLemma}}\label{theproofofStochasticDescentLemma}
By recalling the update of $W_{i}(t+1)$ in (\ref{thekeystepindistributedneuralTDalgorithm}),
we have
\begin{align}\label{expectationinequalityofStochasticDescentLemma}
&\mathbb{E}_{\mathrm{init}}[\|\overline{W}(t+1)-W^{*}_{k}\|^{2}_{2}]\notag\\
=&\mathbb{E}_{\mathrm{init}}\big[\|\overline{W}(t)-W^{*}_{k}\|^{2}_{2}+\eta_{c,t}^{2}\|g^{k}_{\mathrm{ave}}(t)\|^{2}_{2}+\|e^{k}_{\mathrm{ave}}(t)\|^{2}_{2}\notag\\
&-2\eta_{c,t}\big(\overline{W}(t)-W^{*}_{k}\big)^{\top}g^{k}_{\mathrm{ave}}(t)\notag\\
&-2\big(\overline{W}(t)-W^{*}_{k}\big)^{\top}e^{k}_{\mathrm{ave}}(t)+2\eta_{c,t}g^{k}_{\mathrm{ave}}(t)^{\top}e^{k}_{\mathrm{ave}}(t)\big]\notag\\
=&\mathbb{E}_{\mathrm{init}}[\|\overline{W}(t)-W^{*}_{k}\|^{2}_{2}]+\underbrace{\eta_{c,t}^{2}\mathbb{E}_{\mathrm{init}}[\|g^{k}_{\mathrm{ave}}(t)\|^{2}_{2}]}_{\mathrm{(i)}}\notag\\
&+\underbrace{\mathbb{E}_{\mathrm{init}}[\|e^{k}_{\mathrm{ave}}(t)\|^{2}_{2}]}_{\mathrm{(ii)}}\notag\\
&\underbrace{-2\eta_{c,t}\mathbb{E}_{\mathrm{init}}\big[\big(\overline{W}(t)-W^{*}_{k}\big)^{\top}g^{k}_{\mathrm{ave}}(t)\big]}_{\mathrm{(iii)}}\notag\\
&\underbrace{-2\mathbb{E}_{\mathrm{init}}\big[\big(\overline{W}(t)-W^{*}_{k}\big)^{\top}e^{k}_{\mathrm{ave}}(t)\big]}_{\mathrm{(iv)}}\notag\\
&+\underbrace{2\eta_{c,t}\mathbb{E}_{\mathrm{init}}[g^{k}_{\mathrm{ave}}(t)^{\top}e^{k}_{\mathrm{ave}}(t)]}_{\mathrm{(v)}}.
\end{align}
\par
For the $\mathrm{(i)}$-th term 
in (\ref{expectationinequalityofStochasticDescentLemma}), according to the definition of $g^{k}_{\mathrm{ave}}(t)$ in (\ref{thedefinitionofstochastic-2}), we have
\begin{align}
&\eta^{2}_{c,t}\mathbb{E}_{\mathrm{init}}[\|g^{k}_{\mathrm{ave}}(t)\|^{2}_{2}]\notag\\
\leq&2\eta^{2}_{c,t}\mathbb{E}_{\mathrm{init}}[\|g^{k}_{\mathrm{ave}}(t)-\bar{g}^{k}_{\mathrm{ave}}(t)\|^{2}_{2}]+2\eta^{2}_{c,t}\mathbb{E}_{\mathrm{init}}[\|\bar{g}^{k}_{\mathrm{ave}}(t)\|^{2}_{2}]\notag\\
\leq&\frac{2\eta^{2}_{c,t}}{N}\sum_{i=1}^{N}\mathbb{E}_{\mathrm{init}}[\|g^{k}_{i}(t)-\bar{g}^{k}_{i}(t)\|^{2}_{2}]+2\eta^{2}_{c,t}\mathbb{E}_{\mathrm{init}}[\|\bar{g}^{k}_{\mathrm{ave}}(t)\|^{2}_{2}]\notag\\
\leq&24\eta^{2}_{c,t}\mathbb{E}_{\mathrm{init},\varsigma_{k}}\Big[\Big(\widehat{Q}_{0}\big(\bm{z};\overline{W}(t)\big)-\widehat{Q}_{0}(\bm{z};W^{*}_{k})\Big)^{2}\Big]\notag\\
&+\eta^{2}_{c,t}\mathcal{O}\big(B^{3}(mN)^{1-2p}\big),\label{newconclusioninequality}
\end{align}
where the last inequality uses the (i) and (iii) of Lemma~\ref{Lemma4-6}.
\par
For the $\mathrm{(ii)}$-th term in (\ref{expectationinequalityofStochasticDescentLemma}), we use the properties of projection operation and have
\begin{align}\notag
\|e^{k}_{\mathrm{ave}}(t)\|^{2}_{2}\leq\frac{1}{N}\sum_{i=1}^{N}\|e^{k}_{i}(t)\|^{2}_{2}\leq\frac{1}{N}\sum_{i=1}^{N}\eta_{c,t}^{2}\|g^{k}_{i}(t)\|^{2}_{2}.
\end{align}
By using (\ref{theboundofg}), we further have
\begin{align}\label{expectationresultofiithinequalityofStochasticDescentLemma}
&\mathbb{E}_{\mathrm{init}}[\|e^{k}_{\mathrm{ave}}(t)\|^{2}_{2}]\notag\\
\leq&3\eta_{c,t}^{2}R^{2}_{0}(mN)^{1-2p}+\eta_{c,t}^{2}(12d_{1}+12B^{2})(mN)^{2-4p}\notag\\
=&\eta^{2}_{c,t}\mathcal{O}\big(B^{2}(mN)^{1-2p}\big).
\end{align}
\par
For the $\mathrm{(iii)}$-th term in (\ref{expectationinequalityofStochasticDescentLemma}),
we have
\begin{align}
&-2\eta_{c,t}\big(\overline{W}(t)-W^{*}_{k}\big)^{\top}g^{k}_{\mathrm{ave}}(t)\notag\\
\leq&-2\eta_{c,t}\big(\overline{W}(t)-W^{*}_{k}\big)^{\top}\big(g^{k}_{\mathrm{ave}}(t)-\bar{g}^{k*}_{0,\mathrm{ave}}\big)\label{eq:explain}\\
=&-2\eta_{c,t}\big(\overline{W}(t)-W^{*}_{k}\big)^{\top}\big(g^{k}_{\mathrm{ave}}(t)-\bar{g}^{k}_{\mathrm{ave}}(t)+\bar{g}^{k}_{\mathrm{ave}}(t)\notag\\
&-\bar{g}^{k}_{0,\mathrm{ave}}(t)\big)-2\eta_{c,t}\big(\overline{W}(t)-W^{*}_{k}\big)^{\top}\big(\bar{g}^{k}_{0,\mathrm{ave}}(t)-\bar{g}^{k*}_{0,\mathrm{ave}}\big), \label{iiithterminequationofPopulationDescentLemma}
\end{align}
where the first inequality follows from (\ref{theinequalityofstationarypoint}).
Since $\overline{W}(t),W^{*}_{k}\in S^{W}_{B}$, we have
\begin{align}\label{1thiiithterminequationofPopulationDescentLemma}
&\big(\overline{W}(t)-W^{*}_{k}\big)^{\top}\big(g^{k}_{\mathrm{ave}}(t)-\bar{g}^{k}_{\mathrm{ave}}(t)+\bar{g}^{k}_{\mathrm{ave}}(t)-\bar{g}^{k}_{0,\mathrm{ave}}(t)\big)\notag\\
\geq&-2B\big(\|g^{k}_{\mathrm{ave}}(t)-\bar{g}^{k}_{\mathrm{ave}}(t)\|_{2}+\|\bar{g}^{k}_{\mathrm{ave}}(t)-\bar{g}^{k}_{0,\mathrm{ave}}(t)\|_{2}\big).
\end{align}
According to the definitions of $\bar{g}^{k}_{0,\mathrm{ave}}(t)$ in (\ref{thedefinitionofstochastic2-3}) and $\bar{g}^{k*}_{0,\mathrm{ave}}$ in (\ref{thedefinitionofstochastic2-1}), we have
\begin{align}\label{2thiiithterminequationofPopulationDescentLemma}
&\big(\overline{W}(t)-W^{*}_{k}\big)^{\top}\big(\bar{g}^{k}_{0,\mathrm{ave}}(t)-\bar{g}^{k*}_{0,\mathrm{ave}}\big)\notag\\
=&\mathbb{E}_{\varsigma^{+}_{k}}\Big[\Big(\delta^{k}_{0,\mathrm{ave}}(t)-\delta^{k*}_{0,\mathrm{ave}}\Big)\cdot\notag\\
&\Big(\nabla_{W'}\widehat{Q}_{0}\big(\bm{z};W'(0)\big)^{\top}\big(\overline{W}(t)-W^{*}_{k}\big)\Big)\Big]\notag\\
=&\mathbb{E}_{\varsigma^{+}_{k}}\Big[\Big(\widehat{Q}_{0}\big(\bm{z},\overline{W}(t)\big)-\widehat{Q}_{0}(\bm{z};W^{*}_{k})\Big)^{2}\notag\\
&-\gamma\Big(\widehat{Q}_{0}\big(\bm{z}',\overline{W}(t)\big)-\widehat{Q}_{0}(\bm{z}';W^{*}_{k})\Big)\cdot\notag\\
&\Big(\widehat{Q}_{0}\big(\bm{z},\overline{W}(t)\big)-\widehat{Q}_{0}(\bm{z};W^{*}_{k})\Big)\Big]\notag\\
\geq&(1-\gamma)\mathbb{E}_{\varsigma_{k}}\Big[\Big(\widehat{Q}_{0}\big(\bm{z},\overline{W}(t)\big)-\widehat{Q}_{0}(\bm{z};W^{*}_{k})\Big)^{2}\Big],
\end{align}
where the last inequality follows from the fact that $\bm{z}$ and $\bm{z}'$ have the same stationary station-action distribution over $\varsigma_{k}$.
Substituting (\ref{1thiiithterminequationofPopulationDescentLemma}) and (\ref{2thiiithterminequationofPopulationDescentLemma}) into
(\ref{iiithterminequationofPopulationDescentLemma}), we have
\begin{align}\label{resultofiiithterminequationofPopulationDescentLemma}
&-2\eta_{c,t}\big(\overline{W}(t)-W^{*}_{k}\big)^{\top}g^{k}_{\mathrm{ave}}(t)\notag\\
\leq&-2(1-\gamma)\eta_{c,t}\mathbb{E}_{\varsigma_{k}}\Big[\Big(\widehat{Q}_{0}(\bm{z},\overline{W}(t)\big)-\widehat{Q}_{0}(\bm{z};W^{*}_{k})\Big)^{2}\Big]\notag\\
&+4\eta_{c,t} B\big(\|g^{k}_{\mathrm{ave}}(t)-\bar{g}^{k}_{\mathrm{ave}}(t)\|_{2}+\|\bar{g}^{k}_{\mathrm{ave}}(t)-\bar{g}^{k}_{0,\mathrm{ave}}(t)\|_{2}\big).
\end{align}
Taking expectation on both sides of (\ref{resultofiiithterminequationofPopulationDescentLemma}), we have
\begin{align}\label{expectationresultofiiithterminequationofPopulationDescentLemma}
&-2\eta_{c,t}\mathbb{E}_{\mathrm{init}}\big[\big(\overline{W}(t)-W^{*}_{k}\big)^{\top}g^{k}_{\mathrm{ave}}(t)\big]\notag\\
\leq&-2(1-\gamma)\eta_{c,t}\mathbb{E}_{\mathrm{init},\varsigma_{k}}\Big[\Big(\widehat{Q}_{0}(\bm{z},\overline{W}(t)\big)-\widehat{Q}_{0}(\bm{z};W^{*}_{k})\Big)^{2}\Big]\notag\\
&+4\eta_{c,t} B\mathbb{E}_{\mathrm{init}}[\|g^{k}_{\mathrm{ave}}(t)-\bar{g}^{k}_{\mathrm{ave}}(t)\|_{2}]\notag\\
&+4\eta_{c,t} B\mathbb{E}_{\mathrm{init}}[\|\bar{g}^{k}_{\mathrm{ave}}(t)-\bar{g}^{k}_{0,\mathrm{ave}}(t)\|_{2}]\notag\\
\leq&-2\eta_{c,t}(1-\gamma)\mathbb{E}_{\mathrm{init},\varsigma_{k}}\Big[\Big(\widehat{Q}_{0}\big(\bm{z},\overline{W}(t)\big)-\widehat{Q}_{0}(\bm{z};W^{*}_{k})\Big)^{2}\Big]\notag\\
&+\eta_{c,t}\mathcal{O}\big(B^{\frac{5}{2}}(mN)^{\frac{1}{2}-p}\big),
\end{align}
where the last inequality follows from (i) and (ii) of Lemma~\ref{Lemma4-6}.
\par
For the $\mathrm{(iv)}$-th term
in (\ref{expectationinequalityofStochasticDescentLemma}), we use (\ref{expectationresultofiithinequalityofStochasticDescentLemma}) and have
\begin{align}\label{expectationresultofivthinequalityofStochasticDescentLemma}
-2\mathbb{E}_{\mathrm{init}}\big[\big(\overline{W}(t)-W^{*}_{k}\big)^{\top}e^{k}_{\mathrm{ave}}(t)\big]\leq&4B\mathbb{E}_{\mathrm{init}}[\|e^{k}_{\mathrm{ave}}(t)\|_{2}]\notag\\
=&\eta_{c,t}\mathcal{O}\big(B^{2}(mN)^{\frac{1}{2}-p}\big).
\end{align}
\par
For the $\mathrm{(v)}$-th term in (\ref{expectationinequalityofStochasticDescentLemma}), we use Cauchy-Schwartz inequality and have
\begin{align}\label{expectationresultofvthinequalityofStochasticDescentLemma}
&2\eta_{c,t}\mathbb{E}_{\mathrm{init}}[g^{k}_{\mathrm{ave}}(t)^{\top}e^{k}_{\mathrm{ave}}(t)]\notag\\
\leq&2\eta_{c,t}\mathbb{E}_{\mathrm{init}}[\|g^{k}_{\mathrm{ave}}(t)\|_{2}\|e^{k}_{\mathrm{ave}}(t)\|_{2}]\notag\\
\leq&2\eta_{c,t}\mathbb{E}_{\mathrm{init}}[\|g^{k}_{\mathrm{ave}}(t)\|^{2}_{2}]^{\frac{1}{2}}\mathbb{E}_{\mathrm{init}}[\|e^{k}_{\mathrm{ave}}(t)\|^{2}_{2}]^{\frac{1}{2}}\notag\\
=&\eta^{2}_{c,t}\mathcal{O}\big(B^{2}(mN)^{1-2p}\big),
\end{align}
where the last inequality follows from (\ref{theboundofg}) and (\ref{theprojectionpropertyofe}).
Substituting (\ref{newconclusioninequality}), (\ref{expectationresultofiithinequalityofStochasticDescentLemma}), (\ref{expectationresultofiiithterminequationofPopulationDescentLemma}), (\ref{expectationresultofivthinequalityofStochasticDescentLemma}), and (\ref{expectationresultofvthinequalityofStochasticDescentLemma}) into (\ref{expectationinequalityofStochasticDescentLemma}), we can complete the proof of Lemma~\ref{StochasticDescentLemma}.
\subsection{Proof of Lemma~\ref{thefirstlemmaoftheequalityofLipschitzbefore}}\label{theproofofthefirstlemmaoftheequalityofLipschitzbefore}
Recall that $\xi_{i,k}=\widehat{\nabla}_{\theta_{i}}J(\bm{\pi}_{\bm{\theta}(k)})-\mathbb{E}_{\sigma_{k}}[\widehat{\nabla}_{\theta_{i}}J(\bm{\pi}_{\bm{\theta}(k)})]$,
we have
\begin{align}\label{thei-thinequalityofthelemmaoferrorbetweennablaandwidenabla}
&\mathbb{E}[\|\nabla_{\theta_{i}}J(\bm{\pi}_{\bm{\theta}(k)})-\widehat{\nabla}_{\theta_{i}}J(\bm{\pi}_{\bm{\theta}(k)})\|^{2}_{2}]\notag\\
\leq&2\mathbb{E}[\|\xi_{i,k}\|^{2}_{2}]+2\mathbb{E}[\|\nabla_{\theta_{i}}J(\bm{\pi}_{\bm{\theta}(k)})-\mathbb{E}_{\sigma_{k}}[\widehat{\nabla}_{\theta_{i}}J(\bm{\pi}_{\bm{\theta}(k)})]\|^{2}_{2}].
\end{align}
Based on (\ref{theresultofpolicygradientforagenti}) in Proposition~\ref{ThepolicygradienttheoremforMARL}, we have
\begin{align}\label{theii-thinequalityofthelemmaoferrorbetweennablaandwidenabla}
&\|\nabla_{\theta_{i}}J(\bm{\pi}_{\bm{\theta}(k)})-\mathbb{E}_{\sigma_{k}}[\widehat{\nabla}_{\theta_{i}}J(\bm{\pi}_{\bm{\theta}(k)})]\|_{2}\notag\\
=&\big\|\mathbb{E}_{\sigma_{k}}\big[\overline{\psi}_{i}\big(\bm{s},a_{i};\theta_{i}(k)\big)\big(Q^{\bm{\pi}_{\bm{\theta}(k)}}(\bm{z})-\widehat{Q}^{\bm{\pi}_{\bm{\theta}(k)}}_{i,\mathrm{out}}(\bm{z})\big)\big]\big\|_{2}\notag\\
\leq&\mathbb{E}_{\sigma_{k}}\big[\big\|\overline{\psi}_{i}\big(\bm{s},a_{i};\theta_{i}(k)\big)\big\|_{2}\big|Q^{\bm{\pi}_{\bm{\theta}(k)}}(\bm{z})-\widehat{Q}^{\bm{\pi}_{\bm{\theta}(k)}}_{i,\mathrm{out}}(\bm{z})\big|\big]\notag\\
\leq&2(mN)^{\frac{1}{2}-p}\mathbb{E}_{\sigma_{k}}[|Q^{\bm{\pi}_{\bm{\theta}(k)}}(\bm{z})-\widehat{Q}^{\bm{\pi}_{\bm{\theta}(k)}}_{i,\mathrm{out}}(\bm{z})|]\notag\\
\leq&2\kappa_{a}(mN)^{\frac{1}{2}-p}\|Q^{\bm{\pi}_{\bm{\theta}(k)}}(\bm{z})-\widehat{Q}^{\bm{\pi}_{\bm{\theta}(t)}}_{i,\mathrm{out}}(\bm{z})\|_{\varsigma_{k}},
\end{align}
where the first inequality comes from the Jensen's inequality, the second inequality holds from the fact that $\|\psi_{i,r}(\bm{s},a_{i};\theta_{i})\|_{2}\leq(mN)^{-p}$ in (\ref{thenormofpsi}), and the last inequality achieves by Assumption~\ref{RegularityConditiononsigmaandvarsigma}.
Substituting (\ref{theii-thinequalityofthelemmaoferrorbetweennablaandwidenabla}) into (\ref{thei-thinequalityofthelemmaoferrorbetweennablaandwidenabla}), we have
\begin{align}
&\mathbb{E}[\|\nabla_{\theta_{i}}J(\bm{\pi}_{\bm{\theta}(k)})-\widehat{\nabla}_{\theta_{i}}J(\bm{\pi}_{\bm{\theta}(k)})\|^{2}_{2}]\notag\\
\leq&2\mathbb{E}[\|\xi_{i,k}\|^{2}_{2}]+8\kappa^{2}_{a}(mN)^{1-2p}\mathbb{E}[\|Q^{\bm{\pi}_{\bm{\theta}(k)}}(\bm{z})-\widehat{Q}^{\bm{\pi}_{\bm{\theta}(k)}}_{i,\mathrm{out}}(\bm{z})\|_{\varsigma_{k}}^{2}],\notag
\end{align}
which completes the proof of Lemma~\ref{thefirstlemmaoftheequalityofLipschitzbefore}.

\end{document}